\documentclass[journal,onecolumn]{IEEEtran}

\usepackage{amsmath,amssymb,amsthm,mathtools}
\usepackage{array}
\usepackage{bm}
\usepackage{booktabs}
\usepackage{cite}
\usepackage{microtype}
\usepackage{tikz}
\usepackage{color}
\usepackage{comment}
\usetikzlibrary{arrows.meta,calc,positioning,patterns}
\usepackage{url}
\usepackage{algorithm} \usepackage{algpseudocode}

\usepackage[colorlinks=true,citecolor=blue,linkcolor=blue,urlcolor=blue]{hyperref}

\newtheorem{theorem}{Theorem}
\newtheorem{lemma}[theorem]{Lemma}

\newtheorem{corollary}[theorem]{Corollary}

\newtheorem{definition}[theorem]{Definition}
\newtheorem{remark}[theorem]{Remark}

\newcommand{\Z}{\mathbb Z}
\newcommand{\F}{\mathbb F}
\newcommand{\cA}{\mathcal A}
\newcommand{\cB}{\mathcal B}
\newcommand{\cC}{\mathcal C}
\newcommand{\cD}{\mathcal D}
\newcommand{\cE}{\mathcal E}

\newcommand{\cG}{\mathcal G}

\newcommand{\cO}{\mathcal O}

\newcommand{\cS}{\mathcal S}
\newcommand{\cU}{\mathcal U}

\newcommand{\wt}{\operatorname{wt}}
\newcommand{\supp}{\operatorname{supp}}
\newcommand{\Stab}{\operatorname{Stab}}

\newcommand{\Per}{\operatorname{Per}}

\newcommand{\Acpc}{A^{\mathrm{CPC}}}

\begin{document}

\title{Single-Fragment Forensic Coding via Multidimensional Cyclically Permutable Codes}

\author{Yubo~Sun, Yijun~Zhang, and Gennian~Ge%
\thanks{This work was supported in part by the National Key Research and Development Program of China under Grant 2025YFC3409900, in part by the National Natural Science Foundation of China under Grant 12231014, in part by Beijing Scholars Program, and in part by the Postdoctoral Fellowship Program and China Postdoctoral Science Foundation under Grant Number BX2026006. \emph{(Corresponding author: Gennian Ge.)}}
\thanks{Yubo Sun ({\tt ybsun@cnu.edu.cn}) is with the Institute of Mathematics and Interdisciplinary Sciences, Xidian University, Xi'an 710126, China.}%
\thanks{Yijun Zhang ({\tt zyjshuxue@mail.ustc.edu.cn}) is with the School of Mathematical Sciences, University of Science and Technology of China, Hefei 230026, China.}%
\thanks{Gennian Ge ({\tt gnge@zju.edu.cn}) is with the School of Mathematical Sciences, Capital Normal University, Beijing 100048, China.}}

\maketitle

\begin{abstract}
The proliferation of 3D printing raises new security and forensic
challenges, including the risk of unauthorized fabrication of
untraceable firearms and other regulated items. To enable traceability,
we consider \emph{single-fragment forensic coding}, in which a unique
identifier is embedded into a printed object and must be recoverable from any
fragment containing an axis-parallel box of prescribed minimum volume and
coordinatewise thickness, even in the presence of substitution errors.
To address this problem, we introduce and study multidimensional
cyclically permutable codes (CPCs), for which every cyclic translate
uniquely determines both the original codeword and the applied
translation. By applying periodic lifting to a CPC base array, every complete period
contained in a fragment corresponds to an unknown cyclic translate of a
codeword, thereby reducing the forensic alignment problem to
multidimensional cyclic synchronization. 

We establish theoretical bounds and develop explicit constructions in both the noiseless and substitution-error settings.
For fixed dimension $d$ and alphabet size $q$, we obtain the optimal redundancy $d\log_q k+o(1)$ for noiseless $d$-dimensional CPCs of side length $k$. 
For fixed $t$, the optimal redundancy of $t$-substitution-correcting
$d$-dimensional CPCs lies between $(t+1)d\log_q k+O(1)$ and
$(2t+1)d\log_q k+O(1)$. For binary alphabets and $d\geq2$, an explicit robust
row-anchor construction achieves
$(t+1)d\log_2 k+O(\log\log k)$ redundancy, matching the optimal
leading term.
In the thick-fragment regime $h=cM^{1/d}$, where $M$ and $h$ lower-bound the volume and the side lengths of a box
contained in the fragment, periodic lifting yields
single-fragment forensic codes of rate $c^d-o(1)$. In particular, the rate approaches one when $c=1-o(1)$. 
\end{abstract}

\begin{IEEEkeywords}
3D printing, security, cyclically permutable codes, multidimensional arrays.
\end{IEEEkeywords}

\section{Introduction}

\IEEEPARstart{S}{ingle-fragment} forensic coding addresses the problem of
embedding identifying information into a physical object so that its origin
can be traced from a single surviving fragment
\cite{LiuRavivISIT2025,LiuRaviv2025}. A representative application is
additive manufacturing, where an identifier is embedded into a
three-dimensional printed component. An adversary may deliberately break the
object and reveal only one fragment while concealing the remainder. The
decoder must then recover the embedded identifier without knowing the
fragment's original position, possibly in the presence of symbol
substitutions caused by manufacturing defects, wear, or tampering.

This setting differs from conventional error correction in two fundamental
ways. First, the decoder observes only a local portion of the codeword rather than the entire encoded object. The identifying information must therefore be distributed so that every admissible fragment contains sufficient information for recovery. Second, the location of the observed fragment is unknown, so the decoder must resolve the alignment needed to interpret the observed symbols. We refer to these two requirements as the \emph{information-distribution problem} and the \emph{alignment problem},
respectively. In general, these two tasks are strongly coupled. Concentrating information in designated regions makes the construction vulnerable to fragments that avoid those regions, whereas distributing information throughout the object does not by itself resolve the unknown fragment position.

The central observation of this paper is that the two tasks can be separated.
Specifically, we take a base array and repeat it periodically in every coordinate direction throughout the ambient object. In the thick-fragment regime, every admissible fragment is guaranteed to contain one complete period of a cyclic copy of the base array. The remaining task is then to recover the base array from one translated copy that may also contain a small number of substitution errors.
This reduction leads naturally to \emph{multidimensional cyclically permutable codes} (CPCs), which are the main coding objects studied in this paper.

\subsection{Related Work}

Single-fragment forensic coding is related to other coding models for
fragmented data, including torn-paper coding
\cite{ShomoronyVahid2021,BarLevEtAl2023} and break-resilient coding
\cite{WangSimaRaviv2024,WangLiwangRaviv2026,Wang2026,LiuRaviv2026}.
These models usually recover a one-dimensional word from several unordered
fragments. Here the decoder receives only one fragment of a multidimensional
array, together with lower bounds on its size and thickness.

Liu and Raviv introduced the single-fragment forensic coding model for
matrices and three-dimensional arrays \cite{LiuRavivISIT2025,LiuRaviv2025}.
They employed discrepancy-theoretic point sets to distribute information and synchronization markers throughout the ambient object. Let $M$ and $h$ denote the prescribed lower bounds on the fragment size and on the fragment thickness in each coordinate direction, respectively. In two dimensions, they obtained an explicit construction of asymptotic rate $1/32$ in the thin-fragment regime $h=o(M^{1/2})$ and $h=\omega\big((\log_q M)^{1/2}\big)$,
with the rate approaching $1/16$ along a special sequence. In three dimensions, they obtained an explicit construction of asymptotic rate $1/1296$ when $h=o(M^{1/3})$ and $h=\omega\big((\log_q M)^{1/3}\big)$, with the rate approaching $1/216$ along a special sequence. They further established the existence of asymptotically rate-one codes
within suitable parameter regimes via the Lov\'asz Local Lemma, while
leaving the corresponding explicit constructions open in their work.

In this paper, we focus on the complementary thick-fragment regime $h=cM^{1/d}$ with $c\le 1$ and provide explicit constructions that achieve asymptotic rate $c^d$. In particular, when $c=1-o(1)$, these constructions attain rate one asymptotically, thereby resolving the construction challenge for this case.

\subsection{Reduction to Multidimensional Cyclic Synchronization via Periodic Lifting}

The starting point of our construction is a $d$-dimensional base array of side length $k$, which is repeated periodically in every coordinate direction throughout the ambient object. Suppose every admissible fragment is thick enough to contain a complete $d$-dimensional window of side length $k$. Any such window is then a cyclic translation of the base array. If at most $t$ symbols are corrupted, the decoder observes an array within Hamming distance $t$ of such a translated copy. Consequently, single-fragment decoding reduces to a local synchronization problem, namely recovering the base array from one corrupted translate.

In one dimension, this synchronization problem is captured by \emph{cyclically permutable codes} (CPCs). A length-$k$ CPC consists of codewords whose cyclic shifts are pairwise distinct, so that an unknown cyclic shift uniquely determines both the codeword and the shift, which is a stronger property than our decoding task requires, since we only need to recover the codeword. CPCs were introduced by Gilbert \cite{Gilbert1963} and have since been studied through constant-weight, algebraic, and self-synchronizing constructions \cite{NguyenGyoriMassey1992,ChenZhang2024,FujiwaraTonchev2013}.

In $d$ dimensions, the unknown shift is a vector in $\Z_k^d$. The array can
be shifted independently in each coordinate direction, so using one-dimensional
CPCs on rows or columns does not directly handle the full translation. This point was also noted by Liu and Raviv \cite{LiuRaviv2025}. Adamson, Deligkas, Gusev, and Potapov~\cite{AdamsonEtAl2021} studied
multidimensional necklaces and developed counting, generation, ranking,
and unranking algorithms. In the noiseless setting, a multidimensional CPC can be obtained by selecting representatives from atranslational necklaces. Our emphasis is on high-rate CPC constructions with direct synchronization
procedures and low-complexity encoders and decoders, including constructions
that correct substitutions. In this paper, we study these problems in arbitrary fixed dimension. By periodically repeating the codewords of a multidimensional CPC, we further obtain single-fragment forensic codes whenever every admissible fragment is guaranteed to contain a complete period.

\subsection{Main Contributions}
We study $d$-dimensional CPCs of side length $k$, with explicit
constructions for fixed $d\geq2$. Our main contributions are
summarized as follows.

\begin{itemize}

\item \emph{Bounds.}
Using the relation with atranslational necklaces, we express the optimal
noiseless CPC size by M\"obius inversion on the subgroup lattice. This
reformulation of the necklace count yields optimal redundancy
$d\log_q k+o(1)$. For $t$-substitution-correcting $d$-dimensional CPCs, we establish
sphere-packing and Gilbert-Varshamov bounds, showing that the optimal
redundancy lies between $(t+1)d\log_q k+O(1)$ and $(2t+1)d\log_q k+O(1)$.

\item \emph{Constructions.}
We present three constructions for $d$-dimensional CPCs, offering different tradeoffs between redundancy and algorithmic complexity. The marker-hyperplane construction has asymptotic redundancy $k^d-(k-1)^d+o(1)$, while its explicit encoder and decoder have redundancy $k^d-(k-1)^d+k-1$ and run in $O(k^d)$ time. The row-anchor construction achieves redundancy $d\log_q k+\log_q e+o(1)$, while its explicit encoder and decoder achieve redundancy $\lceil\log_q(k^d-k)\rceil+5$ and run in $O(k^d\log^2 k)$ time. For prime $k$, the moment-syndrome construction attains redundancy $d\log_q k+o(1)$ and admits a direct synchronization decoder.
We also develop substitution-correcting versions. In particular, the explicit robust row-anchor construction has redundancy
$(t+1)d\log_2 k+O(\log\log k)$ for $q=2$, which is optimal in its
leading term. 

\item \emph{Single-fragment forensic codes.}
For the thick-fragment regime $h=cM^{1/d}$, where $0<c\le 1$ is fixed, by setting the period length to $k=h$, we obtain single-fragment forensic codes of rate $c^d-o(1)$ in both the noiseless and noisy settings, together with efficient encoding and decoding algorithms. In particular, when $c=1-o(1)$, the resulting codes have asymptotic rate one.

\end{itemize}

\subsection{Organization}
The remainder of the paper is organized as follows. Section~\ref{sec:notations} introduces the notation and basic definitions. Section~\ref{sec:model} formalizes the single-fragment forensic coding model and establishes its reduction to multidimensional CPCs via periodic lifting. Section~\ref{sec:limit} develops the theoretical bounds for multidimensional CPCs. Sections~\ref{sec:noiseless} and~\ref{sec:noiseless_multi} provide explicit constructions for noiseless CPCs in two dimensions and in higher dimensions, respectively, while Section~\ref{sec:noisy} develops the corresponding substitution-correcting theory. These constructions are applied to single-fragment forensic coding in Section~\ref{sec:forensic}. Finally, Section~\ref{sec:conclusion} concludes the paper and discusses open problems.

\section{Notation}\label{sec:notations}

 For a word $\bm{x}$ of length $n$, the notation $\bm{x}_{[i,j]}$ denotes its substring from position $i$ to position $j$, with both endpoints included. We use $\bm{x}\circ \bm{y}$ to indicate the concatenation of $\bm{x}$ and $\bm{y}$ and $\bm{x}^m$ to denote the concatenation of $m$ copies of $\bm{x}$.

For an integer $k\ge1$, let $[k]=\{0,1,\ldots,k-1\}$. When equipped with addition modulo $k$, the set $[k]$ is identified with the cyclic group $\Z_k$.
When $k\ge2$, define the \emph{internal index set} $[k]^+=\{1,2,\ldots,k-1\}$.
Let $\Sigma_q=[q]$ be an alphabet of size $q\ge2$.
Let $\Sigma_q^{\Z_k^d}$ denote the set of $d$-dimensional arrays over $\Sigma_q$ with side length $k$.
For an array $\bm{X}\in\Sigma_q^{\Z_k^d}$, an index is a vector
$\bm{i}=(i_1,\ldots,i_d)\in\Z_k^d$. We write $X_{\bm{i}}$, or
$X_{i_1,\ldots,i_d}$, for the symbol at that coordinate. Coordinate directions are numbered $1,\ldots,d$, although array and word
positions are indexed from $0$. The
\emph{support} and \emph{Hamming weight} of $\bm{X}$ are
$\supp(\bm{X})=\{\bm{i}\in\Z_k^d:X_{\bm{i}}\neq0\}$ and
$\wt(\bm{X})=|\supp(\bm{X})|$, respectively. The
\emph{Hamming distance} $d_H(\bm{X},\bm{Y})$ between two arrays is
the number of coordinates at which they differ.
For an integer $s\ge0$, the
\emph{Hamming ball} of radius $s$ centered at $\bm{X}$ is
\begin{equation*}
\mathcal{B}_s(\bm{X})
=
\left\{
\bm{Z}\in\Sigma_q^{\Z_k^d}:
d_H(\bm{X},\bm{Z})\le s
\right\}.
\end{equation*}
Its cardinality is independent of the center and is given by
\begin{equation*}
V_q(k^d,s)
=
\sum_{j=0}^{\min\{s,k^d\}}\binom{k^d}{j}(q-1)^j.
\end{equation*}
For a code $\cC\subseteq\Sigma_q^{\Z_k^d}$, write $d_H(\cC)$ for its \emph{minimum Hamming distance}.
For a set $\cS\subseteq\Sigma_q^{\Z_k^d}$ and an integer
$\Delta\ge0$, let
$A_q(\cS,\Delta)$ be the largest size of a code contained in $\cS$ with minimum Hamming distance at least $\Delta$. 
The \emph{$q$-ary redundancy} of a nonempty code
$\cC\subseteq\Sigma_q^{\Z_k^d}$ is
$\rho(\cC)=k^d-\log_q|\cC|$.

For $\bm{a}\in\Z_k^d$, define the \emph{translation} \(T_{\bm{a}}:\Sigma_q^{\Z_k^d}\to\Sigma_q^{\Z_k^d}\)
by $\bigl(T_{\bm{a}}\bm{X}\bigr)_{\bm{i}}
= \bm{X}_{\bm{i}+\bm{a}}$, for any $\bm{X}\in\Sigma_q^{\Z_k^d}$ and $\bm{i}\in\Z_k^d$, where the addition in $\bm{i}+\bm{a}$ is performed componentwise modulo $k$. Thus $T_{\bm{a}}T_{\bm{b}}=T_{\bm{a}+\bm{b}}$, and $\{T_{\bm{a}}:\bm{a}\in\Z_k^d\}$ is a group action on $\Sigma_q^{\Z_k^d}$.
For $\bm{X}\in\Sigma_q^{\Z_k^d}$, its translation \emph{orbit} or \emph{necklace} is 
\[
    \cO(\bm{X})=\{T_{\bm{a}}\bm{X}:\bm{a}\in\Z_k^d\},
\]
and its translation \emph{stabilizer} is $\Stab(\bm{X})=\{\bm{a}\in\Z_k^d:T_{\bm{a}}\bm{X}=\bm{X}\}$.
By the orbit-stabilizer theorem, $|\cO(\bm{X})|=\frac{k^d}{|\Stab(\bm{X})|}$. We call $\bm{X}$ (or its necklace) \emph{atranslational}
if $\Stab(\bm{X})=\{\bm{0}\}$. Such an array has exactly $k^d$ distinct translations.

\section{Single-Fragment Forensic Coding and Its Reduction to CPCs}\label{sec:model}

This section formalizes the single-fragment forensic-coding problem and establishes the reduction that allows the remainder of the paper to focus on multidimensional CPCs.

\subsection{The Single-Fragment Model}

Let $\bm{n}=(n_1,\ldots,n_d)$ be the side-length vector of the
ambient object. A \emph{fragment} is a portion of the ambient object whose shape and symbol values are observed in a local coordinate system, while its original position within the ambient object is unknown to the decoder.
The local coordinate axes have the same known orientation as the global
axes, only the origin is unknown. Rotations and reflections are not part
of the present channel model. We use the guaranteed-box formulation of
\cite{LiuRavivISIT2025,LiuRaviv2025}, extended to dimension $d$.

\begin{definition}\label{def:legal}
A $d$-dimensional fragment is \emph{$(M,h)$-legal} if it contains an axis-parallel box with integer side lengths $a_1,\ldots,a_d\geq h$ such that $\prod_{j=1}^{d}a_j\ge M$.
\end{definition}

Here, $M$ lower-bounds the volume of the guaranteed box, whereas $h$
imposes a coordinatewise thickness requirement and rules out arbitrarily
thin guaranteed boxes. We restrict attention to $M\geq h^d$. When
$M<h^d$, the volume constraint is redundant. We also assume that the
ambient object admits at least one legal fragment. The periodic-lifting
argument uses the thickness guarantee, whereas $M$ is used to normalize
the forensic rate in Section~\ref{sec:forensic}.

\begin{definition}\label{def:forensic}
A code $\cC\subseteq
\Sigma_q^{[n_1]\times\cdots\times[n_d]}$ is a \emph{$t$-substitution-correcting $(M,h)$ single-fragment forensic code} if every codeword is uniquely determined by any of its $(M,h)$-legal fragments after at most $t$ symbol
substitutions in the entire observed fragment.
\end{definition}

\subsection{Multidimensional CPCs}

The following definition is a direct generalization of the classical CPC condition, which requires all cyclic shifts of the codewords to be pairwise distinct
\cite{Gilbert1963,NguyenGyoriMassey1992}.

\begin{definition}\label{def:cpc}
A code $\cC\subseteq\Sigma_q^{\Z_k^d}$ is called a
\emph{$d$-dimensional cyclically permutable code}, or \emph{$d$D-CPC}, if every received array $\bm{Z}$ satisfying $\bm{Z}=T_{\bm{a}}\bm{X}$ for some $\bm{X}\in\cC$ and $\bm{a}\in\Z_k^d$ uniquely determines the ordered pair $(\bm{X},\bm{a})$. Equivalently, each codeword is
atranslational and no two codewords lie in the same translation necklace. Let
$\Acpc_q(d,k)$ be the largest size of such a code.
\end{definition}

We next extend it to
the substitution-correcting setting.

\begin{definition}\label{def:sync-correcting}
A code $\cC\subseteq\Sigma_q^{\Z_k^d}$ is called a
\emph{$t$-substitution-correcting $d$D-CPC} if every received array
$\bm{Z}$ satisfying $d_H(\bm{Z},T_{\bm{a}}\bm{X})\le t$ for some $\bm{X}\in\cC$ and $\bm{a}\in\Z_k^d$ uniquely determines the
ordered pair $(\bm{X},\bm{a})$. Let $\Acpc_q(d,k,t)$ be the largest size of such a code. In particular, $\Acpc_q(d,k,0)=\Acpc_q(d,k)$.
\end{definition}

\begin{remark}\label{rmk:wcpc}
If only the transmitted codeword, rather than the translation, needs to be recovered, one may relax the atranslational requirement and require only that distinct codewords belong to distinct translation necklaces. We refer to the resulting object as a \emph{weakly cyclically permutable code} (WCPC). Clearly, every CPC is also a WCPC. Since all constructions developed in this paper are CPCs, we do not pursue WCPCs further and leave their systematic study for future work.
\end{remark}

\subsection{Periodic Lifting and Reduction to CPCs}
\label{subsec:periodic-lifting}

Recall that $\bm{n}=(n_1,\ldots,n_d)$ is the side-length vector of the
ambient object. For a base array $\bm{X}\in\Sigma_q^{\Z_k^d}$, define its
\emph{periodic extension} to $[n_1]\times\cdots\times[n_d]$ by
\begin{equation*}
\Per_{\bm{n}}(\bm{X})_{\bm{i}}
=\bm{X}_{\bm{i} \bmod{k}}=
\bm{X}_{(i_1\bmod k,\ldots,i_d\bmod k)},
\qquad
\bm{i}=(i_1,\ldots,i_d)\in[n_1]\times\cdots\times[n_d].
\end{equation*}
For a base code $\cC\subseteq\Sigma_q^{\Z_k^d}$, define
\begin{equation*}
\Per_{\bm{n}}(\cC)
=
\left\{
\Per_{\bm{n}}(\bm{X}):
\bm{X}\in\cC
\right\}.
\end{equation*}
If $n_j\ge k$ for every $1\le j\le d$, the window starting at the origin is
precisely the base array. Consequently, the periodic-extension map is
injective and $\big|\Per_{\bm{n}}(\cC)\big|=|\cC|$.

\begin{lemma}\label{lem:periodic-window}
Every axis-parallel $k\times\cdots\times k$ window in
$\Per_{\bm{n}}(\bm{X})$ is a translate of $\bm{X}\in\Sigma_q^{\Z_k^d}$.
\end{lemma}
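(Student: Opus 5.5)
The plan is to parametrize an arbitrary axis-parallel $k\times\cdots\times k$ window by its corner and read off the translation directly from the definition of $\Per_{\bm{n}}$. Let the window occupy the index set $\bm{b}+[k]^d=\{\bm{b}+\bm{j}:\bm{j}\in[k]^d\}\subseteq[n_1]\times\cdots\times[n_d]$, where $\bm{b}=(b_1,\ldots,b_d)$ is its lower corner, so that $0\le b_\ell$ and $b_\ell+k-1\le n_\ell-1$ for every $\ell$. I will regard the window in its local coordinates as the array $\bm{W}\in\Sigma_q^{\Z_k^d}$ given by $W_{\bm{j}}=\Per_{\bm{n}}(\bm{X})_{\bm{b}+\bm{j}}$ for $\bm{j}\in[k]^d$, identifying $[k]^d$ with $\Z_k^d$ as in Section~\ref{sec:notations}.

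Next I will set $\bm{a}=\bm{b}\bmod k\in\Z_k^d$ (componentwise) and verify $\bm{W}=T_{\bm{a}}\bm{X}$. For each $\bm{j}\in\Z_k^d$, the definition of the periodic extension gives $W_{\bm{j}}=X_{(\bm{b}+\bm{j})\bmod k}$. Reducing modulo $k$ componentwise, $(\bm{b}+\bm{j})\bmod k=\bm{j}+\bm{a}$ computed in $\Z_k^d$. Hence $W_{\bm{j}}=X_{\bm{j}+\bm{a}}=(T_{\bm{a}}\bm{X})_{\bm{j}}$ by the definition of $T_{\bm{a}}$, which proves the claim.

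There is no substantive obstacle. The only point needing care is bookkeeping: the local index $\bm{j}$ ranges over the integers $0,\ldots,k-1$, but it must be treated as an element of $\Z_k^d$, so that addition with $\bm{a}$ agrees with reduction of the integer sum $\bm{b}+\bm{j}$ modulo $k$. This holds because reduction modulo $k$ is a ring homomorphism $\Z\to\Z_k$ applied in each coordinate. I will also note, for later use, that the translation is explicitly $\bm{a}=\bm{b}\bmod k$. Consequently, if $\bm{X}$ lies in a CPC, recovering $\bm{a}$ from the window determines the residues of the window's position modulo $k$.
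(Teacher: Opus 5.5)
Your proof is correct and is essentially the paper's argument: both parametrize the window by its starting corner $\bm{b}$ and use the definition of $\Per_{\bm{n}}$ to identify the window as $T_{\bm{b}\bmod k}\bm{X}$. Your extra remark that reduction modulo $k$ is compatible with addition only spells out a step the paper leaves implicit.
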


\begin{proof}
Suppose that the window starts at the global coordinate
$\bm{s}=(s_1,\ldots,s_d)$. Its entry at the local coordinate
$\bm{i}\in\Z_k^d$ is
\begin{align*}
\Per_{\bm{n}}(\bm{X})_{\bm{s}+\bm{i}}
&=
\bm{X}_{(\bm{s}+\bm{i})\bmod k}
=
\bigl(T_{\bm{s}\bmod k}\bm{X}\bigr)_{\bm{i}},
\end{align*}
where modulo $k$ is performed componentwise. Hence the window is
$T_{\bm{s}\bmod k}\bm{X}$. This completes the proof.
\end{proof}

\begin{theorem}\label{thm:periodic-lifting}
Let $\cC\subseteq\Sigma_q^{\Z_k^d}$ be a
$t$-substitution-correcting $d$D-CPC, and suppose that $k\le h$ and $n_j\ge k$ for every $1\leq j\leq d$. Then
$\Per_{\bm{n}}(\cC)$ is a $t$-substitution-correcting $(M,h)$
single-fragment forensic code.
\end{theorem}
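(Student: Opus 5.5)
The plan is to give an explicit decoder for $\Per_{\bm{n}}(\cC)$ that uses only the legal fragment, and then show that its output is the transmitted codeword. Fix $\bm{X}\in\cC$ and let $F$ be an $(M,h)$-legal fragment of $\Per_{\bm{n}}(\bm{X})$, observed in local coordinates after at most $t$ substitutions.

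First, locate a complete period. By Definition~\ref{def:legal}, $F$ contains an axis-parallel box with all side lengths $a_j\ge h\ge k$. The decoder knows $F$'s shape in local coordinates, so it can find such a box; any witness box will do. Take the $k\times\cdots\times k$ sub-window $W$ anchored at that box's minimal corner in local coordinates. Its global start $\bm{s}$ is unknown to the decoder, but $W$ lies inside the ambient object because it lies inside the fragment.

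Second, identify $W$ with a translate of $\bm{X}$. By Lemma~\ref{lem:periodic-window}, the noiseless contents of $W$ equal $T_{\bm{s}\bmod k}\bm{X}$. The at most $t$ substitutions occur somewhere in the whole fragment, so at most $t$ of them fall inside $W$. Hence the observed window $\bm{Z}$, read as an element of $\Sigma_q^{\Z_k^d}$ through its local coordinates, satisfies $d_H(\bm{Z},T_{\bm{a}}\bm{X})\le t$ with $\bm{a}=\bm{s}\bmod k$.

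Third, decode and conclude. By Definition~\ref{def:sync-correcting}, $\bm{Z}$ uniquely determines the pair $(\bm{X},\bm{a})$, and in particular it determines $\bm{X}$. Since $n_j\ge k$ for all $j$, the periodic-extension map is injective, so $\bm{X}$ determines the codeword $\Per_{\bm{n}}(\bm{X})$.

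For uniqueness in the sense of Definition~\ref{def:forensic}, suppose two codewords $\Per_{\bm{n}}(\bm{X})$ and $\Per_{\bm{n}}(\bm{X}')$ could yield the same observed fragment. The decoder picks the same window $W$ in both cases, since the choice depends only on the observed shape. The same $\bm{Z}$ then lies within distance $t$ of translates of both $\bm{X}$ and $\bm{X}'$, and the CPC property forces $\bm{X}=\bm{X}'$.

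The only real obstacle is making sure the window choice is well defined from the observed data alone. The fragment's shape is observed, so a witness box can be chosen deterministically from it, and the argument above then applies. The rest is routine bookkeeping: errors restricted to $W$ number at most $t$, and local coordinates match $\Z_k^d$ indexing via Lemma~\ref{lem:periodic-window}.
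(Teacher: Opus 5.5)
Your proposal is correct and follows the paper's own proof. Like the paper, you take a $k\times\cdots\times k$ sub-window of the guaranteed box, use Lemma~\ref{lem:periodic-window} to identify it with $T_{\bm{s}\bmod k}\bm{X}$ corrupted in at most $t$ positions, and invoke the $t$-substitution-correcting CPC property; your deterministic choice of window from the observed shape, and your phrasing of uniqueness as ``two codewords producing the same fragment,'' only make explicit what the paper leaves implicit.
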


\begin{proof}
Consider an arbitrary $(M,h)$-legal fragment. By definition, it contains an
axis-parallel box whose side lengths are all at least $h\ge k$. Selecting
$k$ consecutive coordinates in each direction yields a
$k\times\cdots\times k$ subwindow, for which we denote it by $\bm{Z}$. Before corruption, Lemma~\ref{lem:periodic-window} shows that the selected
subwindow is $T_{\bm{a}}\bm{X}$ for some $\bm{X}\in\cC$ and
$\bm{a}\in\Z_k^d$. Since the entire fragment contains at most $t$
substitutions, so does the selected subwindow. Therefore, $d_H\bigl(\bm{Z},T_{\bm{a}}\bm{X}\bigr)\le t$. The CPC decoder uniquely recovers the ordered pair $(\bm{X},\bm{a})$.
Since $\bm{X}$ determines its periodic extension, the ambient codeword is
uniquely recovered. This completes the proof.
\end{proof}

Theorem~\ref{thm:periodic-lifting} is the reduction principle underlying this
paper. Once a suitable base CPC has been constructed, periodic repetition
automatically distributes the encoded information throughout the ambient
object, while CPC decoding recovers the unknown translation of each complete
period. Sections~\ref{sec:limit},~\ref{sec:noiseless},~\ref{sec:noiseless_multi}, and~\ref{sec:noisy} therefore study multidimensional CPCs as independent
coding objects, while Section~\ref{sec:forensic} applies the resulting codes to the single-fragment forensic problem.

\section{Theoretical Bounds for Error-Correcting CPCs}\label{sec:limit}

\subsection{Noiseless CPCs}\label{sec:noiseless-benchmark}

A noiseless CPC can contain at most one representative from each atranslational necklace. Indeed, if an array $\bm X$ has a nonzero stabilizer element $\bm a$, then the same received array arises from both $(\bm X,\bm 0)$ and $(\bm X,\bm a)$. Similarly, two distinct arrays in the same necklace cannot both be codewords. The converse is immediate. Hence $\Acpc_q(d,k)$ equals the number of atranslational necklaces.

Adamson, Deligkas, Gusev, and Potapov gave an exact recursive count of atranslational multidimensional necklaces \cite[Theorem~3]{AdamsonEtAl2021}. For the bounds below, it is convenient to express this count directly via M\"obius inversion.

For subgroups $H\leq K\leq\Z_k^d$, let $\mu_{\Z_k^d}(H,K)$ denote the M\"obius function of the subgroup lattice, defined by $\mu_{\Z_k^d}(H,H)=1$ and $\sum_{H\leq L\leq K}\mu_{\Z_k^d}(H,L)=0$ for $H<K$. We use the standard M\"obius inversion formula below.

\begin{lemma}[M\"obius inversion]\label{lem:noiseless-mobius-inversion}
If functions $f$ and $g$ on the subgroups of $\Z_k^d$ satisfy $f(H)=\sum_{H\leq K}g(K)$, then
\[
g(H)=\sum_{H\leq K}\mu_{\Z_k^d}(H,K)f(K).
\]
\end{lemma}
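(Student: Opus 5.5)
The statement is the standard Möbius inversion on the finite poset of subgroups of $\Z_k^d$ ordered by inclusion. I would prove it by substituting the hypothesis into the right-hand side, exchanging the order of summation, and then using the defining recursion of $\mu_{\Z_k^d}$. The one wrinkle is that this recursion is a \emph{left} sum (over the top argument), whereas the exchange produces a \emph{right} sum (over the bottom argument). So the plan has two steps: first show that the two kinds of sums agree, then perform the substitution.

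\textbf{Step 1: the right-sum identity.} Fix a subgroup $H$. Substituting $f(K)=\sum_{K\leq L}g(L)$ gives
\[
\sum_{H\leq K}\mu_{\Z_k^d}(H,K)f(K)=\sum_{H\leq L}g(L)\sum_{H\leq K\leq L}\mu_{\Z_k^d}(H,K).
\]
The inner sum is exactly the left sum in the definition of $\mu_{\Z_k^d}$. It equals $1$ when $L=H$ and $0$ when $H<L$. So, as a check, this ordering of the double sum needs only the definition and already collapses to $g(H)$.

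On reflection, then, no dual identity is needed: the inner sum $\sum_{H\leq K\leq L}\mu_{\Z_k^d}(H,K)$ is precisely the defining recursion. The only things to verify carefully are the following.
\begin{itemize}
\item The exchange of summation is legitimate. It is, because the subgroup lattice of the finite group $\Z_k^d$ is finite, so all sums are finite.
\item The index set $\{(K,L):H\leq K\leq L\}$ is correctly reparametrized when the order of summation is swapped.
\end{itemize}

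\textbf{Step 2: conclusion.} Once the inner sum is recognized as $\delta_{H,L}$, the double sum reduces to $g(H)$, which proves the claim. There is no real obstacle here; the only care required is bookkeeping the order of summation so that the inner sum matches the definition's recursion rather than its dual.
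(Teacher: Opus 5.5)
Your argument is correct. Substituting $f(K)=\sum_{K\leq L}g(L)$ and interchanging the finite sums gives $\sum_{H\leq L}g(L)\sum_{H\leq K\leq L}\mu_{\Z_k^d}(H,K)$. The inner sum is exactly the paper's defining recursion (first argument fixed, second varying), so it equals $1$ for $L=H$ and $0$ for $H<L$, and the whole expression collapses to $g(H)$. This is the standard textbook proof of the formula, which the paper only cites as standard and does not prove.

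In a write-up, drop the opening claim that the recursion and the interchanged sum are of different types, and the promised ``dual identity'' step. As you noticed yourself, that mismatch does not occur for this upper form of the inversion.
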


\begin{lemma}\label{lem:noiseless-cpc-enumeration}
The largest noiseless $d$D-CPC has size
\[
\Acpc_q(d,k)
=\frac{1}{k^d}
\sum_{H\leq\Z_k^d}
\mu_{\Z_k^d}(\{\bm0\},H)q^{k^d/|H|}.
\]
\end{lemma}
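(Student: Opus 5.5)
The plan is to count arrays by their exact stabilizer and then invert over the subgroup lattice. The discussion preceding the lemma already shows that $\Acpc_q(d,k)$ equals the number of atranslational necklaces, so it suffices to count those. For each subgroup $H\le\Z_k^d$, define
\[
g(H)=\bigl|\{\bm X\in\Sigma_q^{\Z_k^d}:\Stab(\bm X)=H\}\bigr|,\qquad f(H)=\bigl|\{\bm X:H\le\Stab(\bm X)\}\bigr|.
\]
Since every stabilizer is a subgroup of $\Z_k^d$, we have $f(H)=\sum_{H\le K}g(K)$.

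First I will compute $f(H)$. An array satisfies $T_{\bm a}\bm X=\bm X$ for all $\bm a\in H$ exactly when $\bm X$ is constant on each coset of $H$ in $\Z_k^d$. Indeed, $\bm X_{\bm i+\bm a}=\bm X_{\bm i}$ for all $\bm a\in H$ says precisely that $\bm X$ factors through $\Z_k^d/H$. The number of cosets is $k^d/|H|$, and each coset may receive any symbol independently. Hence
\[
f(H)=q^{k^d/|H|}.
\]

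Next I will apply Lemma~\ref{lem:noiseless-mobius-inversion} with $H=\{\bm 0\}$. This gives the number of atranslational arrays:
\[
g(\{\bm0\})=\sum_{K\le\Z_k^d}\mu_{\Z_k^d}(\{\bm0\},K)\,q^{k^d/|K|}.
\]
Finally, each atranslational array has an orbit of size exactly $k^d$ by the orbit-stabilizer relation. Translation preserves the stabilizer, because the group is abelian and so $\Stab(T_{\bm a}\bm X)=\Stab(\bm X)$. Therefore the atranslational arrays are partitioned into necklaces of size $k^d$, and dividing $g(\{\bm0\})$ by $k^d$ yields the claimed formula.

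There is no serious obstacle; this is the classical Möbius-inversion argument. The only points needing care are the following.
\begin{itemize}
\item The coset characterization of $H$-invariant arrays, which justifies the formula for $f(H)$.
\item The fact that the orbit of an atranslational array consists entirely of atranslational arrays, which justifies the division by $k^d$.
\end{itemize}
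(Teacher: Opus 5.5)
Your proposal is correct and follows the paper's proof essentially step for step: the same functions $f$ and $g$ on the subgroup lattice, the coset count $f(H)=q^{k^d/|H|}$, M\"obius inversion at $\{\bm 0\}$, and division by the orbit size $k^d$. Your remark that $\Stab(T_{\bm a}\bm X)=\Stab(\bm X)$ is a detail the paper leaves implicit, and it is exactly what justifies the final division, since it shows the atranslational arrays split into full necklaces of size $k^d$.
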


\begin{proof}
For a subgroup $H\leq\Z_k^d$, let
\[
f(H)
=
\left|
\left\{
\bm X\in\Sigma_q^{\Z_k^d}:
T_{\bm a}\bm X=\bm X
\text{ for every }\bm a\in H
\right\}
\right|
\]
be the number of arrays fixed by every translation in $H$. Such an array is constant on each coset of $H$. Indeed, if $\bm i$ and $\bm j$ belong to the same coset, then $\bm j=\bm i+\bm a$ for some $\bm a\in H$, and hence
\[
\bm X_{\bm j}
=
\bm X_{\bm i+\bm a}
=
\bigl(T_{\bm a}\bm X\bigr)_{\bm i}
=
\bm X_{\bm i}.
\]
Conversely, an array that is constant on every coset of $H$ is fixed by $T_{\bm a}$ for every $\bm a\in H$. Since $\Z_k^d$ has $k^d/|H|$ cosets of $H$, and the common symbol on each coset can be chosen independently from $\Sigma_q$, we have $f(H)=q^{k^d/|H|}$.

Let
\[
g(H)
=
\left|
\left\{
\bm X\in\Sigma_q^{\Z_k^d}:
\Stab(\bm X)=H
\right\}
\right|
\]
be the number of arrays whose stabilizer is exactly $H$. An array is fixed by $T_{\bm a}$ for every $\bm a\in H$ if and only if its stabilizer contains $H$. Therefore, $f(H)=\sum_{H\leq K}g(K)$. M\"obius inversion then gives
\[
g(\{\bm0\})=
\sum_{H\leq\Z_k^d}
\mu_{\Z_k^d}(\{\bm0\},H)q^{k^d/|H|}.
\]
This is the number of atranslational arrays. Each atranslational necklace has size $k^d$, which gives the desired size.
\end{proof}

\begin{corollary}\label{cor:noiseless-cpc-asymptotic}
For fixed $q$ and $d$, as $k\to\infty$,
\[
\frac{q^{k^d}-(k^d-1)q^{k^d/2}}{k^d}
\leq
\Acpc_q(d,k)
\leq
\frac{q^{k^d}}{k^d}.
\]
Consequently, the optimal redundancy of a $d$D-CPC $\mathcal{C}$ is $d\log_q k+o(1)$.
\end{corollary}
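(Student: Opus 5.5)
We would not work with the Möbius formula of Lemma~\ref{lem:noiseless-cpc-enumeration} directly. Instead, we would use the counting interpretation from its proof: $\Acpc_q(d,k)=g(\{\bm 0\})/k^d$, where $g(\{\bm0\})$ is the number of atranslational arrays.

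The upper bound is immediate. Since $g(\{\bm0\})\le q^{k^d}$, we get $\Acpc_q(d,k)\le q^{k^d}/k^d$.

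For the lower bound we use a union bound. An array that is not atranslational has a nonzero stabilizer. Hence it is fixed by $T_{\bm a}$ for some $\bm a\neq\bm0$. It is then also fixed by the whole cyclic subgroup $\langle\bm a\rangle$, which has order at least $2$. By the computation $f(H)=q^{k^d/|H|}$ in the proof of Lemma~\ref{lem:noiseless-cpc-enumeration}, the number of arrays fixed by $T_{\bm a}$ is
\[
q^{k^d/|\langle\bm a\rangle|}\le q^{k^d/2}.
\]
Summing over the $k^d-1$ nonzero $\bm a$, we obtain
\[
g(\{\bm0\})\ge q^{k^d}-(k^d-1)q^{k^d/2}.
\]
Dividing by $k^d$ gives the stated lower bound.

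For the redundancy, write $\rho=k^d-\log_q\Acpc_q(d,k)$.
\begin{itemize}
\item The upper bound on $\Acpc_q(d,k)$ gives $\rho\ge d\log_q k$.
\item The lower bound gives
\[
\rho\le d\log_q k-\log_q\bigl(1-(k^d-1)q^{-k^d/2}\bigr).
\]
\end{itemize}
Since $(k^d-1)q^{-k^d/2}\to0$ as $k\to\infty$ for fixed $q$ and $d$, the correction term is $o(1)$. Using $-\log_q(1-x)=O(x)$ for small $x$, it is in fact exponentially small. So the optimal redundancy is $d\log_q k+o(1)$.

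The main point to get right is the union-bound step. We must use that any nonzero $\bm a$ generates a subgroup of order at least $2$, so each term is at most $q^{k^d/2}$. Everything else is routine.
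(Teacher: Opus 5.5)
Your proposal is correct and follows essentially the same route as the paper. The paper also bounds $\Acpc_q(d,k)$ above by the total count $q^{k^d}/k^d$ and below by a union bound over the $k^d-1$ nonzero translations, each fixing at most $q^{k^d/2}$ arrays, then divides by $k^d$; your explicit derivation of the $o(1)$ redundancy term just spells out a step the paper leaves implicit.
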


\begin{proof}
It suffices to prove the size bound. Fix a nonzero translation $T_{\bm a}$ of order $e$. It partitions the $k^d$ coordinates into $k^d/e$ cycles, so it fixes exactly $q^{k^d/e}$ arrays. Since $e\ge2$, this number is at most $q^{k^d/2}$. A union bound over the $k^d-1$ nonzero translations shows that at least $q^{k^d}-(k^d-1)q^{k^d/2}$ arrays are atranslational. Dividing by $k^d$ gives the stated lower bound. The upper bound follows immediately from the total number of arrays. This completes the proof.
\end{proof}

\subsection{Noisy CPCs}\label{sec:noisy-benchmark}

For a $t$-substitution-correcting CPC, each codeword must be at distance at least $2t+1$ from every nontrivial translate of itself. We first bound the number of arrays that violate this condition.

\begin{lemma}\label{lem:near-periodic}
Let $\bm a\in\Z_k^d\setminus\{\bm 0\}$ be such that $T_{\bm a}$ has order $e$, and let $s$ be a nonnegative integer. Then
\begin{equation*}
\left|
\left\{
\bm X\in\Sigma_q^{\Z_k^d}:
d_H(\bm X,T_{\bm a}\bm X)\leq s
\right\}
\right|
\leq
q^{k^d/e}V_q(k^d,s).
\end{equation*}
\end{lemma}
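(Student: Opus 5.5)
The plan is an injective encoding: each array $\bm X$ with $d_H(\bm X,T_{\bm a}\bm X)\le s$ will be mapped to a pair consisting of an array fixed by $T_{\bm a}$ and a vector in a Hamming ball of radius $s$. The cycle structure of $T_{\bm a}$ is the one used in the proof of Corollary~\ref{cor:noiseless-cpc-asymptotic}: the orbits of $\bm i\mapsto \bm i+\bm a$ partition $\Z_k^d$ into $k^d/e$ cycles, each of length $e$, namely the cosets of the cyclic subgroup $\langle\bm a\rangle$. Set $\bm E=T_{\bm a}\bm X-\bm X$ (entrywise difference modulo $q$). By hypothesis $\wt(\bm E)\le s$, so $\bm E$ ranges over at most $V_q(k^d,s)$ arrays.

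The key step is to show that $\bm X$ is determined by $\bm E$ together with one symbol per cycle. On a cycle $\bm i,\bm i+\bm a,\ldots,\bm i+(e-1)\bm a$ we have $X_{\bm i+(j+1)\bm a}=X_{\bm i+j\bm a}+E_{\bm i+j\bm a}$. Fix a representative $\bm i$ of each coset of $\langle\bm a\rangle$. Then, starting from $X_{\bm i}$, the whole cycle is recovered by telescoping:
\[
X_{\bm i+j\bm a}=X_{\bm i}+\sum_{l=0}^{j-1}E_{\bm i+l\bm a}\pmod q .
\]
Hence the map $\bm X\mapsto\bigl((X_{\bm i})_{\bm i\text{ rep.}},\bm E\bigr)$ is injective on the set in the statement. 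Its image has size at most $q^{k^d/e}\cdot V_q(k^d,s)$, which gives the bound. The consistency condition that the sum of $\bm E$ around each cycle vanishes is simply discarded, since we only need an upper bound.

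An equivalent route, if one prefers to avoid arithmetic modulo $q$, is combinatorial. Compare $\bm X$ with the array $\bm Y$ that is constant on each cycle, taking the value of $\bm X$ at the representative. Arguing within each cycle, the positions where $X_{\bm i+j\bm a}\neq X_{\bm i+(j+1)\bm a}$ number at most $s$ in total. A cycle containing $m$ such change points can differ from the constant array $\bm Y$ in many positions, however, so bounding $d_H(\bm X,\bm Y)$ directly is awkward. The difference-array encoding above sidesteps this, and it is the version I would write.

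I expect no real obstacle. The only care points are that $\Sigma_q=[q]$ is identified with $\Z_q$ so that differences make sense, and that the cycle decomposition of $T_{\bm a}$ consists of exactly $k^d/e$ cycles of length $e$. The latter holds because every coset of $\langle\bm a\rangle$ has size equal to the order $e$ of $\bm a$.
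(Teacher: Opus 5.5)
Your proposal is correct and follows the paper's proof: it identifies $\Sigma_q$ with $\Z_q$, encodes $\bm X$ by its values on one representative per cycle of $T_{\bm a}$ together with the difference array $T_{\bm a}\bm X-\bm X$ of weight at most $s$, and counts the possible pairs. The only cosmetic difference is that you show injectivity by explicit telescoping along each cycle. The paper instead argues that two arrays with the same data differ by a $T_{\bm a}$-invariant array vanishing on the representatives, which must therefore be zero.
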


\begin{proof}
Identify $\Sigma_q$ with the additive group $\Z_q$, which does not change Hamming distances. Since $T_{\bm a}$ has order $e$, it partitions the $k^d$ coordinate positions into $k^d/e$ disjoint cycles, each of length $e$. Choose one representative from each cycle, and let $R$ denote the resulting set. Then $|R|=k^d/e$.

We claim that the values $(X_{\bm r})_{\bm r\in R}$ together with the difference array $\bm\Delta=T_{\bm a}\bm X-\bm X$ uniquely determine $\bm X$. Indeed, suppose that $\bm X$ and $\bm Y$ have the same values on $R$ and satisfy $T_{\bm a}\bm X-\bm X= T_{\bm a}\bm Y-\bm Y$. Then $\bm Z=\bm X-\bm Y$ satisfies $T_{\bm a}\bm Z=\bm Z$ and $Z_{\bm r}=0$ for every $\bm r\in R$. Hence $\bm Z$ is constant on each cycle of $T_{\bm a}$ and vanishes at its representative. Therefore, $\bm Z=\bm 0$ and $\bm X=\bm Y$.

There are $q^{k^d/e}$ possible choices for $(X_{\bm r})_{\bm r\in R}$. Moreover, if $d_H(\bm X,T_{\bm a}\bm X)\leq s$, then $\wt(\bm\Delta)=d_H(\bm X,T_{\bm a}\bm X)\leq s$. Thus the difference array has at most $V_q(k^d,s)$ possible values. The claimed bound follows.
\end{proof}

For $\boldsymbol a=(a_1,\dots,a_d)\in\mathbb Z_k^d$, the translation $T_{\boldsymbol a}$ has order $k/ \gcd(k,a_1,\dots,a_d)$.
Thus every translation order divides $k$. Conversely, if $e\mid k$, then taking $\boldsymbol a=(k/e,0,\dots,0)$ gives a translation of order $e$. Hence the possible translation orders are exactly the divisors of $k$.

Now fix a divisor $e\mid k$. We count the translations of order $e$. Since $T_{\boldsymbol a}$ is determined by $\boldsymbol a$, this is equivalent to counting $\boldsymbol a\in\mathbb Z_k^d$ such that
\[
\frac{k}{\gcd(k,a_1,\dots,a_d)}=e.
\]
Every such $\boldsymbol a$ can be written uniquely as
\[
\boldsymbol a=\frac{k}{e}\boldsymbol u,
\qquad
\boldsymbol u=(u_1,\dots,u_d)\in\mathbb Z_e^d,
\]
and the condition becomes
\[
\gcd(e,u_1,\dots,u_d)=1.
\]
Thus it suffices to count the primitive vectors $\boldsymbol u\in\mathbb Z_e^d$ satisfying $\gcd(e,\boldsymbol u)=1$.
Such a vector fails to be primitive precisely when all its coordinates are divisible by some prime divisor $p$ of $e$. By inclusion-exclusion, the number of primitive vectors is
\[
J_d(e)
:=
e^d
\prod_{\substack{p\mid e,\,p\text{ prime}}}
\left(1-p^{-d}\right),
\]
the $d$-th Jordan totient function. Therefore, the number of translations $T_{\boldsymbol a}$ of order $e$ is also $J_d(e)$.

\begin{theorem}\label{thm:translation-packing}
We have
\begin{align*}
\Acpc_q(d,k,t)
&\ge
\frac{
q^{k^d}
-
V_q(k^d,2t)
\displaystyle\sum_{\substack{e\mid k,~e>1}}
J_d(e)q^{k^d/e}
}{
k^dV_q(k^d,2t)
}\\
&\ge
\frac{
q^{k^d}
-
(k^d-1)q^{k^d/2}V_q(k^d,2t)
}{
k^dV_q(k^d,2t)
}.
\end{align*}
Moreover,
\begin{equation*}
\Acpc_q(d,k,t)
\le
\frac{q^{k^d}}{k^dV_q(k^d,t)}.
\end{equation*}
\end{theorem}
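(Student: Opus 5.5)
For the lower bound I would use a Gilbert--Varshamov-style greedy argument restricted to ``good'' arrays. Call $\bm X$ \emph{good} if $d_H(\bm X,T_{\bm a}\bm X)\ge 2t+1$ for every $\bm a\ne\bm 0$. By Lemma~\ref{lem:near-periodic} with $s=2t$, the number of arrays violating this for a fixed $\bm a$ of order $e$ is at most $q^{k^d/e}V_q(k^d,2t)$. Summing over orders $e\mid k$, $e>1$, with $J_d(e)$ translations of each order, the set $\cG$ of good arrays satisfies
\[
|\cG|\ge q^{k^d}-V_q(k^d,2t)\sum_{e\mid k,\,e>1}J_d(e)q^{k^d/e}.
\]
Goodness is translation invariant, since $d_H(T_{\bm b}\bm X,T_{\bm a}T_{\bm b}\bm X)=d_H(\bm X,T_{\bm a}\bm X)$, so $\cG$ is a union of necklaces.

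I would then choose codewords greedily from $\cG$. After picking $\bm X\in\cG$, delete from $\cG$ every array within distance $2t$ of some translate $T_{\bm a}\bm X$. This removes at most $k^dV_q(k^d,2t)$ arrays, and the removed set is again translation invariant, because it is the union of the balls around the whole orbit. The process therefore runs for at least $|\cG|/(k^dV_q(k^d,2t))$ steps, which gives the first inequality.

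Next I would check that the resulting code is $t$-substitution-correcting. Suppose $d_H(\bm Z,T_{\bm a}\bm X)\le t$ and $d_H(\bm Z,T_{\bm b}\bm Y)\le t$ with $(\bm X,\bm a)\ne(\bm Y,\bm b)$. Then $d_H(T_{\bm a}\bm X,T_{\bm b}\bm Y)\le 2t$. If $\bm X=\bm Y$, this says $d_H(\bm X,T_{\bm b-\bm a}\bm X)\le 2t$ with $\bm b-\bm a\ne\bm 0$, which contradicts goodness. If $\bm X\ne\bm Y$, say $\bm Y$ was chosen after $\bm X$, then $T_{\bm b}\bm Y$ lies within $2t$ of a translate of $\bm X$. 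Hence $T_{\bm b}\bm Y$ was deleted, and by translation invariance $\bm Y$ was deleted as well, a contradiction.

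The second lower bound follows from $q^{k^d/e}\le q^{k^d/2}$ together with $\sum_{e\mid k,e>1}J_d(e)=k^d-1$. The latter holds because the orders of the nonzero translations partition $\Z_k^d\setminus\{\bm0\}$.

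For the upper bound I would use sphere packing. The Hamming balls $\cB_t(T_{\bm a}\bm X)$ over all $\bm X\in\cC$ and $\bm a\in\Z_k^d$ must be pairwise disjoint, since a common point would have two valid decodings. There are $k^d|\cC|$ such balls, all distinct by the CPC property, each of size $V_q(k^d,t)$, so $k^d|\cC|V_q(k^d,t)\le q^{k^d}$. The main care point is the translation invariance of the deleted sets in the greedy step, which is what ensures that removing codeword neighborhoods never breaks the necklace structure.
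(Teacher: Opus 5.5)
Your proof is correct and follows the paper's argument essentially step for step. It uses the same set of arrays far from their nontrivial translates, bounded via Lemma~\ref{lem:near-periodic} and the counts $J_d(e)$, the same greedy removal of $2t$-neighborhoods of entire orbits, the same relaxation $q^{k^d/e}\le q^{k^d/2}$ together with $\sum_{e\mid k,e>1}J_d(e)=k^d-1$, and the same sphere-packing upper bound. Your appeal to translation invariance of the deleted set is valid but not needed: $d_H(T_{\bm b}\bm Y,T_{\bm a}\bm X)=d_H(\bm Y,T_{\bm a-\bm b}\bm X)$ already places $\bm Y$ itself in the removed neighborhood.
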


\begin{proof}
We first prove the lower bounds. Let $\mathcal S$ be the set of arrays $\bm X$ such that $d_H(\bm X,T_{\bm a}\bm X)>2t$ for all $\bm a\in\Z_k^d\setminus \{\bm{0}\}$. If $\bm a$ has order $e$, then Lemma~\ref{lem:near-periodic} gives
\[
\left|
\left\{
\bm X:
d_H(\bm X,T_{\bm a}\bm X)\le2t
\right\}
\right|
\le
q^{k^d/e}V_q(k^d,2t).
\]
Since there are $J_d(e)$ translations of order $e$, the union bound yields
\[
|\mathcal S|
\ge
q^{k^d}- 
\sum_{\substack{e\mid k,~e>1}}
J_d(e)\, q^{k^d/e} V_q(k^d,2t).
\]

We then apply a greedy packing argument to $\mathcal S$. While arrays remain, select an array $\bm X$, include it in a set $\cC$, and remove every remaining array $\bm Y$ satisfying $d_H(\bm Y,T_{\bm a}\bm X)\le 2t$ for some $\bm a\in\Z_k^d$. For each $\bm a$, there are $V_q(k^d,2t)$ arrays within distance $2t$ of $T_{\bm a}\bm X$. Hence each selection removes at most $k^dV_q(k^d,2t)$ arrays, so
\[
|\cC|\ge\frac{|\mathcal S|}{k^dV_q(k^d,2t)}.
\]
By construction, every selected array is farther than $2t$ from each of its nontrivial translates, and every later selected array is farther than $2t$ from every translate of each earlier selected array. Thus $\cC$ is a $t$-substitution-correcting $d$D-CPC. Combining the two bounds gives the first lower bound in the theorem.

For the second lower bound, note that every nonzero translation has order $e>1$, so $q^{k^d/e}\le q^{k^d/2}$. Moreover, $\sum_{e\mid k}J_d(e)=k^d$, since this sum counts all translations $T_{\bm a}$. Hence $\sum_{e\mid k,~e>1}J_d(e)=k^d-1$. Substituting these relations into the first lower bound yields the second lower bound.

For the upper bound, we use a sphere-packing argument. Let $\cC$ be a $t$-substitution-correcting $d$D-CPC. The radius-$t$ Hamming balls centered at all translates $T_{\bm a}\bm X$ with $\bm X\in\cC$ and $\bm a\in\Z_k^d$ are pairwise disjoint. Therefore,
\[
|\cC| \cdot k^d \cdot V_q(k^d,t)\le q^{k^d},
\]
which gives the upper bound. This completes the proof.
\end{proof}

\begin{corollary}\label{cor:redundancy-noisy-CPC}
For fixed $q\geq 2$, $d\geq 1$, and $t\geq 1$, as $k\to\infty$, the optimal redundancy of $t$-substitution-correcting $d$D-CPCs ranges from $(t+1)d\log_q k+O(1)$ to $(2t+1)d\log_q k+O(1)$.
\end{corollary}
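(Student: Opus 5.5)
The corollary follows by taking $\log_q$ of the two sides of Theorem~\ref{thm:translation-packing}. We then estimate the Hamming-ball volumes $V_q(k^d,t)$ and $V_q(k^d,2t)$ for fixed $t$ as $k\to\infty$. Write $n=k^d$. Since $t$ is fixed, the last term dominates each ball volume:
\[
\binom{n}{s}(q-1)^s\le V_q(n,s)\le (s+1)\binom{n}{s}(q-1)^s\le (s+1)q^s n^s .
\]
Also $\binom{n}{s}\ge (n/s)^s$. Hence
\[
\log_q V_q(n,s)=s\log_q n+O(1)=s\,d\log_q k+O(1)
\]
for $s\in\{t,2t\}$, with constants depending only on $q,t$.

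For the lower bound on redundancy, the sphere-packing bound gives
\[
\log_q|\cC|\le n-\log_q n-\log_q V_q(n,t).
\]
So every $t$-substitution-correcting $d$D-CPC has
\[
\rho(\cC)\ge d\log_q k+t\,d\log_q k+O(1)=(t+1)d\log_q k+O(1).
\]

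For the upper bound, use the second (simplified) Gilbert--Varshamov-type lower bound of Theorem~\ref{thm:translation-packing}. The only point needing care is that the subtracted term is negligible. We need
\[
(n-1)q^{n/2}V_q(n,2t)\le \tfrac12 q^{n}
\]
for large $k$. By the estimate above, $V_q(n,2t)\le (2t+1)q^{2t}n^{2t}$, so the left side is at most $C n^{2t+1}q^{n/2}$. This is $o(q^n)$ because $q^{n/2}$ grows faster than any polynomial in $n$. Therefore the numerator is at least $\tfrac12 q^n$ for $k$ large, and
\[
A^{\mathrm{CPC}}_q(d,k,t)\ge \frac{q^n}{2nV_q(n,2t)}.
\]
Taking logarithms, the optimal redundancy is at most
\[
\log_q 2+d\log_q k+2t\,d\log_q k+O(1)=(2t+1)d\log_q k+O(1).
\]

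Combining the two directions gives the claimed range. No step is a real obstacle: the result is bookkeeping once the ball-volume asymptotics are in hand. The only thing to state explicitly is that $t$, $q$ and $d$ are fixed, so all $O(1)$ constants (such as $\log_q((s+1)q^s)$ and $s\log_q s$) are uniform in $k$. For small $k$ where the absorption inequality fails, there are finitely many cases, and these are covered by the $O(1)$ convention as $k\to\infty$.
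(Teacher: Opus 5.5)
Your proposal is correct and follows the paper's proof. Both arguments take $\log_q$ of the sphere-packing bound and of the simplified Gilbert--Varshamov bound in Theorem~\ref{thm:translation-packing}, and both use $\log_q V_q(k^d,s)=s\,d\log_q k+O(1)$ for fixed $s$. The paper invokes the asymptotic $V_q(k^d,r)=\frac{(q-1)^r}{r!}k^{rd}(1+o(1))$ and shows the subtracted term is $o(1)$ relative to $q^{k^d}$. You instead use elementary two-sided bounds and absorb the subtracted term with a factor of $2$, which only adds $\log_q 2$ to the $O(1)$.
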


\begin{proof}
For fixed $r$, the volume of a Hamming ball satisfies $V_q(k^d,r)=\frac{(q-1)^r}{r!}k^{rd}\big(1+o(1)\big)$.
Thus, for fixed $q$, $d$, and $t$, as $k\to\infty$, we have $(k^d-1)q^{-k^d/2}V_q(k^d,2t)=o(1)$.
Applying Theorem~\ref{thm:translation-packing}, we obtain
\begin{align*}
k^d-\log_q\Acpc_q(d,k,t)
&\geq d\log_q k+\log_q V_q(k^d,t)
= (t+1)d\log_q k+O(1),\\
k^d-\log_q\Acpc_q(d,k,t)
&\leq d\log_q k+\log_q V_q(k^d,2t)+o(1)
= (2t+1)d\log_q k+O(1).
\end{align*}
This completes the proof.
\end{proof}

\section{Construction of Noiseless Two-Dimensional CPCs}\label{sec:noiseless}

In this section, we consider the two-dimensional case, which serves as a motivating special case for the higher-dimensional constructions developed in subsequent sections. We present three constructions for CPCs.

\subsection{Row-Column Marker Based Constructions}\label{subsec:marker}

The first construction reserves row~$0$ and column~$0$ as all-zero markers. Every other row and column must contain a nonzero symbol, so the two marker
lines remain unique after a cyclic translation.

Let $\cG^{\mathrm{rc}}_{k-1}$ be the set of
$(k-1)\times(k-1)$ arrays in which every row and column contains a nonzero
symbol. For $\bm U\in\cG^{\mathrm{rc}}_{k-1}$, let
$\bm B^{\mathrm{rc}}(\bm U)$ be the $k\times k$ array whose row~$0$ and
column~$0$ are zero and whose internal block is $\bm U$. Define
\[
\cB^{\mathrm{rc}}_{2,k}
=
\{\bm B^{\mathrm{rc}}(\bm U):\bm U\in\cG^{\mathrm{rc}}_{k-1}\}.
\]

\begin{lemma}\label{lem:unique-zero-2d}
For every $\bm B^{\mathrm{rc}}(\bm U)\in\cB^{\mathrm{rc}}_{2,k}$,
row~$0$ and column~$0$ are the unique all-zero row and column.
\end{lemma}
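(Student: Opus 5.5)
The plan is to verify the two claims separately: that row~$0$ and column~$0$ are all-zero, and that no other row or column is. Throughout, write $\bm B=\bm B^{\mathrm{rc}}(\bm U)$ with $B_{i,j}=U_{i-1,j-1}$ for $i,j\in[k]^+$, where $\bm U$ is indexed by $[k-1]\times[k-1]$. The fact that row~$0$ and column~$0$ of $\bm B$ are identically zero holds by construction.

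For uniqueness among rows, fix $i\in[k]^+$. Row $i$ of $\bm B$ is $(0,U_{i-1,0},\ldots,U_{i-1,k-2})$, since its entry in column~$0$ is zero and the remaining entries are row $i-1$ of $\bm U$. Because $\bm U\in\cG^{\mathrm{rc}}_{k-1}$, row $i-1$ of $\bm U$ contains a nonzero symbol. Hence row $i$ of $\bm B$ is not all-zero. The same argument applied to column $j\in[k]^+$, using the column condition in the definition of $\cG^{\mathrm{rc}}_{k-1}$, shows that column $j$ is not all-zero.

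There is no real obstacle here; the only care needed is the index bookkeeping between $\bm U$ and the internal block of $\bm B$. It may also be worth recording the consequence used next. After any translation $T_{\bm a}$, rows and columns are cyclically permuted. The translate therefore still has exactly one all-zero row and one all-zero column, and their positions are $-a_1$ and $-a_2$ modulo $k$, which recovers $\bm a$.
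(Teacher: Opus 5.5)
Your proof is correct and matches the paper's argument: row~$0$ and column~$0$ are zero by construction, and every other row or column of $\bm B^{\mathrm{rc}}(\bm U)$ contains a full row or column of $\bm U$, which is nonzero because $\bm U\in\cG^{\mathrm{rc}}_{k-1}$. Your closing remark that the translate's zero row and zero column sit at indices $-a_1$ and $-a_2$ is also exactly how the paper uses this lemma in the proof of Theorem~\ref{thm:rc2-noiseless}.
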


\begin{proof}
The two marker lines are zero by construction. Every other row and column
contains a row or column of $\bm U$, and is therefore nonzero.
\end{proof}

\begin{theorem}\label{thm:rc2-noiseless}
The code $\cB^{\mathrm{rc}}_{2,k}$ is a $2$D-CPC. For fixed $q$, as $k\rightarrow \infty$, its redundancy $\rho(\cB^{\mathrm{rc}}_{2,k})\leq2k-1+o(1)$.
\end{theorem}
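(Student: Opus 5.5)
The proof splits into two parts: the CPC property, which follows from Lemma~\ref{lem:unique-zero-2d}, and a counting estimate for $|\cG^{\mathrm{rc}}_{k-1}|$.

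\emph{CPC property.} Suppose $\bm Z=T_{\bm a}\bm X$ with $\bm X=\bm B^{\mathrm{rc}}(\bm U)\in\cB^{\mathrm{rc}}_{2,k}$ and $\bm a=(a_1,a_2)$. Row $i$ of $\bm Z$ is row $i+a_1$ of $\bm X$, up to a cyclic shift. A cyclic shift does not change whether a line is all-zero. By Lemma~\ref{lem:unique-zero-2d}, $\bm Z$ therefore has a unique all-zero row, at index $-a_1 \bmod k$. Likewise, its unique all-zero column is at index $-a_2$. Hence the decoder reads off $\bm a$ and recovers $\bm X=T_{-\bm a}\bm Z$.

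This shows the pair $(\bm X,\bm a)$ is determined by $\bm Z$. In particular, each codeword is atranslational, and two codewords in a common necklace would yield the same $\bm Z$ with different pairs. Also, the map $\bm U\mapsto\bm B^{\mathrm{rc}}(\bm U)$ is injective, so $|\cB^{\mathrm{rc}}_{2,k}|=|\cG^{\mathrm{rc}}_{k-1}|$.

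\emph{Redundancy.} Set $m=k-1$. The arrays in $\Sigma_q^{m\times m}$ not in $\cG^{\mathrm{rc}}_{m}$ have some all-zero row or some all-zero column. A union bound over the $2m$ lines gives
\[
|\cG^{\mathrm{rc}}_m|\ge q^{m^2}-2m\,q^{m^2-m}=q^{m^2}\bigl(1-2mq^{-m}\bigr).
\]
Therefore
\[
\rho\le k^2-m^2-\log_q\bigl(1-2mq^{-m}\bigr)=2k-1-\log_q\bigl(1-2mq^{-m}\bigr).
\]
Since $2mq^{-m}\to0$ for fixed $q$, the last term is $o(1)$, which gives $\rho(\cB^{\mathrm{rc}}_{2,k})\le 2k-1+o(1)$.

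I do not expect a real obstacle. The only point needing care is the first part: checking that translation maps whole rows to whole rows, and that unique decodability of $(\bm X,\bm a)$ is exactly the CPC condition of Definition~\ref{def:cpc}.
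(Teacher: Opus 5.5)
Your proposal is correct and follows essentially the same route as the paper: you identify the translation from the unique all-zero row and column given by Lemma~\ref{lem:unique-zero-2d}, then bound $|\cG^{\mathrm{rc}}_{k-1}|\ge q^{(k-1)^2}\bigl(1-2(k-1)q^{-(k-1)}\bigr)$ by a union bound over the $2(k-1)$ internal lines. The only difference is presentational: the paper phrases the union bound probabilistically, while you phrase it as a count, and you additionally spell out why rows map to cyclically shifted rows.
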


\begin{proof}
We establish the CPC property by providing an efficient decoder. Suppose $\bm{Z}=T_{a,b}\bm{B}^{\mathrm{rc}}(\bm{U})$ for some $\bm{U}\in\cG^{\mathrm{rc}}_{k-1}$ and
$(a,b)\in\Z_k^2$. By Lemma~\ref{lem:unique-zero-2d}, the unique all-zero row and column of $\bm{Z}$ occur at indices $-a$ and $-b$, respectively. These two indices give the inverse translation $T_{-a,-b}$. Applying it to $\bm{Z}$ recovers $\bm{B}^{\mathrm{rc}}(\bm{U})$.
Hence, $\cB^{\mathrm{rc}}_{2,k}$ is a $2$D-CPC.

For the redundancy bound, choose
the internal block $\bm{U}$ uniformly at random. A fixed row or
column is all zero with probability $q^{-(k-1)}$. The union bound
over the $2(k-1)$ internal rows and columns gives
\[
\Pr\!\left(
\bm{U}\notin\cG^{\mathrm{rc}}_{k-1}
\right)
\le
2(k-1)q^{-(k-1)}.
\]
Since there are $q^{(k-1)^2}$ internal arrays, we can compute 
\[
|\cB^{\mathrm{rc}}_{2,k}|
\geq
q^{(k-1)^2}\bigl(1-2(k-1)q^{-(k-1)}\bigr).
\]
As $k\rightarrow \infty$, we have $2(k-1)q^{-(k-1)}=o(1)$, and therefore
\[
\rho\!\left(\cB^{\mathrm{rc}}_{2,k}\right)
\leq k^2-(k-1)^2+o(1)= 2k-1+o(1).
\]
This completes the proof.
\end{proof}

We now give a simple systematic encoder and its decoder. Set row $0$ and column $0$ to zero, set the internal main diagonal to $1$, and store the message in the remaining $(k-1)(k-2)$ internal positions in a fixed order. This gives an injective encoder for a subcode of $\mathcal B^{\mathrm{rc}}_{2,k}$ with redundancy $k^2-(k-1)(k-2)=3k-2$. To decode a translated array, first locate its unique zero row and zero column, then invert the translation, and finally read the message from the non-diagonal internal positions. Both encoding and decoding take $O(k^2)$ time.

\subsection{Row-Anchor Based Construction}\label{subsec:row-anchor}

The second construction uses one row as an anchor. The anchor belongs to a
one-dimensional CPC, while every other row lies outside all cyclic shifts of
that CPC. The anchor position gives the row shift, and the one-dimensional
CPC decoder gives the column shift.

Let $\cA\subseteq\Sigma_q^k$ be a one-dimensional CPC of size $s$, and let $T_b$ denote the cyclic translation by $b$ positions. Define the \emph{translation closure} of $\cA$ by
\[
\cU(\cA)
=
\left\{
T_b\bm{u}:
\bm{u}\in\cA,\ b\in\Z_k
\right\}.
\]
Since $\cA$ is a one-dimensional CPC, $|\cU(\cA)|=ks$.
Define the code
\[
\cB^{\mathrm{row}}_{2,k}(\cA)
=
\left\{
\bm{X}\in\Sigma_q^{k\times k}: \operatorname{row}_0(\bm{X})\in\cA,\,
\operatorname{row}_i(\bm{X})\notin\cU(\cA) \text{ for } i\in \Z_k \setminus \{0\}
\right\},
\]
where $\operatorname{row}_i(\bm{X})$ denotes the $i$-th row of $\bm{X}$.

\begin{theorem}\label{thm:row-anchor}
The code $\cB^{\mathrm{row}}_{2,k}(\cA)$ is a $2$D-CPC of size $s(q^k-ks)^{k-1}$.
Moreover, for fixed integer $q\geq 2$, as $k\to\infty$, there exists a choice for $\cA$ such that $\rho\big(\cB^{\mathrm{row}}_{2,k}(\cA)\big)=2\log_q k+\log_q e+o(1)$.
\end{theorem}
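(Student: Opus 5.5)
The plan has three parts: a decoder argument for the CPC property, a direct count for the size, and an optimization over $s$ for the redundancy.

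\textbf{CPC property.} I will give an explicit decoder. Suppose $\bm Z=T_{a,b}\bm X$ with $\bm X\in\cB^{\mathrm{row}}_{2,k}(\cA)$. Row $i$ of $\bm Z$ is $T_b\operatorname{row}_{i+a}(\bm X)$. The closure $\cU(\cA)$ is invariant under cyclic shifts, so this row lies in $\cU(\cA)$ if and only if $\operatorname{row}_{i+a}(\bm X)\in\cU(\cA)$. By the definition of the code, that happens exactly when $i+a=0$, since row $0$ lies in $\cA\subseteq\cU(\cA)$ and every other row lies outside $\cU(\cA)$. Hence $\bm Z$ has a unique row in $\cU(\cA)$, located at index $-a$, and this determines $a$.

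That row equals $T_b\bm u$ with $\bm u=\operatorname{row}_0(\bm X)\in\cA$. Because $\cA$ is a one-dimensional CPC, the pair $(\bm u,b)$ is uniquely determined, so $b$ is recovered. Applying $T_{-a,-b}$ then returns $\bm X$. Thus every received $\bm Z$ determines $(\bm X,(a,b))$, which is exactly Definition~\ref{def:cpc}.

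\textbf{Size.} Row $0$ has $s$ choices. Each of the other $k-1$ rows is chosen independently from $\Sigma_q^k\setminus\cU(\cA)$, which has $q^k-ks$ elements, using $|\cU(\cA)|=ks$. This gives $|\cB^{\mathrm{row}}_{2,k}(\cA)|=s(q^k-ks)^{k-1}$.

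\textbf{Redundancy.} Write $s=\alpha q^k/k$, which requires $\alpha\le 1$ because the $ks$ shifts are distinct. Then
\[
\log_q|\cB^{\mathrm{row}}_{2,k}(\cA)|=k^2-\log_q k+\log_q\alpha+(k-1)\log_q(1-\alpha).
\]
To maximize $\alpha(1-\alpha)^{k-1}$, take $\alpha\approx 1/k$. Then $(1-\alpha)^{k-1}\to e^{-1}$, and the redundancy becomes $2\log_q k+\log_q e+o(1)$.

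The main obstacle is showing that a one-dimensional CPC with $s=(1+o(1))q^k/k^2$ exists. Corollary~\ref{cor:noiseless-cpc-asymptotic} with $d=1$ provides roughly $q^k/k$ atranslational necklaces, far more than needed. So I will take $\cA$ to consist of $s=\lfloor q^k/k^2\rfloor$ representatives of distinct atranslational necklaces, which is possible for large $k$.

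It remains to check that the rounding in $s$ and the correction $k^{-1}q^{k}$ versus $\lfloor\cdot\rfloor$ contribute only $o(1)$. This follows because $\log_q(s\,k^2/q^k)\to0$ and $(k-1)\log_q(1-ks/q^k)=-\log_q e+o(1)$. An upper bound showing that this is exactly the optimum over $s$ is not required, since the statement only asserts existence, so the computation above suffices.
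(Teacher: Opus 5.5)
Your proposal is correct and follows the paper's proof essentially step for step. You use the same decoder (the unique row lying in $\cU(\cA)$ sits at index $-a$, and the one-dimensional CPC property of $\cA$ then yields $b$), the same row-by-row count $s(q^k-ks)^{k-1}$, and the same choice of an anchor subcode of size $(1+o(1))q^k/k^2$ taken from the roughly $q^k/k$ atranslational necklaces supplied by Corollary~\ref{cor:noiseless-cpc-asymptotic}, so that $\bigl(1-(1+o(1))/k\bigr)^{k-1}\to e^{-1}$; your $\alpha$-parametrization and explicit choice $s=\lfloor q^k/k^2\rfloor$ simply make the paper's $(1-o(1))$ bookkeeping more concrete.
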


\begin{proof}
We establish the CPC property by describing a decoder. Suppose $\bm{Z}=T_{a,b}\bm{X}$ for some $\bm{X}\in\cB^{\mathrm{row}}_{2,k}(\cA)$ and $(a,b)\in\Z_k^2$. Since $\cU(\cA)$ is closed under cyclic translations, a row of $\bm{Z}$ belongs to $\cU(\cA)$ if and only if the corresponding row of $\bm{X}$ belongs to $\cU(\cA)$. By construction, row~$0$ is the unique row of $\bm{X}$ in $\cU(\cA)$. Hence $\bm{Z}$ has a unique such row. If its index is $i_0$, then $i_0+a\equiv0\pmod{k}$, so $a\equiv-i_0\pmod{k}$. Thus the row translation is uniquely determined.
Moreover, $\operatorname{row}_{i_0}(\bm{Z})=T_b(\operatorname{row}_0(\bm{X}))$. Since $\operatorname{row}_0(\bm{X})\in\cA$ and $\cA$ is a one-dimensional CPC, its decoder recovers the shift $b$. Hence $\bm{Z}$ uniquely determines the pair $(a,b)$, and applying the inverse translation recovers $\bm{X}$. This implies that  $\cB^{\mathrm{row}}_{2,k}(\cA)$ is a $2$D-CPC.

For the code size, row~$0$ has $s$ choices, while each of the remaining $k-1$ rows has $q^k-ks$ choices. Hence $\big|\cB^{\mathrm{row}}_{2,k}(\cA)\big|=s(q^k-ks)^{k-1}$.
By Corollary~\ref{cor:noiseless-cpc-asymptotic}, a largest one-dimensional CPC has size $\big(1-o(1)\big)q^k/k$. We can therefore choose a one-dimensional CPC $\cA$ of size
$s= \bigl(1-o(1)\bigr)q^k/k^2$.
With this choice,
\[
(q^k-ks)^{k-1}
=
q^{k(k-1)}
\left(1-\frac{1-o(1)}{k}\right)^{k-1}
=
\frac{q^{k(k-1)}}{e}\bigl(1-o(1)\bigr),
\]
where the last step follows from $\left(1-\frac{1}{k}\right)^{k}\to \frac{1}{e}$. Therefore,
\begin{equation}\label{eq:2d-row-anchor-size}
|\cB^{\mathrm{row}}_{2,k}(\cA)|
=
\frac{q^k}{k^2}\big(1-o(1)\big)\cdot \frac{q^{k(k-1)}}{e}\bigl(1-o(1)\bigr)
=
\frac{q^{k^2}}{e\,k^2}\bigl(1-o(1)\bigr).
\end{equation}
It follows that
\[
\rho\!\left(\cB^{\mathrm{row}}_{2,k}(\cA)\right)
=
2\log_q k+\log_q e+o(1),
\]
which completes the proof.
\end{proof}

We next give an explicit encoder and decoder for
$\cB^{\mathrm{row}}_{2,k}(\cA)$ with redundancy
$2\log_q k+O(1)$, which is optimal up to an additive constant.
For the anchor row, we use the construction of mutually uncorrelated
codes\footnote{One-dimensional mutually uncorrelated codes, also known as non-overlapping codes or cross-bifix-free codes, are defined by the property that no nonempty proper prefix of any codeword is a suffix of any codeword. It follows from the definition that such codes are $1$D-CPCs. In higher dimensions, the relation depends on the overlap convention and requires a separate argument, which we do not pursue here. For multidimensional non-overlapping codes, see \cite{CaiWangFeng2025,QinZhang2026}.}
from~\cite{LevyYaakobi2019}.

\begin{lemma}\label{lem:non-overlapping}
Let
\[
    \cA=
    \left\{
        0^f1\bm{u}1\in \Sigma_q^k:
        \bm{u}\in\Sigma_q^{k-f-2}
        \text{ does not contain }0^f
    \right\}
\]
be the mutually uncorrelated code defined in~\cite{LevyYaakobi2019}.
Then $\cA$ is also a $1$D-CPC. Moreover, if
$\bm{x}\in\Sigma_q^k$ contains no cyclic occurrence of $0^f$,
then $\bm{x}\notin\cU(\cA)$.
\end{lemma}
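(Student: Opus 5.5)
The proof has two parts: the $1$D-CPC property of $\cA$, and the claim that words with no cyclic occurrence of $0^f$ lie outside $\cU(\cA)$. Throughout, $\bm{x}=0^f1\bm{u}1\in\cA$, so $\bm{x}$ has length $k$. I assume $k\ge f+2$, and implicitly $k-f-2$ large enough that the set is nonempty.

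I would prove the second claim first, since it is immediate. Every codeword of $\cA$ begins with $0^f$ in its first $f$ positions. A cyclic shift $T_b$ of any word preserves its collection of cyclic substrings, so every element of $\cU(\cA)$ contains a cyclic occurrence of $0^f$. Taking the contrapositive gives the claim.

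For the CPC property I would show a stronger statement: every codeword $\bm{x}\in\cA$ has exactly one cyclic occurrence of $0^f$, namely at position $0$. The cyclic sequence is $0^f,1,\bm{u},1$, and the argument rests on locating where a run of $f$ consecutive zeros could sit.
\begin{itemize}
\item A cyclic window of length $f$ other than the one starting at $0$ either lies entirely inside $\bm{u}$, or covers one of the two symbols $1$ at positions $f$ and $k-1$.
\item In the first case the window is not $0^f$, because $\bm{u}$ contains no $0^f$ by definition.
\item In the second case the window contains a nonzero symbol, so again it is not $0^f$.
\end{itemize}
The step that needs care is the wrap-around windows starting in $[k-f,k-1]$. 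Such a window covers position $k-1$, which holds the symbol $1$, unless it starts exactly at $0$. So the unique-occurrence claim also holds for these windows.

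With uniqueness in hand, suppose $T_b\bm{x}=T_{b'}\bm{x}'$ with $\bm{x},\bm{x}'\in\cA$. The unique cyclic occurrence of $0^f$ in $T_b\bm{x}$ sits at index $-b \bmod k$, and in $T_{b'}\bm{x}'$ it sits at $-b'$. Hence $b=b'$, and applying $T_{-b}$ gives $\bm{x}=\bm{x}'$. This shows at once that each codeword is atranslational and that distinct codewords lie in distinct necklaces, so $\cA$ is a $1$D-CPC.

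The decoder follows directly: locate the unique cyclic run $0^f$ in the received word. This also matches the remark that mutually uncorrelated codes are $1$D-CPCs, but the direct argument above avoids any appeal to that general fact.
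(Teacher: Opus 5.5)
Your proof is correct and follows the same route as the paper: $0^f$ occurs cyclically exactly once in each codeword, at the prefix, so its position in a shifted word determines the shift, and the second claim is the contrapositive of the fact that every element of $\cU(\cA)$ contains a cyclic $0^f$. The only difference is that you check the uniqueness of that cyclic occurrence by an explicit case analysis over window positions, which the paper simply asserts.
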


\begin{proof}
Every sequence in $\cA$ contains exactly one cyclic occurrence of
$0^f$, namely its prefix. Suppose $\bm{Z}=T_a\bm{X}$ for some
$\bm{X}\in\cA$ and $a\in\Z_k$. If the unique cyclic occurrence
of $0^f$ in $\bm{Z}$ starts at position $i_0$, then $a\equiv-i_0\pmod{k}$. Applying the inverse shift recovers $\bm{X}$. Thus $\cA$ is a
$1$D-CPC.

Since every sequence in $\cU(\cA)$ contains a cyclic occurrence of
$0^f$, any $\bm{x}\in\Sigma_q^k$ with no such occurrence cannot
belong to $\cU(\cA)$. This completes the proof.
\end{proof}

For $\lceil\log_q k\rceil+1\le f\le k-3$, the sequence replacement technique of \cite[Section III-B]{LevyYaakobi2019} yields an encoder $\mathrm{ENC}^{\cA}:\Sigma_q^{k-f-3}\to\cA$ together with its inverse $\mathrm{DEC}^{\cA}$, both running in $O(k)$ time. To complete the encoder, it remains to encode the remaining $k-1$ rows so that none contains a cyclic occurrence of $0^f$.

Let $N$ be a positive multiple of $k$. We regard a word in
$\Sigma_q^N$ as an $(N/k)\times k$ array filled row by row and set
\[
f=\left\lceil\log_q N\right\rceil+1.
\]
We assume that $f\leq k$. This holds for all sufficiently large $k$ in
the applications below. For $p\in\{0,\ldots,N-1\}$, let
$\operatorname{Rep}_{q,f-1}(p)$ denote the length-$(f-1)$
$q$-ary representation of $p$. 

A \emph{row-cyclic occurrence} of $0^f$ starting at $i=rk+c$ consists of
the positions
\[
rk+\bigl((c+\ell)\bmod k\bigr),\qquad 0\leq\ell<f.
\]
When $c\leq k-f$, these positions form one contiguous block. When
$c>k-f$, set $b=f-(k-c)$, the occurrence consists of the last
$k-c$ positions and the first $b$ positions of row $r$. In the
latter case, deleting the occurrence means that these two blocks
are deleted simultaneously and the remaining symbols retain their
relative order.

\begin{algorithm}[t]
\caption{Row-cyclic RLL encoder $\mathrm{ENC}^{\mathrm{RLL}}_{N,k}$}
\label{alg:row-cyclic-encoder}
\begin{algorithmic}[1]
\Require $\bm{x}\in\Sigma_q^{N-1}$, where $k\mid N$
\Ensure $\bm{y}\in\Sigma_q^N$ with no row-cyclic occurrence of $0^f$, where  $f=\lceil\log_qN\rceil+1\leq k$
\State $\bm{y}\gets\bm{x}0$, $i\gets0$
\While{$i<N$}
    \State Find the first row-cyclic $0^f$ whose starting position
           $i'\geq i$
    \If{no such occurrence exists}
        \State \Return $\bm{y}$
    \EndIf
    \State Write $i'=rk+c$
    \If{$c\leq k-f$}
        \State Delete $\bm{y}[i',i'+f-1]$
        \State $i\gets i'$
    \Else
        \State $b\gets f-(k-c)$
        \State Delete
               $\bm{y}_{[rk,rk+b-1]}$ and
               $\bm{y}_{[i',rk+k-1]}$
        \State $i\gets i'-b$
    \EndIf
    \State Append $\operatorname{Rep}_{q,f-1}(i')\circ1$ to
           $\bm{y}$
\EndWhile
\State \Return $\bm{y}$
\end{algorithmic}
\end{algorithm}

\begin{algorithm}[t]
\caption{Row-cyclic RLL decoder $\mathrm{DEC}^{\mathrm{RLL}}_{N,k}$}
\label{alg:row-cyclic-decoder}
\begin{algorithmic}[1]
\Require $\bm{y}$ produced by
         Algorithm~\ref{alg:row-cyclic-encoder}
\Ensure $\bm{x}\in\Sigma_q^{N-1}$
\While{the last symbol of $\bm{y}$ is $1$}
    \State Read the last $f$ symbols as
           $\operatorname{Rep}_{q,f-1}(i')\circ1$ and delete them
    \State Write $i'=rk+c$
    \If{$c\leq k-f$}
        \State Insert $0^f$ at position $i'$
    \Else
        \State $b\gets f-(k-c)$
        \State Insert $0^b$ at position $rk$
        \State Insert $0^{k-c}$ at position $rk+c$
    \EndIf
\EndWhile
\State Delete the final symbol $0$
\State \Return the remaining word as $\bm{x}$
\end{algorithmic}
\end{algorithm}

\begin{theorem}\label{thm:row-cyclic-rll}
Algorithms~\ref{alg:row-cyclic-encoder} and
\ref{alg:row-cyclic-decoder} are mutually inverse. The encoder output
contains no row-cyclic occurrence of $0^f$. For fixed $q$, both
algorithms can be implemented in $O(N\log^2 N)$ time.
\end{theorem}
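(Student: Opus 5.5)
We follow the standard sequence-replacement analysis of \cite{LevyYaakobi2019}, adapted to row-cyclic occurrences. The proof has three parts: termination together with a length invariant, the output property, and invertibility. The running time is handled last.

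\emph{Length invariant and termination.} Each iteration deletes exactly $f$ symbols and appends $f$ symbols, namely $\operatorname{Rep}_{q,f-1}(i')\circ1$. The length therefore stays $N$, and the row structure of $\bm{y}$ as an $(N/k)\times k$ array is always well defined. The representation fits because $i'<N\le q^{f-1}$, which follows from $f=\lceil\log_q N\rceil+1$.

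To show termination, I would use a potential function. Call the region after the untouched prefix the ``tail of appended records''. Each appended record ends in $1$, and each record is one of at most $N$ strings, each of length $f$. The number of records is at most $N/f$, since each record replaces $f$ symbols of the original word. I would argue that every iteration removes at least one original data symbol. Suppose instead that an occurrence of $0^f$ lay entirely inside the appended tail. Because records end in $1$, such an occurrence would need $f$ consecutive zeros within a record. But a record contains only $f-1$ symbols before its terminating $1$, so it cannot be all zero. A row wrap-around spanning two records also contains a $1$. Hence no occurrence can live entirely in the tail, so each iteration strictly decreases the number of original symbols and the loop terminates.

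The delicate point is the index reset $i\gets i'$, or $i\gets i'-b$ in the wrap case. I must check that no occurrence starting before $i$ can exist after a deletion. Before position $i'$ there were no occurrences. A deletion can only create a new $0$-run across the splice point, and in the wrap case it shifts row $r$ left by $b$. The new starting index is therefore at least the reset value, so restarting the search there misses nothing.

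\emph{Output property.} When the loop exits, the search from $i$ found nothing. By the invariant just described, nothing exists before $i$ either, so the output has no row-cyclic $0^f$.

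\emph{Invertibility.} I would argue by induction on the number of iterations, proving the following state invariant. The word $\bm{y}$ has the form (data part)$\circ$(records), where the data part ends with the sentinel $0$ and every record ends with $1$. The last symbol therefore distinguishes ``more records remain'' (it is $1$) from ``done'' (it is $0$, the sentinel). The sentinel is never deleted, and this is the main obstacle. It could fail only if the sentinel were part of a deleted $0^f$. I would handle this case by observing that such a deletion removes the sentinel, after which the appended record ends in $1$. To rule it out, or to treat it consistently, I would interpret the decoder's last step as the literal inverse of the initial append. The decoder undoes the records in LIFO order. Each undo reinserts $0^f$ at $i'$, or $0^b$ at $rk$ and $0^{k-c}$ at $rk+c$. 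This exactly reverses the corresponding deletion, because the positions $i'$, $r$, $c$ refer to the word as it was at that step, and all later modifications were already undone. Consequently the decoder returns $\bm{x}0$, and deleting the sentinel gives $\bm{x}$.

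\emph{Complexity.} There are at most $N$ iterations. Each deletion, insertion, and search can be implemented in $O(\log^2 N)$ amortized time. For this I would keep $\bm{y}$ in a balanced order-statistic tree (or rope) supporting split and concatenate in $O(\log N)$, maintain a set of current run starts, and rescan only an $O(f)=O(\log N)$ window around each splice. Together these give the stated $O(N\log^2 N)$ bound for both algorithms.
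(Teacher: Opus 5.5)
\textbf{There is a genuine gap: the termination argument fails.} Your key claim is that no occurrence of $0^f$ can lie entirely in the appended region. This is false for wrap-around occurrences. A row-cyclic window with $c>k-f$ consists of a suffix and a prefix of the same row, and these are not adjacent in $\bm{y}$. So once a whole row lies among the appended pointers, such a window need not contain any terminating $1$.

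Here is a concrete instance. Take $q=2$, $k=8$, $N=32$ (so $f=6$), and $\bm{x}=0^{31}$.
\begin{itemize}
\item Each of the first six iterations finds $i'=0$ and appends $000001$.
\item The sixth deletion removes the last two original zeros, including the sentinel, together with the four leading zeros of the first pointer. This leaves $\bm{y}=01(000001)^5$, which contains no original symbol.
\item Nevertheless row~$1$ equals $00000100$. The seventh search finds the row-cyclic occurrence at $i'=14$, on positions $14,15,8,9,10,11$, built entirely from leading zeros of two pointers.
\end{itemize}
So the number of original symbols is not a valid potential. Your bound of $N/f$ records is also false: six pointers have already been appended and $6>32/6$.

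The missing idea is the potential used in the paper's proof. Every replacement deletes only zeros and appends a pointer ending in $1$, so the number of nonzero symbols increases by at least one per iteration. Since this number never exceeds $N$, there are at most $N$ iterations.

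\textbf{Secondary issues.} The same example shows that your invertibility invariant (a data part ending in the sentinel, followed by intact records) is not preserved: the sentinel is deleted, and pointers can be partially consumed. You do not need this invariant. The decoder is correct because $\bm{x}0$ ends in $0$, while every word obtained after at least one replacement ends in $1$. Moreover, each decoding step exactly undoes the most recent replacement, since the deleted symbols were zeros and $i'$ refers to the word at that step. Your LIFO argument already contains this, so you can drop the structural invariant.

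Your justification of the index reset is asserted rather than proved. Argue by minimality of $i'$. Suppose a new all-zero window started before $i'$ in the noncrossing case, or at $rk+s$ with $s<k-f$ in the crossing case. Restoring the deleted zeros would then produce an all-zero window in the previous word at the same start, respectively at $s+b<c$. This contradicts the choice of $i'$.
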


\begin{proof}
We first justify the scanning invariant: at the beginning of each
iteration, no row-cyclic occurrence of $0^f$ starts before $i$. Let the
first occurrence found start at $i'=rk+c\geq i$.
\begin{itemize}
    \item If $c\leq k-f$, all rows preceding row $r$ are unchanged. A window in row $r$ starting before $c$ either is unchanged or spans the gap left by the deletion. If such a spanning window were all zero, restoring the deleted zeros would yield an all-zero length-$f$ window at the same earlier position in the previous row, contradicting the choice of $i'$. Thus scanning may resume at $i'$.

\item If $c>k-f$, set $b=f-(k-c)$. The retained portion of row $r$ shifts left by $b$ positions, and $i'-b=rk+k-f\geq rk$. For a new window starting at $rk+s$ with $s<k-f$, its retained symbols correspond to the old row window starting at $s+b<c$. Any additional symbols needed to complete the length-$f$ cyclic window at $s+b$ lie in the deleted zero suffix or prefix. Hence, if such a new window were all zero, the corresponding old window would also have been all zero at an earlier position. Since rows before row $r$ are unchanged, it suffices to resume scanning from $i'-b$.
\end{itemize}
The invariant is preserved in both cases, and the restart is at most
$f-1$ positions before the selected starting position.

Each replacement removes only zeros and appends a pointer ending in
$1$. Hence the number of nonzero symbols increases by at least one.
Since the word always has length $N$, at most $N$ replacements can
occur. The encoder therefore terminates, and its stopping condition
ensures that the output contains no row-cyclic occurrence of $0^f$.

We next verify the decoder. The initial word $\bm{x}0$ ends in $0$,
whereas the word after every replacement ends in $1$. Thus the final
$f$ symbols identify the pointer appended in the most recent
replacement. In the noncrossing case, the decoder simply reinserts
$0^f$ at $i'$. In the crossing case, it first inserts
$b=f-(k-c)$ zeros at $rk=i'-c$ and then inserts the remaining
$k-c$ zeros at $i'$.
Consequently, each decoding step exactly reverses the most recent
replacement. Repeating this procedure recovers $\bm{x}0$, after
which the final zero is removed.

Finally, let $R\le N$ be the number of replacements. The total number of scanned starting positions is $O(N+Rf)=O(Nf)$. Since checking each window takes $O(f)$ time, the total scanning time is $O(Nf^2)$. For each replacement, encoding or decoding the position $i<N$ amounts to converting between $i$ and its $q$-ary representation, which takes $O(f)$ time. All replacements thus take $O(Rf)=O(Nf)$ time. The total encoding and decoding time is therefore $O(Nf^2)=O(N\log^2 N)$. This completes the proof.
\end{proof}

For the two-dimensional construction, take $N=k(k-1)$ and assume
$f+3\leq k$. Combining the encoder and decoder for the anchor row
with Algorithms~\ref{alg:row-cyclic-encoder} and
\ref{alg:row-cyclic-decoder} for the remaining $k-1$ rows gives an
explicit subcode of $\cB^{\mathrm{row}}_{2,k}(\cA)$. The anchor row
uses $f+3$ redundant symbols, while the remaining rows jointly use
one redundant symbol. Hence the overall redundancy is
\[
    f+3+1
    =
    \left\lceil\log_q k(k-1)\right\rceil+5
    =
    2\log_q k+O(1),
\]
which is optimal up to an additive constant.
For fixed $q$, encoding and decoding take $O(k^2\log^2 k)$ time.

\subsection{Moment-Syndrome Based Construction}\label{subsec:moment-2d}

The third construction uses no marker. Instead, each array is equipped with a pair of moments that transform predictably under translation. By prescribing their target values, we recover the shift by solving linear congruences, without reserving any rows, columns, or symbols.

For $\bm{X}\in\Sigma_q^{k\times k}$, define its row and column moments by
\begin{align*}
R(\bm{X})=\sum_{(i,j)\in\supp(\bm{X})} i \pmod{k}, \qquad
C(\bm{X})=\sum_{(i,j)\in\supp(\bm{X})} j \pmod{k}.
\end{align*}
For a prescribed syndrome $(r_0,c_0)\in\Z_k^2$, the corresponding \emph{moment-syndrome code} is defined as
\[
\cB^{\mathrm{ms}}_{2,k}(r_0,c_0)
=
\left\{
\bm{X}\in\Sigma_q^{k\times k}:
\gcd(\wt(\bm{X}),k)=1,\;
R(\bm{X})=r_0,\;
C(\bm{X})=c_0
\right\}.
\]

\begin{theorem}\label{thm:moment-2d}
For every $(r_0,c_0)\in\Z_k^2$, the code
$\cB^{\mathrm{ms}}_{2,k}(r_0,c_0)$ is a $2$D-CPC of size
\begin{equation*}
\left|\cB^{\mathrm{ms}}_{2,k}(r_0,c_0)\right|
=
\frac{1}{k^2}
\sum_{\substack{0\le \ell\le k^2\\ \gcd(\ell,k)=1}}
\binom{k^2}{\ell}(q-1)^\ell.
\end{equation*}
\end{theorem}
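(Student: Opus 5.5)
The plan is to understand how the moments change under a translation. Let $\bm X$ have weight $w$ and $\bm Z=T_{a,b}\bm X$. Then $(\bm Z)_{i,j}=\bm X_{i+a,j+b}$, so
\[
\supp(\bm Z)=\supp(\bm X)-(a,b).
\]
Summing the row indices over the support, and likewise the column indices, gives
\[
R(\bm Z)=R(\bm X)-wa \pmod k,\qquad C(\bm Z)=C(\bm X)-wb \pmod k.
\]
Translation also preserves the weight, so $\wt(\bm Z)=w$.

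\emph{CPC property.} Suppose $\bm X$ is a codeword, so $R(\bm X)=r_0$ and $C(\bm X)=c_0$. The decoder reads off $w=\wt(\bm Z)$, which is coprime to $k$ by the defining condition of the code. It then solves $wa\equiv r_0-R(\bm Z)$ and $wb\equiv c_0-C(\bm Z)\pmod k$. Since $w$ is invertible modulo $k$, these congruences have unique solutions $a$ and $b$. Applying $T_{-a,-b}$ recovers $\bm X$.

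This argument shows more: any array in the orbit of $\bm X$ that also lies in the code must have the same moments, and this forces $a=b=0$. If $T_{\bm a}\bm X=\bm X$, then equating moments forces $\bm a=\bm 0$, so $\bm X$ is atranslational. Similarly, two codewords in the same necklace must coincide. This is exactly Definition~\ref{def:cpc}.

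\emph{Size.} Let $\cS_\ell$ be the set of arrays of weight $\ell$ with $\gcd(\ell,k)=1$. This set is closed under translation, and $|\cS_\ell|=\binom{k^2}{\ell}(q-1)^\ell$.

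Consider the map $\phi(\bm X)=(R(\bm X),C(\bm X))$ from $\cS_\ell$ to $\Z_k^2$. By the transformation rule, translating by $(a,b)$ changes $\phi$ by $-\ell(a,b)$. Because $\ell$ is invertible modulo $k$, the map $(a,b)\mapsto -\ell(a,b)$ is a bijection of $\Z_k^2$. Hence each orbit has size exactly $k^2$, and within each orbit $\phi$ takes every value of $\Z_k^2$ exactly once.

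Since $\cS_\ell$ is a disjoint union of such orbits, each fiber $\phi^{-1}(r_0,c_0)\cap\cS_\ell$ has size $|\cS_\ell|/k^2$. Summing over all $\ell$ with $\gcd(\ell,k)=1$ gives the stated formula.

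The only delicate point is checking that the moments shift by exactly $-w(a,b)$. This requires that $\supp(\bm Z)$ is the support of $\bm X$ translated by $-(a,b)$, with the index sum taken modulo $k$. Once that is verified, the rest of the argument is straightforward.
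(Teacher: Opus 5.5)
Your proposal is correct and follows essentially the same route as the paper: you use the moment shift rule $R(\bm Z)\equiv R(\bm X)-wa$, $C(\bm Z)\equiv C(\bm X)-wb \pmod{k}$, then invert $w$ modulo $k$ to decode, and count by noting that each necklace of arrays with weight coprime to $k$ has size $k^2$ and meets each syndrome exactly once. The only cosmetic difference is that you count separately within each weight class $\cS_\ell$, whereas the paper works with their union $\cS$ directly.
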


\begin{proof}
We establish the CPC property by providing an efficient decoder. Suppose $\bm{Z}=T_{a,b}\bm{X}$ for some $\bm{X}\in\cB^{\mathrm{ms}}_{2,k}(r_0,c_0)$ and $(a,b)\in\Z_k^2$. Translation preserves Hamming weight and transforms the moments as
\begin{align*}
R(\bm{Z}) &\equiv R(\bm{X}) - a\,\wt(\bm{X}) \pmod{k},\\
C(\bm{Z}) &\equiv C(\bm{X}) - b\,\wt(\bm{X}) \pmod{k}.
\end{align*}
Since the weight is invertible modulo $k$, the translation $(a,b)$ can be recovered by
\begin{align*}
a &\equiv \bigl(r_0-R(\bm{Z})\bigr)\cdot \wt(\bm{Z})^{-1} \pmod{k},\\
b &\equiv \bigl(c_0-C(\bm{Z})\bigr)\cdot \wt(\bm{Z})^{-1} \pmod{k}.
\end{align*}
Applying the inverse translation gives $T_{-a,-b}\bm{Z}=\bm{X}$. Hence $\cB^{\mathrm{ms}}_{2,k}(r_0,c_0)$ is a $2$D-CPC.

For the code size, let
\[
\cS = \left\{ \bm{X}\in\Sigma_q^{k\times k}: \gcd(\wt(\bm{X}),k)=1 \right\}.
\]
The set $\cS$ is invariant under translations. Fix $\bm{X}\in\cS$. As $a$ ranges over $\Z_k$, the quantity $R(\bm{X})-a\,\wt(\bm{X})$ runs through all residues modulo $k$ exactly once. Similarly, as $b$ ranges over $\Z_k$, $C(\bm{X})-b\,\wt(\bm{X})$ does the same. Consequently, the map
\[
(a,b) \longmapsto \bigl( R(T_{a,b}\bm{X}),\, C(T_{a,b}\bm{X}) \bigr)
\]
is a bijection from $\Z_k^2$ onto $\Z_k^2$. In particular, for each prescribed syndrome $(r_0,c_0)$, there is a unique pair $(a,b)\in\Z_k^2$ such that
\[
R(T_{a,b}\bm{X})=r_0, \qquad C(T_{a,b}\bm{X})=c_0.
\]
Thus each translation necklace within $\cS$ has size $k^2$ and contains exactly one representative from $\cB^{\mathrm{ms}}_{2,k}(r_0,c_0)$. It follows that
\[
\left|\cB^{\mathrm{ms}}_{2,k}(r_0,c_0)\right|
=
\frac{|\cS|}{k^2}
=
\frac{1}{k^2}
\sum_{\substack{0\le\ell\le k^2\\\gcd(\ell,k)=1}}
\binom{k^2}{\ell}(q-1)^\ell.
\]
This completes the proof.
\end{proof}

We next estimate the redundancy of
$\cB^{\mathrm{ms}}_{2,k}(r_0,c_0)$ when the side length $k$ is prime.
We use the standard roots-of-unity filter.

\begin{lemma}[Roots-of-unity filter]\label{lem:roots-unity-filter}
Let $F(z)=\sum_{\ell=0}^{n}a_\ell z^\ell$ and
$\omega=e^{2\pi i/k}$ (here $i$ denotes the imaginary unit). Then
\[
\sum_{\substack{0\le \ell\le n\\ k\mid\ell}}a_\ell
=
\frac{1}{k}\sum_{j=0}^{k-1}F(\omega^j).
\]
\end{lemma}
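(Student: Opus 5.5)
The plan is to expand the right-hand side by linearity and use the orthogonality of the $k$-th roots of unity. Substituting $F(\omega^j)=\sum_{\ell=0}^{n}a_\ell\omega^{j\ell}$ and exchanging the two finite sums gives
\[
\frac{1}{k}\sum_{j=0}^{k-1}F(\omega^j)
=
\sum_{\ell=0}^{n}a_\ell\left(\frac{1}{k}\sum_{j=0}^{k-1}\omega^{j\ell}\right).
\]
The statement then reduces to showing that the inner average equals $1$ if $k\mid\ell$ and $0$ otherwise.

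For this character sum, I split into two cases. If $k\mid\ell$, then $\omega^{\ell}=1$, so every term $\omega^{j\ell}$ equals $1$ and the average is $1$. If $k\nmid\ell$, then $\omega^\ell\neq1$, because $\omega=e^{2\pi i/k}$ has multiplicative order exactly $k$. The geometric series formula then gives
\[
\sum_{j=0}^{k-1}\omega^{j\ell}=\frac{\omega^{k\ell}-1}{\omega^\ell-1},
\]
and the numerator vanishes since $\omega^{k\ell}=1$.

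Substituting these values back, only the coefficients $a_\ell$ with $k\mid\ell$ survive, each with weight $1$, which is exactly the claimed identity. There is no real obstacle here. The only point needing care is that $\omega$ is a primitive $k$-th root of unity, so that $\omega^\ell=1$ holds precisely when $k\mid\ell$; this is what justifies dividing by $\omega^\ell-1$ in the second case.
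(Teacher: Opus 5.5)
Your proposal is correct and follows essentially the same route as the paper: interchange the two finite sums, then use the orthogonality relation that $\frac{1}{k}\sum_{j=0}^{k-1}\omega^{j\ell}$ equals $1$ if $k\mid\ell$ and $0$ otherwise. The only difference is that you justify this relation via the geometric series and the primitivity of $\omega$, whereas the paper states it without proof.
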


\begin{proof}
For every integer $\ell$,
\[
\frac{1}{k}\sum_{j=0}^{k-1}\omega^{j\ell}
=
\begin{cases}
1, & k\mid \ell,\\
0, & k\nmid \ell.
\end{cases}
\]
Therefore,
\[
\begin{aligned}
\sum_{\substack{0\le \ell\le n\\ k\mid\ell}}a_\ell
&=
\sum_{\ell=0}^{n}a_\ell
\left(\frac{1}{k}\sum_{j=0}^{k-1}\omega^{j\ell}\right)
=
\frac{1}{k}\sum_{j=0}^{k-1}
\sum_{\ell=0}^{n}a_\ell\omega^{j\ell}
=
\frac{1}{k}\sum_{j=0}^{k-1}F(\omega^j).
\end{aligned}
\]
This completes the proof.
\end{proof}

\begin{lemma}\label{lem:MSC-2d}
For fixed integer $q\geq 2$, as $k\to\infty$ through the primes,
\[
\rho\!\left(\cB^{\mathrm{ms}}_{2,k}(r_0,c_0)\right)
=
2\log_q k+o(1),
\]
which matches the optimal redundancy up to a
vanishing additive term.
\end{lemma}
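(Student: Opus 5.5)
By Theorem~\ref{thm:moment-2d}, we have $|\cB^{\mathrm{ms}}_{2,k}(r_0,c_0)|=|\cS|/k^2$, so the redundancy equals $2\log_q k-\log_q\bigl(|\cS|/q^{k^2}\bigr)$. The plan is to show that $|\cS|/q^{k^2}\to1$ as $k\to\infty$ through the primes; the upper bound $|\cS|\le q^{k^2}$ is trivial. For prime $k$, the condition $\gcd(\ell,k)=1$ is equivalent to $k\nmid\ell$. It is therefore enough to show that the complementary sum
\[
E=\sum_{\substack{0\le\ell\le k^2\\ k\mid\ell}}\binom{k^2}{\ell}(q-1)^\ell
\]
satisfies $E=o(q^{k^2})$. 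Then $|\cS|=q^{k^2}-E=q^{k^2}(1-o(1))$, the redundancy is $2\log_q k-\log_q(1-o(1))=2\log_q k+o(1)$, and matching with the optimum $2\log_q k+o(1)$ from Corollary~\ref{cor:noiseless-cpc-asymptotic} finishes the proof.

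To estimate $E$, we apply Lemma~\ref{lem:roots-unity-filter} to $F(z)=(1+(q-1)z)^{k^2}$ with $n=k^2$. This gives
\[
E=\frac1k\sum_{j=0}^{k-1}\bigl(1+(q-1)\omega^j\bigr)^{k^2}.
\]
The $j=0$ term is $q^{k^2}/k$, which is already $o(q^{k^2})$. For $j\ne0$, we bound the modulus:
\[
|1+(q-1)\omega^j|^2=q^2-2(q-1)\bigl(1-\cos(2\pi j/k)\bigr)\le q^2-2(q-1)\bigl(1-\cos(2\pi/k)\bigr).
\]
Hence each nonzero term has modulus at most $q^{k^2}\bigl(1-c/k^2\bigr)^{k^2/2}$ for some constant $c=c(q)>0$, using $1-\cos(2\pi/k)\ge 8/k^2$ for $k\ge2$. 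This is at most $q^{k^2}e^{-c/2}$. Averaging over $j$ then yields $E\le q^{k^2}\bigl(1/k+e^{-c/2}\bigr)$. Unfortunately this is only a constant fraction of $q^{k^2}$, not $o(q^{k^2})$, so the crude estimate falls short.

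This is the main obstacle, since the $j=\pm1$ terms have modulus of genuinely constant order relative to $q^{k^2}$. The fix I would try first is to keep the factor $\frac1k$ in front of the sum rather than absorbing it. Since every term satisfies $|1+(q-1)\omega^j|\le q$, we get
\[
E\le\frac1k\cdot k\cdot q^{k^2}\cdot\max_j(\cdots),
\]
which is no better. A cleaner route is to use $\frac1k\sum_j|\cdot|\le\max_j|\cdot|$ and to sum the decay across $j$ more carefully. Writing $\theta_j=2\pi j/k$, we have $(1-\cos\theta_j)\ge c'j^2/k^2$ for $|j|\le k/2$, so each term is at most $q^{k^2}e^{-c''j^2}$. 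Summing over $j$ gives $\sum_{j\neq0}e^{-c''j^2}=O(1)$, and after multiplying by $\frac1k$ we obtain $E=O(q^{k^2}/k)=o(q^{k^2})$. Here the gain comes precisely from the $\frac1k$ prefactor together with the Gaussian decay in $j$; the earlier crude estimate lost it by bounding the average by the maximum. Concretely, we have
\[
\frac{|1+(q-1)\omega^j|^2}{q^2}\le 1-\frac{2(q-1)}{q^2}\cdot\frac{8j^2}{k^2}\cdot\frac{1}{\pi^2}\cdot\pi^2
\]
for $|j|\le k/2$, using $1-\cos\theta\ge 2\theta^2/\pi^2$ on $[-\pi,\pi]$. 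Raising this to the power $k^2/2$ gives the factor $e^{-c''j^2}$ with $c''=c''(q)>0$, and the sum over $1\le|j|\le k/2$ is bounded by a constant depending only on $q$.
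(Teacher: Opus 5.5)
Your final argument is correct and is essentially the paper's proof: you apply the roots-of-unity filter to $(1+(q-1)z)^{k^2}$, bound $|1+(q-1)\omega^j|^2/q^2\le 1-16(q-1)m_j^2/(q^2k^2)$ with $m_j=\min\{j,k-j\}$, sum the resulting Gaussian terms to $O(1)$, and keep the $\frac{1}{k}$ prefactor to get $E=O(q^{k^2}/k)$. In a write-up, delete the abandoned crude estimate and the stray remark about bounding $\frac{1}{k}\sum_j|\cdot|$ by $\max_j|\cdot|$, since that bound is exactly what loses the needed factor and your actual argument does not use it.
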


\begin{proof}
Since $k$ is prime, $\gcd(\ell,k)=1$ if and only if $k\nmid\ell$.
Let
\[
S_k=
\sum_{\substack{0\le \ell\le k^2\\ k\mid\ell}}
\binom{k^2}{\ell}(q-1)^\ell.
\]
Then
$\big|\cB^{\mathrm{ms}}_{2,k}(r_0,c_0)\big|
=(q^{k^2}-S_k)/k^2$.
To estimate $S_k$, apply Lemma~\ref{lem:roots-unity-filter} to
$F(z)=(1+(q-1)z)^{k^2}$. With $\omega=e^{2\pi i/k}$, we obtain
\[
S_k
= \frac{1}{k}\sum_{j=0}^{k-1}
\bigl(1+(q-1)\omega^j\bigr)^{k^2}
=
\frac{q^{k^2}}{k}
+
\frac{1}{k}\sum_{j=1}^{k-1}
\bigl(1+(q-1)\omega^j\bigr)^{k^2}.
\]

For $1\le j\le k-1$,
\begin{align*}
\left|1+(q-1)\omega^j\right|^2
&=
q^2-2(q-1)\left(1-\cos\frac{2\pi j}{k}\right)
=
q^2-4(q-1)\sin^2\frac{\pi j}{k}.
\end{align*}
Let $m_j=\min\{j,k-j\}$. Then $1\le m_j\le k/2$ and
$\sin(\pi j/k)=\sin(\pi m_j/k)$. Since
$\sin x\ge 2x/\pi$ for $0\le x\le\pi/2$, we have
$\sin(\pi j/k)\ge 2m_j/k$. Hence
\begin{equation}\label{eq:filter}
\frac{\left|1+(q-1)\omega^j\right|^2}{q^2}
\le
1-\frac{16(q-1)m_j^2}{q^2k^2}.
\end{equation}
Using $1-x\le e^{-x}$, it follows that
\[
\left|
\frac{1+(q-1)\omega^j}{q}
\right|^{k^2}
\le
\exp\!\left(
-\frac{8(q-1)}{q^2}m_j^2
\right).
\]
Since each positive integer $m$ occurs as $m_j$ for at most two values of $j$, we have
\[
\sum_{j=1}^{k-1}
\left|
\frac{1+(q-1)\omega^j}{q}
\right|^{k^2}
\le
2\sum_{m=1}^{\infty}
\exp\!\left(
-\frac{8(q-1)}{q^2}m^2
\right)
=
O(1),
\]
where the last equality holds since the series is a convergent Gaussian series.
By the triangle inequality,
\begin{align*}
\left|
S_k-\frac{q^{k^2}}{k}
\right|
&\le
\frac{q^{k^2}}{k}
\sum_{j=1}^{k-1}
\left|
\frac{1+(q-1)\omega^j}{q}
\right|^{k^2}
=
O\!\left(\frac{q^{k^2}}{k}\right).
\end{align*}
In particular, $S_k=O(q^{k^2}/k)$ and $\big|\cB^{\mathrm{ms}}_{2,k}(r_0,c_0)\big|
=(q^{k^2}-S_k)/k^2=\big(1-O(1/k)\big)q^{k^2}/k^2$.
It follows that
\begin{align*}
\rho\!\left(\cB^{\mathrm{ms}}_{2,k}(r_0,c_0)\right)
&=
2\log_q k
-\log_q\big(1-O(1/k)\big)=
2\log_q k+o(1),
\end{align*}
which completes the proof.
\end{proof}

\section{Construction of Noiseless Multidimensional CPCs}\label{sec:noiseless_multi}

This section extends the three two-dimensional constructions to fixed dimension
$d\geq2$.

\subsection{Marker-Hyperplane Based Construction}

The row-column marker becomes a set of $d$ coordinate hyperplanes. We set the
hyperplane at index~$0$ in each direction to zero and require every other
coordinate hyperplane to contain a nonzero entry.

Let
$\cG_{d,k-1}\subseteq\Sigma_q^{\Z_{k-1}^d}$ be the set of internal
arrays with no all-zero coordinate hyperplane. For $\bm{U}\in\cG_{d,k-1}$, define $\bm{B}(\bm{U})\in\Sigma_q^{\Z_k^d}$ by placing $\bm{U}$ on $([k]^+)^d$ and setting all remaining entries to $0$. The code is then defined as
\begin{equation*}
\cB^{\mathrm{mk}}_{d,k}
=
\left\{
\bm{B}(\bm{U}):
\bm{U}\in\cG_{d,k-1}
\right\}.
\end{equation*}

\begin{theorem}\label{thm:d-marker-noiseless}
The code $\cB^{\mathrm{mk}}_{d,k}$ is a $d$D-CPC. For fixed $d\ge2$ and $q$, as $k\rightarrow \infty$, $\rho\!\left(\cB^{\mathrm{mk}}_{d,k}\right)
\leq
k^d-(k-1)^d+o(1)$.
\end{theorem}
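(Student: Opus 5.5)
The plan follows the proof of Theorem~\ref{thm:rc2-noiseless} coordinate by coordinate. We split the argument into two parts: first we establish the CPC property through an explicit decoder, then we estimate the redundancy with a union bound over the internal hyperplanes.

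\textbf{Step 1: uniqueness of the marker hyperplanes.} For direction $j\in\{1,\ldots,d\}$ and index $c\in\Z_k$, let $H_j(c)=\{\bm i\in\Z_k^d: i_j=c\}$ be the coordinate hyperplane. In $\bm B(\bm U)$ every $H_j(0)$ is all zero by construction. For $c\neq0$, the hyperplane $H_j(c)$ contains the internal slice $\{\bm i\in([k]^+)^d: i_j=c\}$, which is a coordinate hyperplane of $\bm U$. Since $\bm U\in\cG_{d,k-1}$, this slice has a nonzero entry. Hence $H_j(0)$ is the unique all-zero hyperplane in direction $j$. This is the analogue of Lemma~\ref{lem:unique-zero-2d}.

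\textbf{Step 2: decoder.} Suppose $\bm Z=T_{\bm a}\bm B(\bm U)$. Then $\bm Z_{\bm i}=\bm B(\bm U)_{\bm i+\bm a}$, so the hyperplane of $\bm Z$ in direction $j$ at index $c$ equals the hyperplane of $\bm B(\bm U)$ at index $c+a_j$. By Step~1, the unique all-zero hyperplane of $\bm Z$ in direction $j$ therefore sits at index $-a_j$. Reading off these $d$ indices determines $\bm a$, and applying $T_{-\bm a}$ recovers $\bm B(\bm U)$. This shows that every codeword is atranslational and that distinct codewords lie in distinct necklaces, so $\cB^{\mathrm{mk}}_{d,k}$ is a $d$D-CPC.

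\textbf{Step 3: redundancy.} Draw $\bm U$ uniformly from $\Sigma_q^{\Z_{k-1}^d}$. A fixed internal hyperplane has $(k-1)^{d-1}$ entries, so it is all zero with probability $q^{-(k-1)^{d-1}}$. There are $d(k-1)$ internal hyperplanes, and a union bound gives
\[
|\cB^{\mathrm{mk}}_{d,k}|\ge q^{(k-1)^d}\bigl(1-d(k-1)q^{-(k-1)^{d-1}}\bigr).
\]
The map $\bm U\mapsto\bm B(\bm U)$ is injective, so this is a lower bound on the code size. For fixed $d\ge2$ and $q$, the quantity $d(k-1)q^{-(k-1)^{d-1}}$ is $o(1)$, because $(k-1)^{d-1}\ge k-1$ grows at least linearly. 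Taking $\log_q$ gives
\[
\rho\bigl(\cB^{\mathrm{mk}}_{d,k}\bigr)\le k^d-(k-1)^d-\log_q\bigl(1-o(1)\bigr)=k^d-(k-1)^d+o(1).
\]

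No step is a real obstacle. The only point needing care is the sign bookkeeping in Step~2, namely checking that translation maps hyperplane $c$ to hyperplane $c+a_j$ consistently with $(T_{\bm a}\bm X)_{\bm i}=\bm X_{\bm i+\bm a}$.
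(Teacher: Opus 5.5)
Your proposal is correct and follows the paper's proof essentially step for step. The paper also decodes by locating the unique all-zero coordinate-$j$ hyperplane at index $-a_j$ and applying the inverse translation, and it bounds the size by the same union bound $d(k-1)q^{-(k-1)^{d-1}}=o(1)$ over the $d(k-1)$ internal hyperplanes.
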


\begin{proof}
We establish the CPC property by providing an efficient decoder. Suppose $\bm{Z}=T_{\bm{a}}\bm{B}(\bm{U})$ for some $\bm{U}\in\cG_{d,k-1}$ and $\bm{a}=(a_1,\ldots,a_d)\in\Z_k^d$. In the original array $\bm{B}(\bm{U})$, the coordinate-$j$ hyperplane at index~$0$ is all zero for every $1\leq j\leq d$, while every other coordinate-$j$ hyperplane contains a nonzero internal hyperplane of $\bm{U}$ and is therefore nonzero. After translation by $\bm{a}$, the unique all-zero coordinate-$j$ hyperplane of $\bm{Z}$ occurs at index $i_j\equiv -a_j\pmod{k}$. Thus $\bm{i}=(i_1,\ldots,i_d)=-\bm{a}$ determines the inverse translation. Applying $T_{\bm{i}}$ recovers $\bm{B}(\bm{U})$, so $\cB^{\mathrm{mk}}_{d,k}$ is a $d$D-CPC.

For the size bound, choose $\bm{U}\in\Sigma_q^{\Z_{k-1}^d}$ uniformly at random. A fixed coordinate hyperplane contains $(k-1)^{d-1}$ entries and is all zero with probability $q^{-(k-1)^{d-1}}$. The union bound over the $d(k-1)$ internal hyperplanes gives
\[
\Pr\!\left(
\bm{U}\notin\cG_{d,k-1}
\right)
\le
d(k-1)q^{-(k-1)^{d-1}}.
\]
Since there are $q^{(k-1)^d}$ internal arrays, we can compute
\begin{align*}
\left| \cB^{\mathrm{mk}}_{d,k} \right|= \left|\cG_{d,k-1}\right|
&\ge
q^{(k-1)^d}
\left(
1-d(k-1)q^{-(k-1)^{d-1}}
\right).
\end{align*}
For fixed $d\ge2$ and $q$, as $k\to\infty$, we have $d(k-1)q^{-(k-1)^{d-1}}=o(1)$ and therefore
\[
\rho\!\left(\cB^{\mathrm{mk}}_{d,k}\right)
=k^d- \log_q\big|\cB^{\mathrm{mk}}_{d,k}\big|=
k^d-(k-1)^d+o(1).
\]
This completes the proof.
\end{proof}

As in the row-column marker based construction of Section~\ref{subsec:marker}, a systematic encoder for
$\cB^{\mathrm{mk}}_{d,k}$ can be obtained by fixing the entries on the
internal main diagonal to $1$. The resulting code
has redundancy $k^d-(k-1)^d+k-1$,
and admits explicit encoding and decoding algorithms, both with time
complexity $O(k^d)$.

\subsection{Row-Anchor Based Construction}
\label{subsec:row-anchor-multi}

We use the same row-anchor idea in $d$ dimensions. One length-$k$ row is the
anchor. Its row index gives the first $d-1$ shifts, and the
one-dimensional CPC decoder gives the last shift.

For $\bm{X}\in\Sigma_q^{\Z_k^d}$ and
$\bm{i}\in\Z_k^{d-1}$, define
\begin{equation}\label{eq:d-row-definition}
    \operatorname{row}_{\bm{i}}(\bm{X})
    =\bigl(X_{(\bm{i},j)}\bigr)_{j\in\Z_k}\in\Sigma_q^k.
\end{equation}
Let $\cA\subseteq\Sigma_q^k$ be a one-dimensional CPC of size $s$,
and let $\cU(\cA)$ be its translation closure, as defined in
Section~\ref{subsec:row-anchor}. Recall that $|\cU(\cA)|=ks$.
Define the code
\begin{equation}\label{eq:d-row-anchor-code}
    \cB^{\mathrm{row}}_{d,k}(\cA)
    =\left\{
    \bm{X}\in\Sigma_q^{\Z_k^d}\;\middle|\;
    \begin{array}{l}
        \operatorname{row}_{\bm{0}}(\bm{X})\in\cA,\\
        \operatorname{row}_{\bm{i}}(\bm{X})\notin\cU(\cA)
        \text{ for }\bm{i}\in\Z_k^{d-1}\setminus\{\bm{0}\}
    \end{array}
    \right\}.
\end{equation}
For $d=2$, this is exactly the code in
Theorem~\ref{thm:row-anchor}.

\begin{theorem}\label{thm:row-anchor-multi}
The code $\cB^{\mathrm{row}}_{d,k}(\cA)$ is a $d$D-CPC of size
$s(q^k-ks)^{k^{d-1}-1}$. Moreover, for fixed integers $q\geq2$ and
$d\geq2$, as $k\to\infty$, there exists a choice of $\cA$ such that
$\rho\bigl(\cB^{\mathrm{row}}_{d,k}(\cA)\bigr)
=d\log_q k+\log_q e+o(1)$.
\end{theorem}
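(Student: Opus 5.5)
The plan is to mirror the proof of Theorem~\ref{thm:row-anchor} almost verbatim, replacing the single row index by a vector $\bm{i}\in\Z_k^{d-1}$. First, for the CPC property, I will describe a decoder. Suppose $\bm{Z}=T_{\bm{a}}\bm{X}$ with $\bm{X}\in\cB^{\mathrm{row}}_{d,k}(\cA)$. Write $\bm{a}=(\bm{a}',b)$ with $\bm{a}'\in\Z_k^{d-1}$ and $b\in\Z_k$. Directly from the definition of $T_{\bm{a}}$ and \eqref{eq:d-row-definition}, we get
\[
\operatorname{row}_{\bm{i}}(\bm{Z})=T_b\bigl(\operatorname{row}_{\bm{i}+\bm{a}'}(\bm{X})\bigr).
\]
Thus translation permutes the rows (by $\bm{i}\mapsto\bm{i}+\bm{a}'$) and cyclically shifts each by $b$. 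Since $\cU(\cA)$ is closed under cyclic shifts, a row of $\bm{Z}$ lies in $\cU(\cA)$ if and only if the corresponding row of $\bm{X}$ does.

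By definition, $\operatorname{row}_{\bm{0}}(\bm{X})$ is the unique row of $\bm{X}$ in $\cU(\cA)$, because $\cA\subseteq\cU(\cA)$. Hence $\bm{Z}$ has exactly one row index $\bm{i}_0$ with row in $\cU(\cA)$, and $\bm{i}_0+\bm{a}'=\bm{0}$, so $\bm{a}'=-\bm{i}_0$. That row equals $T_b\bm{u}$ with $\bm{u}\in\cA$, and the one-dimensional CPC property of $\cA$ determines $(\bm{u},b)$ uniquely. Therefore $(\bm{X},\bm{a})$ is determined, and $\bm{X}=T_{-\bm{a}}\bm{Z}$.

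Second, for the size, the anchor row has $s$ choices. Each of the other $k^{d-1}-1$ rows ranges independently over $\Sigma_q^k\setminus\cU(\cA)$, which has $q^k-ks$ elements because $|\cU(\cA)|=ks$. This gives $s(q^k-ks)^{k^{d-1}-1}$.

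Third, for the redundancy asymptotics, I would choose $s$ so that the factor $(1-ks/q^k)^{k^{d-1}-1}$ tends to $1/e$. This means taking $ks/q^k=(1+o(1))/k^{d-1}$, i.e.\ $s=(1-o(1))q^k/k^d$. Such an $\cA$ exists: by Corollary~\ref{cor:noiseless-cpc-asymptotic} with $d=1$, a largest $1$D-CPC has size $(1-o(1))q^k/k$, which exceeds $q^k/k^d$ for $d\ge2$, and any subset of a CPC is again a CPC. With this choice,
\[
\bigl(1-\tfrac{1-o(1)}{k^{d-1}}\bigr)^{k^{d-1}-1}\to e^{-1},
\]
because $k^{d-1}\to\infty$. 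The size is then $(1-o(1))q^{k^d}/(e\,k^d)$, so the redundancy is $d\log_q k+\log_q e+o(1)$.

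The only mildly delicate point is the choice of $s$. It must be an integer, realized as a subset of a CPC, and chosen with $o(1)$ relative error so that the product still gives $1/e$ exactly in the limit. I would handle this by taking $s=\lfloor q^k/k^d\rfloor$ and checking that the rounding error is negligible.
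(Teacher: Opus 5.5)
Your proposal is correct and follows the paper's proof essentially step for step. It uses the same decoder, built on the row identity $\operatorname{row}_{\bm i}(\bm Z)=T_b(\operatorname{row}_{\bm i+\bm a'}(\bm X))$ and the unique row lying in $\cU(\cA)$, and the same product count. It also makes the same choice of a subcode of an optimal $1$D-CPC with $s=(1-o(1))q^k/k^d$, so that the non-anchor factor tends to $1/e$; your choice $s=\lfloor q^k/k^d\rfloor$ simply makes the paper's $(1-o(1))$ choice explicit.
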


\begin{proof}
We establish the CPC property by describing a decoder. Suppose
$\bm{Z}=T_{(\bm{a},b)}\bm{X}$ for some
$\bm{X}\in\cB^{\mathrm{row}}_{d,k}(\cA)$ and
$(\bm{a},b)\in\Z_k^{d-1}\times\Z_k$. By the definition of translation,
\begin{equation}\label{eq:d-row-translation}
    \operatorname{row}_{\bm{i}}(\bm{Z})
    =T_b\bigl(\operatorname{row}_{\bm{i}+\bm{a}}(\bm{X})\bigr).
\end{equation}
Since $\cU(\cA)$ is closed under cyclic translations, a row of
$\bm{Z}$ belongs to $\cU(\cA)$ if and only if the corresponding
row of $\bm{X}$ belongs to $\cU(\cA)$. By construction,
row~$\bm{0}$ is the unique such row of $\bm{X}$. Hence $\bm{Z}$
has a unique row in $\cU(\cA)$. If its index is $\bm{i}_0$, then
$\bm{i}_0+\bm{a}\equiv \bm{0} \pmod{k}$, so
$\bm{a}\equiv -\bm{i}_0 \pmod{k}$. This gives the first $d-1$
translation coordinates. Also,
$\operatorname{row}_{\bm{i}_0}(\bm{Z})
=T_b\bigl(\operatorname{row}_{\bm{0}}(\bm{X})\bigr)$.
Since $\operatorname{row}_{\bm{0}}(\bm{X})\in\cA$ and $\cA$ is a
one-dimensional CPC, its decoder recovers $b$. Thus $\bm Z$ determines the full translation, and the inverse translation
recovers $\bm X$. This implies that
$\cB^{\mathrm{row}}_{d,k}(\cA)$ is a $d$D-CPC.

For the code size, row~$\bm{0}$ has $s$ choices, while each of the
remaining $k^{d-1}-1$ rows has $q^k-ks$ choices. Hence
$\bigl|\cB^{\mathrm{row}}_{d,k}(\cA)\bigr|
=s(q^k-ks)^{k^{d-1}-1}$.
By Corollary~\ref{cor:noiseless-cpc-asymptotic}, a largest
one-dimensional CPC has size $\big(1-o(1)\big)q^k/k$. We can therefore choose a subcode $\cA$ of size
$s=\big(1-o(1)\big)q^k/k^d$. With this choice,
\[
    (q^k-ks)^{k^{d-1}-1}
    =q^{k(k^{d-1}-1)}
    \left(1-\frac{1-o(1)}{k^{d-1}}\right)^{k^{d-1}-1}
    =\frac{q^{k(k^{d-1}-1)}}{e}(1-o(1)).
\]
Thus,
$\bigl|\cB^{\mathrm{row}}_{d,k}(\cA)\bigr|
=q^{k^d}\big(1-o(1)\big)/(e\,k^d)$ and
$\rho\bigl(\cB^{\mathrm{row}}_{d,k}(\cA)\bigr)
=d\log_q k+\log_q e+o(1)$. This completes the proof.
\end{proof}

We next give an explicit subcode with redundancy $d\log_q k+O(1)$. Set
$R=k^{d-1}-1$ and $N=Rk=k^d-k$, and fix the lexicographic ordering $\bm{i}_1,\ldots,\bm{i}_R$ of
$\Z_k^{d-1}\setminus\{\bm{0}\}$. Take
$f=\lceil\log_q N\rceil+1$ and assume $f+3\leq k$. For fixed $q$
and $d$, this holds for all sufficiently large $k$.

For the anchor row, we use the non-overlapping code $\cA$ from
Lemma~\ref{lem:non-overlapping}, with marker $0^f$. Its encoder
$\mathrm{ENC}^{\cA}\colon\Sigma_q^{k-f-3}\to\cA$ and decoder
$\mathrm{DEC}^{\cA}$ are the constituent algorithms used in
Section~\ref{subsec:row-anchor}, based on
\cite[Section III-B]{LevyYaakobi2019}.
For the remaining $R$ rows, we use
$\mathrm{ENC}^{\mathrm{RLL}}_{N,k}$ and
$\mathrm{DEC}^{\mathrm{RLL}}_{N,k}$ from
Theorem~\ref{thm:row-cyclic-rll}. These rows are encoded jointly,
while the cyclic constraint is imposed separately within each
length-$k$ row.

Given $\bm{x}_{\mathrm A}\in\Sigma_q^{k-f-3}$ and
$\bm{x}_{\mathrm R}\in\Sigma_q^{N-1}$, the encoder
$\mathrm{ENC}^{\mathrm{row}}_{d,k}$ first computes
$\bm{u}=\mathrm{ENC}^{\cA}(\bm{x}_{\mathrm A})$ and places
$\bm{u}$ in row~$\bm{0}$. It then computes
$\bm{y}=\mathrm{ENC}^{\mathrm{RLL}}_{N,k}(\bm{x}_{\mathrm R})$,
partitions $\bm{y}$ into $R$ consecutive blocks of length $k$,
and places the $j$th block in row~$\bm{i}_j$ for
$1\leq j\leq R$. Denote the resulting array by $\bm{X}$.
By Theorem~\ref{thm:row-cyclic-rll}, every non-anchor row contains
no cyclic occurrence of $0^f$. Lemma~\ref{lem:non-overlapping}
therefore implies that each of these rows lies outside
$\cU(\cA)$. Since row~$\bm{0}$ belongs to $\cA$, we have
$\bm{X}\in\cB^{\mathrm{row}}_{d,k}(\cA)$. Moreover, both
constituent encoders are injective, so
$\mathrm{ENC}^{\mathrm{row}}_{d,k}$ is an injective encoder
for a subcode of $\cB^{\mathrm{row}}_{d,k}(\cA)$.

The decoding procedure is similarly efficient. Given
$\bm{Z}=T_{(\bm{a},b)}\bm{X}$, the decoder first locates the unique
cyclic occurrence of $0^f$ among all rows of $\bm{Z}$. Suppose this
occurrence starts at position $c_0$ in row~$\bm{i}_0$. 
By
\eqref{eq:d-row-translation},
$\bm{a}\equiv-\bm{i}_0\pmod{k}$ and
$b\equiv-c_0\pmod{k}$. Applying the inverse translation recovers
the original array $\bm{X}$. The decoder then applies
$\mathrm{DEC}^{\cA}$ to row~$\bm{0}$ to recover
$\bm{x}_{\mathrm A}$. Finally, concatenating rows
$\bm{i}_1,\ldots,\bm{i}_R$ in the fixed order recovers $\bm{y}$,
and applying $\mathrm{DEC}^{\mathrm{RLL}}_{N,k}$ yields
$\bm{x}_{\mathrm R}$.

The anchor row uses $f+3$ redundant symbols, while the remaining
rows jointly use one redundant symbol. Hence the overall
redundancy is
\begin{equation}\label{eq:d-row-anchor-explicit-redundancy}
    f+4=\left\lceil\log_q(k^d-k)\right\rceil+5
    =d\log_q k+O(1).
\end{equation}
The marker can be located in $O(k^d)$ time.
Together with the constituent algorithms in
Section~\ref{subsec:row-anchor} and
Theorem~\ref{thm:row-cyclic-rll}, this gives an overall time
complexity of $O(k^d\log^2 k)$ for both encoding and decoding,
for fixed $q$ and $d$.

\begin{remark}
An alternative extension of the two-dimensional row-anchor
construction is to use a $(d-1)$-dimensional slice as the anchor.
The anchor slice is chosen from a $(d-1)$D-CPC, while every other
parallel slice is constrained to lie outside its translation closure.
After an unknown translation, the location of the anchor slice
determines one translation coordinate, and the $(d-1)$D-CPC decoder
recovers the remaining $d-1$ coordinates. By choosing the size of
the anchor code appropriately, the same counting argument gives
redundancy $d\log_q k+\log_q e+o(1)$. As the argument is similar, we omit the details.
\end{remark}

\subsection{Moment-Syndrome Based Construction}

The third construction uses no marker. Each array is equipped with a vector of moments computed from the positions of its nonzero symbols, which changes predictably under translation. By fixing the desired syndrome, we recover the shift by solving linear congruences. This generalizes the two-dimensional moment-syndrome code from Section~\ref{subsec:moment-2d}.

For $\bm{X}\in\Sigma_q^{\Z_k^d}$, recall that its \emph{support} is defined by $\supp(\bm{X}) = \big\{ \bm{i}\in\Z_k^d: X_{\bm{i}}\ne0 \big\}$. Let
\begin{align*}
S_j(\bm{X})&=\sum_{\bm{i}\in\supp(\bm{X})}i_j\pmod{k},
 \qquad 1\le j\le d.
\end{align*}
For a prescribed syndrome $\bm{\sigma}=(\sigma_1,\ldots,\sigma_d)\in\Z_k^d$, define the code
\begin{equation*}
 \cB^{\mathrm{ms}}_{d,k}(\bm{\sigma})
 =\left\{\bm{X}\in\Sigma_q^{\Z_k^d}:
 \gcd(\wt(\bm{X}),k)=1,\,
 S_j(\bm{X})=\sigma_j \text{ for } 1\le j\le d
 \right\}.
\end{equation*}

\begin{theorem}\label{thm:moment}
For every $\bm{\sigma}\in\Z_k^d$, the code $\cB^{\mathrm{ms}}_{d,k}(\bm{\sigma})$ is a $d$D-CPC of size
\begin{equation*}
 \left|\cB^{\mathrm{ms}}_{d,k}(\bm{\sigma})\right|
 =\frac{1}{k^d}
 \sum_{\substack{0\le\ell\le k^d\\\gcd(\ell,k)=1}}
 \binom{k^d}{\ell}(q-1)^\ell.
\end{equation*}
\end{theorem}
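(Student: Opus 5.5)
The plan follows the proof of Theorem~\ref{thm:moment-2d} coordinate by coordinate. First, record how the moments transform under a translation. If $\bm Z=T_{\bm a}\bm X$, then $Z_{\bm i}=X_{\bm i+\bm a}$, so $\supp(\bm Z)=\supp(\bm X)-\bm a$. In particular $\wt(\bm Z)=\wt(\bm X)$, and summing the $j$th coordinate over the support gives
\[
S_j(\bm Z)\equiv S_j(\bm X)-a_j\,\wt(\bm X)\pmod{k},\qquad 1\le j\le d .
\]

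To decode, suppose $\bm X\in\cB^{\mathrm{ms}}_{d,k}(\bm\sigma)$. Then $\wt(\bm Z)=\wt(\bm X)$ is invertible modulo $k$, so each shift coordinate is recovered as
\[
a_j\equiv\bigl(\sigma_j-S_j(\bm Z)\bigr)\wt(\bm Z)^{-1}\pmod k .
\]
Hence $\bm a$ is uniquely determined, and $\bm X=T_{-\bm a}\bm Z$. This shows that $\bm Z$ determines the pair $(\bm X,\bm a)$, which is exactly the $d$D-CPC property of Definition~\ref{def:cpc}. Note that the argument also forces every codeword to be atranslational: if $T_{\bm a}\bm X=\bm X$, the formula gives $\bm a=\bm 0$.

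For the size, let $\cS$ be the set of arrays whose weight is coprime to $k$. This set is translation invariant. Fix $\bm X\in\cS$ and consider the map
\[
\bm a\mapsto \bigl(S_1(T_{\bm a}\bm X),\ldots,S_d(T_{\bm a}\bm X)\bigr).
\]
It decouples coordinatewise into the maps $a_j\mapsto S_j(\bm X)-a_j\wt(\bm X)$. Each of these is a bijection of $\Z_k$ because $\wt(\bm X)$ is a unit, so the whole map is a bijection of $\Z_k^d$. Two consequences follow:
\begin{itemize}
\item the necklace of $\bm X$ has exactly $k^d$ distinct elements, since distinct $\bm a$ give distinct syndromes;
\item the necklace contains exactly one array with syndrome $\bm\sigma$.
\end{itemize}
So $\cS$ is partitioned into necklaces of size $k^d$, each meeting $\cB^{\mathrm{ms}}_{d,k}(\bm\sigma)$ in exactly one element. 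This gives $|\cB^{\mathrm{ms}}_{d,k}(\bm\sigma)|=|\cS|/k^d$.

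It remains to count $\cS$ by weight. There are $\binom{k^d}{\ell}(q-1)^\ell$ arrays of weight $\ell$, so summing over $\ell$ with $\gcd(\ell,k)=1$ yields the stated formula. No step is a real obstacle. The only point needing care is the bijectivity of the syndrome map, which rests entirely on $\gcd(\wt(\bm X),k)=1$.
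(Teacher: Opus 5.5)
Your proposal is correct and matches the paper's proof step for step. Both use the transformation law $S_j(T_{\bm a}\bm X)\equiv S_j(\bm X)-a_j\wt(\bm X)\pmod k$, recover the shift coordinatewise using $\gcd(\wt(\bm X),k)=1$, and count by showing that the syndrome map on each necklace of coprime-weight arrays is a bijection of $\Z_k^d$, so each such necklace has size $k^d$ and contains exactly one codeword.
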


\begin{proof}
We establish the CPC property by providing an efficient decoder. Suppose $\bm{Z}=T_{\bm{a}}\bm{X}$ for some $\bm{X}\in\cB^{\mathrm{ms}}_{d,k}(\bm{\sigma})$ and $\bm{a}=(a_1,\ldots,a_d)\in\Z_k^d$. Translation preserves the Hamming weight and transforms the moments as
\begin{equation*}
S_j(T_{\bm{a}}\bm{X})
\equiv
S_j(\bm{X})-a_j\wt(\bm{X})
\pmod{k}, \qquad 1\le j \le d,
\end{equation*}
Since the weight is invertible modulo $k$, the inverse translation is uniquely determined coordinatewise by
\[
-a_j
\equiv
\bigl(S_j(\bm{Z})-\sigma_j\bigr)\wt(\bm{Z})^{-1}
\pmod{k},
\qquad 1\le j\le d.
\]
Applying $T_{-\bm{a}}$ to $\bm{Z}$ recovers $\bm{X}$, so $\cB^{\mathrm{ms}}_{d,k}(\bm{\sigma})$ is indeed a $d$D-CPC.

For the code size, let
\[
\cS
=
\left\{
\bm{X}\in\Sigma_q^{\Z_k^d}:
\gcd\bigl(\wt(\bm{X}),k\bigr)=1
\right\}.
\]
Since translation preserves Hamming weight, $\cS$ is invariant under translations. Fix $\bm{X}\in\cS$. For each $1\le j\le d$, as $a_j$ ranges over $\Z_k$, the value $S_j(\bm{X})-a_j\wt(\bm{X})$ runs through all residues modulo $k$ exactly once. Consequently, the map
\[
\bm{a}
\longmapsto
\bigl(
S_1(T_{\bm{a}}\bm{X}),
\ldots,
S_d(T_{\bm{a}}\bm{X})
\bigr)
\]
is a bijection from $\Z_k^d$ onto $\Z_k^d$. In particular, for every prescribed syndrome $\bm{\sigma}\in\Z_k^d$, there is a unique $\bm{a}\in\Z_k^d$ satisfying
\[
S_j(T_{\bm{a}}\bm{X})=\sigma_j,
\qquad
1\le j\le d.
\]
Thus each translation necklace within $\cS$ has size $k^d$ and contains exactly one representative from $\cB^{\mathrm{ms}}_{d,k}(\bm{\sigma})$. It follows that
\[
\left|\cB^{\mathrm{ms}}_{d,k}(\bm{\sigma})\right|
=
\frac{|\cS|}{k^d}
=
\frac{1}{k^d}
\sum_{\substack{0\le\ell\le k^d\\\gcd(\ell,k)=1}}
\binom{k^d}{\ell}(q-1)^\ell.
\]
This completes the proof.
\end{proof}

For prime $k$, the redundancy is asymptotically optimal.

\begin{lemma}\label{lem:MSC-d}
For fixed integers $q\ge2$ and $d\ge2$, as $k\to\infty$ through the
primes, the $q$-ary redundancy of
$\cB^{\mathrm{ms}}_{d,k}(\bm{\sigma})$ is
$\rho\big(\cB^{\mathrm{ms}}_{d,k}(\bm{\sigma})\big)=
d\log_q k+o(1)$.
\end{lemma}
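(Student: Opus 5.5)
The plan is to follow the proof of Lemma~\ref{lem:MSC-2d} with $k^2$ replaced by $n=k^d$. By Theorem~\ref{thm:moment}, $\bigl|\cB^{\mathrm{ms}}_{d,k}(\bm{\sigma})\bigr|=(q^{k^d}-S_k)/k^d$, where
\[
S_k=\sum_{\substack{0\le\ell\le k^d\\ k\mid\ell}}\binom{k^d}{\ell}(q-1)^\ell .
\]
Here we use that for prime $k$, $\gcd(\ell,k)=1$ if and only if $k\nmid\ell$. It therefore suffices to show $S_k=O(q^{k^d}/k)$. Then the size is $(1-O(1/k))\,q^{k^d}/k^d$, and the redundancy is $d\log_q k-\log_q(1-O(1/k))=d\log_q k+o(1)$.

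To estimate $S_k$, apply Lemma~\ref{lem:roots-unity-filter} to $F(z)=(1+(q-1)z)^{k^d}$ with $\omega=e^{2\pi i/k}$. This gives
\[
S_k=\frac{q^{k^d}}{k}+\frac1k\sum_{j=1}^{k-1}\bigl(1+(q-1)\omega^j\bigr)^{k^d}.
\]
The main term is already $q^{k^d}/k$, so it remains to show that the error sum is $O(q^{k^d})$ after normalization. Inequality~\eqref{eq:filter} is independent of the exponent. With $m_j=\min\{j,k-j\}$ it gives $\bigl|(1+(q-1)\omega^j)/q\bigr|^2\le 1-16(q-1)m_j^2/(q^2k^2)$. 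Raising to the power $k^d/2$ and using $1-x\le e^{-x}$ yields
\[
\left|\frac{1+(q-1)\omega^j}{q}\right|^{k^d}\le\exp\!\left(-\frac{8(q-1)}{q^2}\,m_j^2\,k^{d-2}\right)\le\exp\!\left(-\frac{8(q-1)}{q^2}\,m_j^2\right),
\]
where the last step uses $d\ge2$. Each $m\ge1$ occurs as $m_j$ for at most two values of $j$. Hence the sum over $j$ is bounded by twice a convergent Gaussian series, which is $O(1)$.

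By the triangle inequality, $|S_k-q^{k^d}/k|=O(q^{k^d}/k)$. For $d\ge3$ the bound is in fact $o(q^{k^d}/k)$, since $k^{d-2}\to\infty$ drives the exponentials to zero. This yields the required bound and completes the argument.

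I do not expect a genuine obstacle. The only step to check is that the exponent $k^d$, rather than $k^2$, still gives a summable bound uniformly in $k$, and the monotonicity $k^{d-2}\ge1$ settles this. I would also note that $\log_q(1-O(1/k))=o(1)$ requires $O(1/k)<1$, which holds for large $k$.
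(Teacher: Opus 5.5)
Your proposal is correct and follows the paper's proof essentially step for step. Like the paper, you apply the roots-of-unity filter to $(1+(q-1)z)^{k^d}$, raise inequality~\eqref{eq:filter} to the power $k^d/2$, and bound the error by a Gaussian series to get $S_k=O(q^{k^d}/k)$. Your one addition is to state explicitly the uniformity step $k^{d-2}\ge1$, which the paper uses implicitly when it declares the series $O(1)$.
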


\begin{proof}
Since $k$ is prime, $\gcd(\ell,k)=1$ if and only if $k\nmid\ell$.
Let
\[
S_k
=
\sum_{\substack{0\le \ell\le k^d\\ k\mid\ell}}
\binom{k^d}{\ell}(q-1)^\ell.
\]
By Theorem~\ref{thm:moment}, $\big|\cB^{\mathrm{ms}}_{d,k}(\bm{\sigma})\big|
=
\frac{q^{k^d}-S_k}{k^d}$. To estimate $S_k$, apply Lemma~\ref{lem:roots-unity-filter} to
$F(z)=(1+(q-1)z)^{k^d}$. With $\omega=e^{2\pi i/k}$ (here $i$ denotes the imaginary unit), we obtain
\[
S_k
=
\frac{q^{k^d}}{k}
+
\frac{1}{k}
\sum_{j=1}^{k-1}
\bigl(1+(q-1)\omega^j\bigr)^{k^d}.
\]

For $1\le j\le k-1$, let $m_j=\min\{j,k-j\}$.
By \eqref{eq:filter},
\[
\frac{\left|1+(q-1)\omega^j\right|^2}{q^2}
\le
1-\frac{16(q-1)m_j^2}{q^2k^2}.
\]
Using $1-x\le e^{-x}$, we obtain
\begin{align*}
\left|
\frac{1+(q-1)\omega^j}{q}
\right|^{k^d}
&\le
\left(
1-\frac{16(q-1)m_j^2}{q^2k^2}
\right)^{k^d/2}
\le
\exp\!\left(
-\frac{8(q-1)k^{d-2}}{q^2}m_j^2
\right).
\end{align*}
Since each positive
integer $m$ occurs as $m_j$ for at most two values of $j$, we have
\begin{align*}
\sum_{j=1}^{k-1}
\left|
\frac{1+(q-1)\omega^j}{q}
\right|^{k^d}
&\le
2\sum_{m=1}^{\infty}
\exp\!\left(
-\frac{8(q-1)k^{d-2}}{q^2}m^2
\right)
=
O(1),
\end{align*}
where the last equality holds since the series is a convergent Gaussian series.
By the triangle inequality,
\[
\left|
S_k-\frac{q^{k^d}}{k}
\right|
\le
\frac{q^{k^d}}{k}
\sum_{j=1}^{k-1}
\left|
\frac{1+(q-1)\omega^j}{q}
\right|^{k^d}
=
O\!\left(\frac{q^{k^d}}{k}\right).
\]
In particular,
$S_k=O_q(q^{k^d}/k)$ and
$\big|\cB^{\mathrm{ms}}_{d,k}(\bm{\sigma})\big|
=\big(1-O(1/k)\big)
q^{k^d}/k^d$.
It follows that
\begin{align*}
\rho\!\left(\cB^{\mathrm{ms}}_{d,k}(\bm{\sigma})\right)
&=
d\log_q k-\log_q\big(1-O(1/k)\big)
=
d\log_q k+o(1),
\end{align*}
which completes the proof.
\end{proof}

\section{Constructions for Error-Correcting CPCs}\label{sec:noisy}

We now extend the noiseless constructions from the previous section to the setting where an unknown translated array may be corrupted by at most $t$ substitutions.

\subsection{Robust Marker-Hyperplane Based Constructions}

The marker-hyperplane construction extends to the noisy setting by
requiring every internal coordinate hyperplane to have weight at
least $2t+1$. After at most $t$ substitutions, its weight remains at
least $t+1$, while the weight of an initially all-zero marker
hyperplane is at most $t$. Hence, the marker hyperplanes can still be
identified by their smaller weights.

For a coordinate direction $1\leq j\leq d$ and an internal index
$a\in[k]^+$, let 
\begin{equation*}
H_{j,a}
=
\left\{
\bm{i}\in([k]^+)^d:i_j=a
\right\}
\end{equation*}
be the corresponding hyperplane in $([k]^+)^d$. Define
\[
\cG_{d,k-1,t}
=
\left\{\bm U\in\Sigma_q^{([k]^+)^d}:
\wt(\bm U|_{H_{j,a}})\geq2t+1
\text{ for every }j,a\right\}.
\]
For $\bm{U}\in\cG_{d,k-1,t}$, define
$\bm{B}(\bm{U})\in\Sigma_q^{\Z_k^d}$ by placing $\bm{U}$ on
$([k]^+)^d$ and setting all remaining entries to $0$.
For $\cE\subseteq\cG_{d,k-1,t}$, define
\begin{equation*}
\cB^{\mathrm{mk}}_{d,k,t}(\cE)
=
\left\{
\bm{B}(\bm{U}):\bm{U}\in\cE
\right\}.
\end{equation*}

\begin{theorem}\label{thm:robust-marker}
Let $\cE\subseteq\cG_{d,k-1,t}$ have minimum Hamming distance at least
$2t+1$. Then $\cB^{\mathrm{mk}}_{d,k,t}(\cE)$ is a
$t$-substitution-correcting $d$D-CPC of size $|\cE|$. 
\end{theorem}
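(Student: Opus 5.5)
The plan is to give an explicit two-stage decoder: first synchronize using the marker hyperplanes, then correct errors using the code $\cE$. Uniqueness of the decoded pair $(\bm X,\bm a)$ then follows from the decoder's determinism together with the minimum distance of $\cE$. The size claim is immediate because the map $\bm U\mapsto\bm B(\bm U)$ is injective; indeed, $\bm U$ is just the restriction of $\bm B(\bm U)$ to $([k]^+)^d$.

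\emph{Synchronization.} Suppose $\bm Z$ satisfies $d_H(\bm Z,T_{\bm a}\bm B(\bm U))\le t$ with $\bm U\in\cE$. Fix a direction $1\le j\le d$ and consider the $k$ coordinate-$j$ hyperplanes of $\bm Z$ (full hyperplanes in $\Z_k^d$). In $T_{\bm a}\bm B(\bm U)$, the hyperplane at index $-a_j$ is the image of the all-zero marker and has weight $0$. Every other coordinate-$j$ hyperplane is the image of a full hyperplane $\{i_j=c\}$ with $c\neq 0$; that hyperplane contains $H_{j,c}$, so its weight is at least $2t+1$.

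The substitutions are at most $t$ in total, so each hyperplane's weight changes by at most $t$. Hence in $\bm Z$ the marker hyperplane has weight at most $t$, while every other hyperplane has weight at least $t+1$. The decoder therefore identifies $a_j$ as the negative of the index of the unique coordinate-$j$ hyperplane of weight at most $t$. Doing this for all $j$ recovers $\bm a$ exactly, even without knowing $\bm U$.

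\emph{Error correction.} Apply $T_{-\bm a}$ to $\bm Z$. The result $\bm Z'=T_{-\bm a}\bm Z$ satisfies $d_H(\bm Z',\bm B(\bm U))\le t$, because translation is an isometry. Restricting to $([k]^+)^d$ can only reduce distance, so $d_H(\bm Z'|_{([k]^+)^d},\bm U)\le t$. Since $d_H(\cE)\ge 2t+1$, the unique codeword of $\cE$ within distance $t$ of this restriction is $\bm U$, and this recovers $\bm U$ and hence $\bm X=\bm B(\bm U)$.

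For uniqueness in the sense of Definition~\ref{def:sync-correcting}, suppose $\bm Z$ were within distance $t$ of both $T_{\bm a}\bm B(\bm U)$ and $T_{\bm a'}\bm B(\bm U')$. The synchronization step derives $\bm a$ from $\bm Z$ alone, so $\bm a=\bm a'$. The correction step then forces $\bm U=\bm U'$.

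The only delicate point is the weight separation in the synchronization step. The full hyperplane in $\Z_k^d$ also contains entries lying on other marker hyperplanes, but those entries are zero and the bound $\ge 2t+1$ comes from $H_{j,c}$ alone, so this causes no problem. I do not expect any real obstacle; the proof is a direct adaptation of Theorem~\ref{thm:d-marker-noiseless}.
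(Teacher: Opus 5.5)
Your proposal is correct and follows essentially the same route as the paper. In each direction you locate the unique coordinate hyperplane of weight at most $t$ to recover $-\bm a$, invert the translation, and decode the restriction to $([k]^+)^d$ with the bounded-distance decoder of $\cE$; the size then follows from injectivity of $\bm U\mapsto\bm B(\bm U)$. Your explicit uniqueness argument, and your remark that each full non-marker hyperplane contains $H_{j,c}$, make precise two points the paper leaves implicit.
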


\begin{proof}
We establish the CPC property by describing a decoder. Suppose $\bm{Z}$ is received and $d_H(\bm{Z},T_{\bm{a}}\bm{B}(\bm{U}))\le t$ for some $\bm{U}\in\cE$ and $\bm{a}=(a_1,\ldots,a_d)\in\Z_k^d$. Before corruption, the marker hyperplane at index $0$ in each coordinate direction has weight $0$, while every other coordinate hyperplane has weight at least $2t+1$. After at most $t$ substitutions, the former has weight at most $t$, and each of the latter has weight at least $t+1$. For each $j$, the unique coordinate-$j$ hyperplane of $\bm{Z}$ with weight at most $t$ occurs at index $i_j\equiv -a_j\pmod{k}$. Thus $\bm{i}=(i_1,\ldots,i_d)=-\bm{a}$ determines the inverse translation. Applying $T_{\bm{i}}$ gives $\bm{Z}'$ with $d_H(\bm{Z}',\bm{B}(\bm{U}))\le t$.
Restricting $\bm{Z}'$ to $([k]^+)^d$ gives an array within distance $t$ of $\bm{U}$. Since $\cE$ has minimum distance at least $2t+1$, its bounded-distance decoder recovers $\bm{U}$ uniquely. Therefore $\bm{Z}$ determines both the translation and the codeword, so $\cB^{\mathrm{mk}}_{d,k,t}(\cE)$ is a $t$-substitution-correcting $d$D-CPC.

The map $\bm{U}\mapsto\bm{B}(\bm{U})$ is injective, so $\big|\cB^{\mathrm{mk}}_{d,k,t}(\cE)\big|=|\cE|$. This completes the proof.
\end{proof}

\begin{remark}
When $t=0$, taking $\cE=\cG_{d,k-1}$ recovers the noiseless marker-hyperplane construction of Theorem~\ref{thm:d-marker-noiseless}.
\end{remark}

For an explicit encoder, we force $2t+1$ disjoint internal diagonals to one.
Each such diagonal meets every internal coordinate hyperplane once. For
$i\in[k]^+$ and $s\in\Z_{k-1}$, write
$i\oplus s=1+\big((i-1+s)\bmod(k-1)\big)$ and set
\[
D_s=\{(i,i\oplus s,\ldots,i\oplus s):i\in[k]^+\}.
\]
Assume $2t+1\leq k-1$ and fix $D_0,\ldots,D_{2t}$ to one. The remaining
internal positions form a payload of length
$n_d(k,t)=(k-1)^d-(2t+1)(k-1)$.

For a payload word $\bm u$, let $\bm B^{\mathrm{fd}}(\bm u)$ be the array
with zero marker hyperplanes, ones on the selected diagonals, and $\bm u$ in
the remaining internal positions. For a code
$\cE'\subseteq\Sigma_q^{n_d(k,t)}$, set
$\cB^{\mathrm{fd}}_{d,k,t}(\cE')=
\{\bm B^{\mathrm{fd}}(\bm u):\bm u\in\cE'\}$.

\begin{lemma}\label{lemma:fixed-diagonal}
If $\cE'$ has minimum distance at least $2t+1$, then
$\cB^{\mathrm{fd}}_{d,k,t}(\cE')$ is a
$t$-substitution-correcting $d$D-CPC of redundancy
\begin{align*}
\rho\!\left(\mathcal B^{\mathrm{fd}}_{d,k,t}(\mathcal E')\right)
&=k^d-(k-1)^d
+
(2t+1)(k-1)
+
n_d(k,t)-\log_q|\mathcal E'|.
\end{align*}
\end{lemma}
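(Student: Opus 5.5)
The plan is to reduce the statement to Theorem~\ref{thm:robust-marker}. The idea is to exhibit $\cB^{\mathrm{fd}}_{d,k,t}(\cE')$ as $\cB^{\mathrm{mk}}_{d,k,t}(\cE)$ for a suitable $\cE\subseteq\cG_{d,k-1,t}$ with minimum distance at least $2t+1$, and then to count the redundancy directly.

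Define $\cE$ as the set of internal arrays $\bm U\in\Sigma_q^{([k]^+)^d}$ that are equal to one on $D_0\cup\cdots\cup D_{2t}$ and carry $\bm u\in\cE'$, in the fixed order, on the complementary payload positions. By construction $\bm B(\bm U)=\bm B^{\mathrm{fd}}(\bm u)$, so the two codes coincide.

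First, I check that $\cE\subseteq\cG_{d,k-1,t}$.

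1. The diagonals $D_s$ are pairwise disjoint. A point of $D_s$ has second coordinate $i\oplus s$, and for fixed first coordinate $i$ these values are distinct for distinct $s\in\Z_{k-1}$. This uses $2t+1\le k-1$.
2. Each $D_s$ meets every hyperplane $H_{j,a}$ in exactly one point. For $j=1$ the point is $i=a$. For $j\ge2$ the map $i\mapsto i\oplus s$ is a bijection of $[k]^+$, so exactly one $i$ satisfies $i\oplus s=a$.
3. Consequently each $H_{j,a}$ contains $2t+1$ forced ones, so $\wt(\bm U|_{H_{j,a}})\ge2t+1$ whatever the payload is.

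Second, I check the distance condition. Two arrays in $\cE$ agree on the diagonals, so their distance equals the distance of their payloads, which is at least $2t+1$. Theorem~\ref{thm:robust-marker} then shows that the code is a $t$-substitution-correcting $d$D-CPC of size $|\cE|=|\cE'|$, since $\bm u\mapsto\bm U$ is injective.

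Finally, I compute the redundancy as $\rho=k^d-\log_q|\cE'|$. I split $k^d$ into three parts:
\begin{itemize}
\item $k^d-(k-1)^d$ marker positions;
\item $(2t+1)(k-1)$ diagonal positions, since each $D_s$ has $k-1$ points and they are disjoint;
\item $n_d(k,t)$ payload positions.
\end{itemize}
This gives exactly the stated formula. The only nontrivial step is the disjointness and hyperplane-incidence check for the diagonals; everything else is bookkeeping.
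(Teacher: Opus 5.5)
Your proposal is correct and follows the same route as the paper. You verify that the forced ones on $D_0,\ldots,D_{2t}$ give every internal hyperplane weight at least $2t+1$, observe that distances reduce to payload distances, invoke Theorem~\ref{thm:robust-marker}, and read off the redundancy from $|\cB^{\mathrm{fd}}_{d,k,t}(\cE')|=|\cE'|$. Your explicit check that the $D_s$ are pairwise disjoint and meet each $H_{j,a}$ exactly once fills in a step the paper only asserts; that check uses $2t+1\le k-1$ and the second coordinate, so it implicitly needs $d\ge2$.
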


\begin{proof}
Every internal hyperplane contains $2t+1$ fixed ones, so the internal arrays
belong to $\cG_{d,k-1,t}$. Distances between them equal the distances between
their payloads. By Theorem~\ref{thm:robust-marker}, $\cB^{\mathrm{fd}}_{d,k,t}(\cE')$ is a $t$-substitution-correcting $d$D-CPC. By construction, we have $\big| \mathcal B^{\mathrm{fd}}_{d,k,t}(\mathcal E')\big|= |\cE'|$. Then the redundancy follows by definition.
\end{proof}

Apart from the encoder and decoder of $\cE'$, filling the fixed
positions and finding the marker hyperplanes require $O(k^d)$ time.
It therefore remains to construct a $q$-ary code of length
$n_d(k,t)$ and minimum distance at least $2t+1$. We use a BCH code
over a larger finite field, whose proof is deferred to the appendix.

\begin{lemma}\label{lem:general-q-bch}
Let $q\geq2$ and $t\geq1$ be fixed, and let $Q\geq q$ be a prime
power. For all sufficiently large $n$, there exists a systematic
code $\cC\subseteq\Sigma_q^n$ with minimum Hamming distance at least
$2t+1$ and
\begin{equation}\label{eq:general-q-bch-redundancy}
\rho(\cC)
\leq
\left\lceil\log_q Q\right\rceil
\left(2t-\left\lfloor\frac{2t}{Q}\right\rfloor\right)
\left\lceil\log_Q(n+1)\right\rceil .
\end{equation}
For fixed $q$, $Q$, and $t$, encoding takes $O(n\log n)$ time and
decoding takes $O(n\log^2 n)$ time.
\end{lemma}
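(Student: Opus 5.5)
The plan is to build $\cC$ as a $q$-ary image of a narrow-sense BCH code over $\F_Q$, and then to concatenate it with a fixed injection $\F_Q\to\Sigma_q^{\lceil\log_q Q\rceil}$ on the parity part only. Let $m=\lceil\log_Q(n+1)\rceil$ and $N=Q^m-1\ge n$. Take $\alpha$ primitive in $\F_{Q^m}$, and let $\cD\subseteq\F_Q^N$ be the cyclic code whose generator polynomial $g(x)$ is the least common multiple of the minimal polynomials of $\alpha,\alpha^2,\ldots,\alpha^{2t}$. The BCH bound gives $d(\cD)\ge 2t+1$.

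\textbf{Counting the parity symbols.} The standard degree count bounds $\deg g$. Each minimal polynomial has degree at most $m$. Moreover, $\alpha^{jQ}$ has the same minimal polynomial as $\alpha^j$, so among the exponents $1,\ldots,2t$ we only need those not divisible by $Q$. There are $2t-\lfloor 2t/Q\rfloor$ of them. Hence $\deg g\le (2t-\lfloor 2t/Q\rfloor)m$. For $n$ large, $n\ge \deg g$ holds, since $\deg g=O(\log n)$.

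\textbf{Shortening and mapping to $\Sigma_q$.} Shorten $\cD$ systematically. The message is a $q$-ary word $\bm u\in\Sigma_q^{k'}$ with $k'=n-r\lceil\log_qQ\rceil$, where $r=\deg g$. Embed $\Sigma_q\subseteq\F_Q$ symbolwise, which is possible since $Q\ge q$. Pad with zeros and compute the $r$ parity symbols in $\F_Q$ as the remainder of $x^{r}u(x)$ modulo $g(x)$. Finally, write each parity symbol as a block of $\lceil\log_q Q\rceil$ $q$-ary symbols via a fixed injection. This needs $k'+r\le N$, which holds because $k'+r\le n\le N$.

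\textbf{Verifying the distance.} The $q$-ary codeword is $\bm u$ followed by the expanded parity. Consider two distinct codewords. If their $\F_Q$-images differ in $w\ge 2t+1$ coordinates, then each differing $\F_Q$ coordinate contributes at least one $q$-ary difference. The reason is that the message coordinates are copied verbatim, and each expansion block is injective. So the $q$-ary distance is at least $2t+1$. The redundancy is $r\lceil\log_qQ\rceil$, which gives \eqref{eq:general-q-bch-redundancy}. Note that the code is a subset of $\Sigma_q^n$, not necessarily linear over $\Sigma_q$, which is fine.

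\textbf{Decoding.} To decode, read a received word and map each parity block back to $\F_Q$. If a block is not in the image of the injection, map it to an arbitrary element. Each $q$-ary error corrupts at most one $\F_Q$ coordinate, so there are at most $t$ symbol errors. Then run BCH decoding on the length-$N$ shortened word: compute syndromes, apply Berlekamp--Massey, do a Chien search restricted to the at most $n$ positions, and use Forney.

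\textbf{Complexity, the main obstacle.} The expected obstacle is matching the stated complexities. For encoding, polynomial division of a length-$n$ polynomial by $g$ of degree $O(\log n)$ takes $O(n\log n)$ operations in $\F_Q$, which are $O(1)$ each. For decoding, the $2t$ syndromes are evaluations in $\F_{Q^m}$. Each costs $O(n)$ field operations, and each such operation costs $O(m^2)=O(\log^2 n)$ with naive arithmetic, giving $O(n\log^2 n)$. The Chien search likewise evaluates a degree-$t$ locator at $n$ points, for $O(n\log^2 n)$. Berlekamp--Massey and Forney are $O(1)$ operations in $\F_{Q^m}$. The one care point is constructing $\F_{Q^m}$ and the generator polynomial $g$. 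I would argue that for fixed $Q$, an irreducible polynomial of degree $m=O(\log n)$ can be found by brute force in $\mathrm{poly}(\log n)$ time, and that computing $g$ then takes $\mathrm{poly}(\log n)=o(n)$ time.
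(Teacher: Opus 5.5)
Your proposal is correct and follows the paper's proof in Appendix~\ref{appendixA} essentially step for step. Both use a shortened systematic primitive narrow-sense BCH code over $\F_Q$ with $m=\lceil\log_Q(n+1)\rceil$, embed the $q$-ary message symbolwise into $\F_Q$, apply a fixed injection $\F_Q\to\Sigma_q^{\lceil\log_q Q\rceil}$ only to the $r$ parity symbols, and give the same distance and decoding arguments. Your complexity accounting (Horner syndromes and Chien search at $O(m^2)$ cost per $\F_{Q^m}$-operation) is more explicit than the paper's citation of the textbook.

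One remark is loose: a naive deterministic enumeration of degree-$m$ polynomials may scan $\Theta(Q^m)=\Theta(n)$ candidates, not $\mathrm{poly}(\log n)$. This is a one-time code-construction cost, outside the encoding and decoding times the lemma claims, and the paper does not account for it either.
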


We now apply Lemma~\ref{lem:general-q-bch} with
$n=n_d(k,t)$. Fix a prime power $Q\geq q$. For all sufficiently
large $k$, let $\cE'$ be the code given by the lemma. Then
$\cE'$ has minimum distance at least $2t+1$ and redundancy at most
$$
\left\lceil\log_q Q\right\rceil
\left(2t-\left\lfloor\frac{2t}{Q}\right\rfloor\right)
\left\lceil
\log_Q\bigl(n_d(k,t)+1\bigr)
\right\rceil .
$$
Combining this with Lemma~\ref{lemma:fixed-diagonal} gives the
following result.

\begin{corollary}\label{cor:marker}
For fixed $d\geq2$, $q\geq2$, and $t\geq1$, fix a prime power
$Q\geq q$.
For all sufficiently large $k$, there exists an explicit
$t$-substitution-correcting $d$D-CPC with redundancy at most
\begin{align*}
k^d-(k-1)^d+(2t+1)(k-1)
+
\left\lceil\log_q Q\right\rceil
\left(2t-\left\lfloor\frac{2t}{Q}\right\rfloor\right)
\left\lceil
\log_Q\bigl(n_d(k,t)+1\bigr)
\right\rceil
=
O(k^{d-1}).
\end{align*}
Encoding takes $O(k^d\log k)$ time and decoding takes
$O(k^d\log^2 k)$ time.
\end{corollary}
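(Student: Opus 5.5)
The statement combines Lemma~\ref{lemma:fixed-diagonal} with Lemma~\ref{lem:general-q-bch}, applied with $n=n_d(k,t)=(k-1)^d-(2t+1)(k-1)$.

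\emph{Step 1 (validity).} For sufficiently large $k$ we have $2t+1\le k-1$, so the diagonals $D_0,\ldots,D_{2t}$ exist and are pairwise disjoint. We also have $n_d(k,t)\to\infty$, so Lemma~\ref{lem:general-q-bch} applies and gives a systematic code $\cE'\subseteq\Sigma_q^{n_d(k,t)}$ with minimum distance at least $2t+1$. Lemma~\ref{lemma:fixed-diagonal} then shows that $\cB^{\mathrm{fd}}_{d,k,t}(\cE')$ is a $t$-substitution-correcting $d$D-CPC.

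\emph{Step 2 (redundancy).} The code is explicit because $\cE'$ is systematic and the embedding $\bm u\mapsto\bm B^{\mathrm{fd}}(\bm u)$ is a fixed placement of symbols. Its redundancy is given by Lemma~\ref{lemma:fixed-diagonal}. Substitute $n_d(k,t)-\log_q|\cE'|=\rho(\cE')$ and bound this term by \eqref{eq:general-q-bch-redundancy}; this yields exactly the displayed expression. For the $O(k^{d-1})$ estimate, use three facts:
\begin{itemize}
\item $k^d-(k-1)^d\le dk^{d-1}$ by the binomial expansion or the mean value theorem;
\item the term $(2t+1)(k-1)$ is $O(k)\subseteq O(k^{d-1})$ since $d\ge2$;
\item the BCH term is $O(\log k)$, because $q$, $Q$ and $t$ are fixed and $n_d(k,t)\le k^d$.
\end{itemize}

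\emph{Step 3 (complexity).} Encoding consists of filling the zero marker hyperplanes and the fixed diagonals in $O(k^d)$ time, then running the BCH encoder on a word of length $n=O(k^d)$. That encoder takes $O(n\log n)=O(k^d\log k)$ time, using $\log n=O(d\log k)$ with $d$ fixed.

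Decoding proceeds as in Theorem~\ref{thm:robust-marker}:
\begin{itemize}
\item compute the weights of all $dk$ coordinate hyperplanes in $O(dk^d)$ total time, by one pass that accumulates each entry into its $d$ hyperplanes;
\item select the unique hyperplane of weight at most $t$ in each direction, which gives $-\bm a$, and undo the translation in $O(k^d)$ time;
\item extract the payload positions and run the BCH decoder in $O(n\log^2 n)=O(k^d\log^2 k)$ time.
\end{itemize}

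The argument is essentially bookkeeping. The only points needing care are the "sufficiently large $k$" conditions: $2t+1\le k-1$, and $n_d(k,t)$ exceeding the threshold required by Lemma~\ref{lem:general-q-bch}. Both hold because $n_d(k,t)\ge (k-1)^{d-1}(k-1-2t-1)$ grows without bound for $d\ge2$. A further minor point is the uniqueness of the low-weight marker hyperplanes, but that is already settled by Theorem~\ref{thm:robust-marker} through membership in $\cG_{d,k-1,t}$ in Lemma~\ref{lemma:fixed-diagonal}.
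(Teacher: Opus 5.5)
Your proposal is correct and follows the paper's own argument. The paper likewise instantiates Lemma~\ref{lem:general-q-bch} at $n=n_d(k,t)$, substitutes the resulting code $\mathcal{E}'$ into Lemma~\ref{lemma:fixed-diagonal}, and adds the $O(k^d)$ cost of filling the fixed positions and locating the marker hyperplanes. You also spell out the $O(k^{d-1})$ estimate and the large-$k$ conditions, which the paper leaves implicit.
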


\subsection{Robust Row-Anchor Based Construction}
\label{subsec:robust-row-anchor}

The row-anchor construction extends to the noisy setting by replacing the unique marker with one whose Hamming distance from every other row window is at least $2t+1$. After at most $t$ substitutions, the marker still determines the translation uniquely, and a global error-correcting code then recovers the entire array. We first give the general construction and then provide an explicit encoder and decoder.

Let $D=2t+1$ and fix a marker $\bm{\mu}\in\Sigma_q^m$ with $D<m\le k$.  For $\bm{X}\in\Sigma_q^{\Z_k^d}$, $\bm{i}\in\Z_k^{d-1}$, and $c\in\Z_k$, define
\begin{equation*}
\operatorname{win}_{\bm{i},c}(\bm{X})
=
\left(
X_{(\bm{i},(c+j)\bmod k)}
\right)_{j=0}^{m-1}.
\end{equation*}
Since $m\le k$, each coordinate occurs at most once in a window. By the definition of translation,
\[
\operatorname{win}_{\bm{i},c}\bigl(T_{(\bm{a},b)}\bm{X}\bigr)
=
\operatorname{win}_{\bm{i}+\bm{a},\,c+b}(\bm{X}).
\]

Let $\cC\subseteq\Sigma_q^{\Z_k^d}$ have minimum Hamming distance at least $D$. We require the marker at position $(\bm{0},0)$ and exclude every other window within distance $D-1$ of it. Define
\begin{equation*}
\cB^{\mathrm{row}}_{d,k,t}(\bm{\mu},\cC)
=
\left\{
\bm{X}\in\cC\;\middle|\;
\begin{array}{l}
\operatorname{win}_{\bm{0},0}(\bm{X})=\bm{\mu},\\[2pt]
d_H\bigl(\operatorname{win}_{\bm{i},c}(\bm{X}),\bm{\mu}\bigr)\geq D
\text{ for every }(\bm{i},c)\neq(\bm{0},0)
\end{array}
\right\}.
\end{equation*}

\begin{theorem}\label{thm:robust-row-anchor}
Let $\cC\subseteq\Sigma_q^{\Z_k^d}$ have minimum Hamming distance at least $D$. The code $\cB^{\mathrm{row}}_{d,k,t}(\bm{\mu},\cC)$ is a $t$-substitution-correcting $d$D-CPC.
\end{theorem}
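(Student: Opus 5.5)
The plan is to give a decoder, as in Theorem~\ref{thm:robust-marker}. Suppose $\bm Z$ is received with $d_H(\bm Z,T_{(\bm a,b)}\bm X)\le t$ for some $\bm X\in\cB^{\mathrm{row}}_{d,k,t}(\bm\mu,\cC)$ and $(\bm a,b)\in\Z_k^{d-1}\times\Z_k$. Put $\bm Y=T_{(\bm a,b)}\bm X$. The decoder scans all $k^d$ windows $\operatorname{win}_{\bm i,c}(\bm Z)$ and selects the one position $(\bm i_0,c_0)$ with $d_H(\operatorname{win}_{\bm i_0,c_0}(\bm Z),\bm\mu)\le t$.

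The first step is to show this position exists and is unique. Since $m\le k$, a window consists of $m$ distinct coordinates, so for every $(\bm i,c)$ we have
\[
d_H\bigl(\operatorname{win}_{\bm i,c}(\bm Z),\operatorname{win}_{\bm i,c}(\bm Y)\bigr)\le t .
\]
The translation identity gives $\operatorname{win}_{\bm i,c}(\bm Y)=\operatorname{win}_{\bm i+\bm a,c+b}(\bm X)$.

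\begin{itemize}
\item For $(\bm i,c)=(-\bm a,-b)$, this window of $\bm X$ equals $\bm\mu$. Hence the window of $\bm Z$ is within distance $t$ of $\bm\mu$.
\item For every other $(\bm i,c)$, the position $(\bm i+\bm a,c+b)$ differs from $(\bm 0,0)$. So the window of $\bm X$ is at distance at least $D=2t+1$ from $\bm\mu$, and by the triangle inequality the window of $\bm Z$ is at distance at least $t+1$.
\end{itemize}

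Hence $(\bm i_0,c_0)=(-\bm a,-b)$ is the unique such position, and it determines the translation.

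Next, apply $T_{(\bm i_0,c_0)}=T_{-(\bm a,b)}$ to $\bm Z$. Since translation is a permutation of coordinates, it preserves Hamming distance, so $d_H(T_{(\bm i_0,c_0)}\bm Z,\bm X)\le t$. Because $\bm X\in\cC$ and $d_H(\cC)\ge 2t+1$, bounded-distance decoding of $\cC$ returns $\bm X$ uniquely. Thus $\bm Z$ determines the pair $(\bm X,(\bm a,b))$.

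The main point to handle carefully is the windowed distance bound. It relies on $m\le k$, so that the $t$ global substitutions affect at most $t$ entries of any single window. It also relies on the marker condition being stated over all window positions $(\bm i,c)\ne(\bm 0,0)$, including wrap-around windows within the same row as the anchor. Given the definition, both facts are immediate, so the argument should go through directly.
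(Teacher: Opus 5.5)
Your proposal is correct and follows essentially the same route as the paper: use the $D=2t+1$ separation and the triangle inequality to show that $(-\bm a,-b)$ is the unique window position within distance $t$ of $\bm\mu$, then invert the translation and decode in $\cC$. Your remark that $m\le k$ guarantees each window sees at most $t$ of the substitutions spells out a step the paper states only in passing.
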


\begin{proof}
We establish the CPC property by describing a decoder. Suppose $d_H\bigl(\bm{Z},T_{(\bm{a},b)}\bm{X}\bigr)\le t$ for some $\bm{X}\in\cB^{\mathrm{row}}_{d,k,t}(\bm{\mu},\cC)$. Before corruption, the marker occurs at $(-\bm{a},-b)$ in the translated array, so
\[
d_H\bigl(\operatorname{win}_{-\bm{a},-b}(\bm{Z}),\bm{\mu}\bigr)\le t.
\]
For any other pair $(\bm{i},c)$, the corresponding window before corruption is a non-marker window of $\bm{X}$ and therefore has distance at least $D$ from $\bm{\mu}$. Since the whole array contains at most $t$ substitutions, its distance from $\bm{\mu}$ after corruption is at least $D-t=t+1$. Thus $(-\bm{a},-b)$ is the unique pair whose window lies within distance $t$ of $\bm{\mu}$, and the translation is uniquely determined.

After applying the inverse translation, the received array is within distance $t$ of $\bm{X}$. Since $d_H(\cC)\ge D=2t+1$, the codeword $\bm{X}$ is also uniquely determined. Hence the received array uniquely determines both the codeword and the translation. This completes the proof.
\end{proof}

We next give an explicit encoder and decoder. Fix $q\ge2$, $d\ge2$, and $t\ge1$, and set $D=2t+1$.

\begin{definition}\label{def:row-anchor-wwl}
Let $1\le\Delta\le\ell\le n$. A sequence $\bm v\in\Sigma_q^n$ is
\emph{$(\ell,\Delta)$-window-weight-limited}, or \emph{$(\ell,\Delta)$-WWL}, if every non-cyclic length-$\ell$ window has Hamming weight at least $\Delta$. It is \emph{cyclically $(\ell,\Delta)$-WWL} if this holds for every cyclic window. An array is \emph{row-cyclic $(\ell,\Delta)$-WWL} if each of its rows is cyclically $(\ell,\Delta)$-WWL.
\end{definition}

\begin{remark}\label{rmk:wwl-one-symbol-extension}
If $\bm v\in\Sigma_q^n$ is cyclically $(\ell,\Delta+1)$-WWL, then $\bm v\circ\eta$ is cyclically $(\ell,\Delta)$-WWL for every $\eta\in\Sigma_q$.
\end{remark}

The following WWL coding result is proved in Appendix~\ref{appendixB}.

\begin{lemma}\label{lem:row-cyclic-wwl}
Let $q\ge2$ and let $n$ be a positive multiple of $\kappa$. Suppose $1\le\Delta\le\ell$, $2\ell\le\kappa$, and
\begin{equation}\label{eq:WWL-pointer-capacity}
    q^{\ell-\Delta}\ge nV_q(\ell,\Delta-1).
\end{equation}
Then there exists an injective encoder
$\mathrm{ENC}^{\mathrm{WWL}}_{n,\kappa,\Delta}[\ell]:
    \Sigma_q^{n-1}\longrightarrow
    \Sigma_q^{(n/\kappa)\times\kappa}$
whose output is row-cyclic $(\ell,\Delta)$-WWL. Denote its inverse by $\mathrm{DEC}^{\mathrm{WWL}}_{n,\kappa,\Delta}[\ell]$. For fixed $q$ and $\Delta$, both algorithms take $O(n\ell^2)$ time.
\end{lemma}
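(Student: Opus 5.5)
The plan is to adapt the sequence-replacement encoder of Algorithm~\ref{alg:row-cyclic-encoder} and Theorem~\ref{thm:row-cyclic-rll}. There, a forbidden pattern was the single word $0^f$. Here the forbidden patterns are all row-cyclic length-$\ell$ windows of weight at most $\Delta-1$.

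\textbf{Encoder.} Start from $\bm y=\bm x\circ 0$, viewed as an $(n/\kappa)\times\kappa$ array filled row by row. Scan the row-cyclic windows from left to right. Whenever the first \emph{light} window (weight $\le\Delta-1$) is found at position $i'=r\kappa+c$:
\begin{itemize}
\item delete its $\ell$ symbols, using the same two-block deletion as before when the window wraps around the row end;
\item append a pointer of length exactly $\ell$ that records the position $i'$ and the content of the deleted window.
\end{itemize}
The content has at most $V_q(\ell,\Delta-1)$ possible values, since a light window is a word of weight $\le\Delta-1$. The pair (position, content) therefore ranges over a set of size at most $nV_q(\ell,\Delta-1)$.

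\textbf{Pointer design.} By \eqref{eq:WWL-pointer-capacity}, this set injects into $\Sigma_q^{\ell-\Delta}$. The pointer is then that index word followed by a fixed tail of $\Delta$ nonzero symbols, for example $1^{\Delta}$. This keeps the total length at $n$. The tail also guarantees that the last symbol of the word identifies whether a replacement has occurred, exactly as the final $1$ did in Algorithm~\ref{alg:row-cyclic-decoder}. The decoder repeatedly reads the last $\ell$ symbols, decodes the position and content, and reinserts the window. It stops when the last symbol is the initial $0$, then deletes that $0$. Injectivity and the inverse property follow from the same reversal argument as in Theorem~\ref{thm:row-cyclic-rll}.

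\textbf{Termination.} Weight alone no longer suffices: a deleted window can carry up to $\Delta-1$ nonzeros, while the pointer adds at least $\Delta$. So each step raises the total weight by at least one, which bounds the number of replacements by $n$. Equivalently, one can use a potential such as the number of nonzeros, which strictly increases and is at most $n$.

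\textbf{Main obstacle: the scanning invariant.} We must show that after a deletion, scanning may resume a bounded distance before $i'$. The invariant is that no light window starts before the resume point. The trouble is the following.
\begin{itemize}
\item Unlike the all-zero case, a window straddling the deletion gap is not a sub-pattern of an earlier light window.
\item The deleted symbols need not be zero, so closing the gap can \emph{create} a new light window to the left of $i'$, formed from the old prefix and the shifted suffix.
\end{itemize}
I would handle this by resuming the scan from the start of the current row, $r\kappa$, or more economically from $i'-\ell+1$. Rows before $r$ are untouched, since deletion and appending only affect row $r$ and later positions. Appended pointers may themselves create light windows, which the scan must also catch. Every row-cyclic window lies within one row, so correctness needs only the invariant ``no light window in rows $<r$'', which is preserved.

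\textbf{Role of $2\ell\le\kappa$.} This condition ensures that a pointer of length $\ell$ fits and that the wrap-around deletion is well defined within a row.

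\textbf{Complexity.} Each restart costs $O(\kappa)$ rescanning in the worst case. The goal is $O(n\ell^2)$, so I would restart at $i'-\ell$ rather than at the row start. I would then argue, as in the earlier proof, that windows starting before $i'-\ell$ lie entirely in the unchanged prefix of the row or span the gap, and handle the spanning ones by direct rescanning. The total number of scanned positions is then $O(n+R\ell)=O(n\ell)$. Computing each window weight naively costs $O(\ell)$, giving $O(n\ell^2)$ overall. Pointer conversions cost $O(\ell)$ each.
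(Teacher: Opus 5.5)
Your encoder, pointer $\operatorname{Rep}_{q,\ell-\Delta}(\cdot)\circ1^\Delta$, weight-increase termination argument and last-pointer-first decoder are the paper's. With the conservative restart at $r\kappa$ they do give an injective encoder with row-cyclic $(\ell,\Delta)$-WWL output. The gap is the $O(n\ell^2)$ running time, which is part of the statement. As you concede, the row-start restart only yields $O(n\kappa\ell)$, so the complexity rests on your cheaper rule: resume at $i'-\ell$ because earlier windows of row $r$ ``lie in the unchanged prefix or span the gap''. That rule is correct only in the non-crossing case. The crossing-case argument of Theorem~\ref{thm:row-cyclic-rll} cannot be transplanted, because it used that the deleted symbols are zero, which is exactly the point you flagged.

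In the crossing case $c>\kappa-\ell$, $b=\ell-(\kappa-c)$, the row prefix is not unchanged. Deleting the first $b$ symbols shifts the retained middle $\bm{y}_{[r\kappa+b,\,i'-1]}$ left by $b$, and the last $\ell$ positions of row $r$ are refilled from the next row or the pointer. So every window starting at $s\in[\kappa-2\ell+1,\kappa-1]$ is new, and those with $s<c-\ell$ lie before your restart point.

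This genuinely breaks the invariant. Take $q=2$, $\Delta=2$, $\ell=12$, $\kappa=24$, $n=48$, which satisfy all hypotheses. Let row $0$ be $1\,0^{10}\,1^{10}\,0\,1\,0$ and let row $1$ begin with $0^{10}11$.
\begin{itemize}
\item The first light window is the crossing one at $c=23$.
\item After the two-block deletion, row $0$ becomes $1^{10}\,0\,1\,0^{10}\,1\,1$. Its window at position $10$ has weight $1$, while every window starting at $11,\dots,23$ has weight at least $2$.
\item A scan resumed at $i'-\ell=11$ therefore moves on to row $1$, and the light window survives in the output.
\end{itemize}
Your phrase ``handle the spanning ones by direct rescanning'' does not say which windows must be rescanned, nor why only $O(\ell)$ of them.

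The missing idea is the paper's crossing-case restart. For $s\le\kappa-2\ell$, the new window at $s$ lies entirely in the retained middle and equals the old window at $s+b\le c-\ell<c$. That window is not light by the minimality of $i'$. Hence one may resume at $r\kappa+\kappa-2\ell+1$, at most $2\ell-2$ positions before $i'$, giving $O(n+R\ell)=O(n\ell)$ scanned windows.

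This is where $2\ell\le\kappa$ is really used: it makes the retained middle, of length $\kappa-\ell$, at least $\ell$ long, so the restart point stays in row $r$. Your stated role for this condition, that the pointer fits and the wrap-around deletion is well defined, needs only $\ell\le\kappa$.
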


To distinguish overlapping marker positions, we use the auto-cyclic sequence of Levy and Yaakobi \cite[Section~VI]{LevyYaakobi2019}.

\begin{definition}\label{def:row-anchor-auto-cyclic}
A sequence $\bm u$ of length $\lambda\ge D$ is \emph{$D$-auto-cyclic} if
\begin{equation}\label{eq:anchor-auto-cyclic-property}
    d_H\bigl(\bm u,(0^i\circ\bm u)_{[0,\lambda-1]}\bigr)\ge D,
    \qquad 1\le i\le D.
\end{equation}
\end{definition}

\begin{lemma}{\cite[Section~VI]{LevyYaakobi2019}}\label{lem:row-anchor-auto-cyclic}
For $0\le j<\lceil\log_2D\rceil$, set $\bm u_j=\bigl((1^{2^j}\circ0^{2^j})^D\bigr)_{[0,D-1]}$. Then
\[
    \bm u=1^D\circ\bm u_0\circ\cdots\circ
    \bm u_{\lceil\log_2D\rceil-1}
\]
is $D$-auto-cyclic and has length $\lambda(D)=D\lceil\log_2D\rceil+D$.
\end{lemma}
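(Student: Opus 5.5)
The plan is to verify the length by direct counting and then, for each shift $1\le i\le D$, to exhibit $D$ mismatches coming from only two blocks of $\bm u$: the leading run $1^D$ and one well-chosen periodic block $\bm u_j$. Write $L=\lceil\log_2 D\rceil$.

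\emph{Length.} Each $\bm u_j$ is the length-$D$ prefix of a periodic word, so $|\bm u_j|=D$. Hence $\lambda(D)=D+LD$, as claimed. Note also that $D=2t+1$ is odd and at least $3$, so $L\ge2$ and $D$ is not a power of two.

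\emph{Setup for a fixed shift.} Fix $1\le i\le D$ and set $\bm w=(0^i\circ\bm u)_{[0,\lambda-1]}$. Then $w_p=0$ for $p<i$, and $w_p=u_{p-i}$ for $p\ge i$.

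\emph{Contribution of the leading run.} Since $i\le D$ and $\bm u$ begins with $1^D$, every position $p<i$ has $u_p=1\ne0=w_p$. This gives $i$ mismatches. If $i=D$ we are already done, so the remaining work is to find $D-i$ further mismatches at positions $p\ge D$.

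\emph{Choice of block.} Let $j=v_2(i)$ be the $2$-adic valuation of $i$. Then $2^j\le i\le D$, and since $D$ is not a power of two we get $2^j<D$, hence $j<\log_2 D\le L$. So the block $\bm u_j$ exists. It occupies positions $P_j=[D+jD,\,D+jD+D-1]$ and agrees there with the infinite square wave $s$ of period $2^{j+1}$, where $s_r=1$ if $r \bmod 2^{j+1}<2^j$ and $s_r=0$ otherwise (indexed from the block start).

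\emph{Mismatches inside the block.} Consider the positions $p\in P_j$ with $p-i\in P_j$. These are the last $D-i$ positions of the block. At such a position, $w_p$ is the square wave evaluated at index $r-i$, where $r$ is the index of $p$ within the block. Because $v_2(i)=j$, we have $i\equiv 2^j\pmod{2^{j+1}}$. Shifting $s$ by half its period complements it, so $s_{r-i}=1-s_r$. Thus $w_p\ne u_p$ at all $D-i$ of these positions.

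\emph{Conclusion.} The two families of mismatches are disjoint: the first lies at positions $p<i\le D$, the second inside $P_j$, which starts at position $D$. Therefore
\[
d_H(\bm u,\bm w)\ge i+(D-i)=D .
\]

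The only delicate step is the choice $j=v_2(i)$ and checking that it is a valid block index. That check relies on $D$ being odd, which holds since $D=2t+1$. The remaining positions, including the boundary positions of $\bm u_j$ whose shifted content comes from the previous block, are simply ignored, since the two families already supply $D$ mismatches.
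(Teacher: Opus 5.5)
Your proof is correct and complete. The paper gives no argument of its own for this lemma; it simply cites Levy and Yaakobi. So your write-up is a self-contained verification rather than an alternative to a proof in the text.

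Your decomposition has two parts. The leading run $1^D$ forces $i$ mismatches. The single block $\bm u_{v_2(i)}$ then supplies $D-i$ more, because a shift by an odd multiple of $2^{j}$ is a half-period shift of the period-$2^{j+1}$ square wave and therefore complements it. This is exactly the split $D\le i+\Delta_i$ that the paper later uses in the proof of Theorem~\ref{thm:explicit-robust-row-anchor}. Your argument also shows that the bound $\Delta_i\ge D-i$ already comes from one block alone.

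One small remark: you do not need $D$ to be odd. For $i<D$ you have $2^{v_2(i)}\le i<D$, so $v_2(i)<\log_2 D\le\lceil\log_2 D\rceil$ automatically. The case $i=D$ is settled by the leading run with no block needed. Hence your argument proves the lemma for every $D\ge1$, not only for $D=2t+1$.
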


Set $R=k^{d-1}-1$ and $N=(k-1)R$, and fix an ordering $\bm i_1,\ldots,\bm i_R$ of $\Z_k^{d-1}\setminus\{\bm0\}$. Fix a prime power $Q\ge q$. By Lemma~\ref{lem:general-q-bch}, choose a systematic code $\cC_{\mathrm{BCH}}\subseteq\Sigma_q^{k^d}$ with minimum distance at least $D$ and integer redundancy
\[
    r_{\mathrm{BCH}}\le
    \left\lceil\log_q Q\right\rceil
    \left(2t-\left\lfloor\frac{2t}{Q}\right\rfloor\right)
    \left\lceil\log_Q(k^d+1)\right\rceil.
\]
Write its encoder as $\mathrm{ENC}_{\mathrm{BCH}}(\bm s)=\bm s\circ\bm\gamma(\bm s)$, where $\bm\gamma(\bm s)\in\Sigma_q^{r_{\mathrm{BCH}}}$.

Let 
\[
    \ell=\min\{L\geq D+1: q^{L-D-1}\geq NV_q(L,D)\},
\]
we have
\begin{equation}\label{eq:marker-length}
\begin{aligned}
    \ell&=\log_q N+D\log_q\log_q N+O(1)=d\log_q k+D\log_q\log_q k+O(1).
\end{aligned}
\end{equation}
Set $n_{\mathrm A}=k-\ell-\lambda(D)-2D$ and assume
\[
    r_{\mathrm{BCH}}<R,\qquad 2\ell\le k-1,\qquad
    \ell\ge\lambda(D)+2D,\qquad n_{\mathrm A}\ge2\ell.
\]
These conditions hold for sufficiently large $k$. The message consists of
\[
    \bm x_{\mathrm A}\in\Sigma_q^{n_{\mathrm A}-1},\qquad
    \bm x_{\mathrm W}\in\Sigma_q^{N-1},\qquad
    \bm z\in\Sigma_q^{R-r_{\mathrm{BCH}}}.
\]

\subsubsection{Anchor-row encoding} Let $\bm u$ be the $D$-auto-cyclic sequence defined in Lemma~\ref{lem:row-anchor-auto-cyclic}. We define the synchronization marker as
\begin{equation*}
    \bm\mu=1^D\circ0^\ell\circ\bm u\circ1^D.
\end{equation*}
By the choice of $\ell$ in \eqref{eq:marker-length}, we have $q^{\ell-D}\ge qNV_q(\ell,D)\ge n_{\mathrm A}V_q(\ell,D-1)$.
Thus we can use the WWL encoder $\mathrm{ENC}^{\mathrm{WWL}}_{n_{\mathrm A},n_{\mathrm A},D}[\ell]$ from Lemma~\ref{lem:row-cyclic-wwl} to encode $\bm{x}_{\mathrm A}\in\Sigma_q^{n_{\mathrm A}-1}$ into a cyclically $(\ell,D)$-WWL sequence of length $n_{\mathrm A}$. Finally, the anchor row is encoded as
\[
    \bm a_{\mathrm A}
    =\bm\mu\circ \bm{c}
    \in\Sigma_q^k, \qquad \text{where} \qquad \bm{c}=\mathrm{ENC}^{\mathrm{WWL}}_{n_{\mathrm A},n_{\mathrm A},D}[\ell](\bm{x}_{\mathrm A}).
\]

\subsubsection{Joint encoding of non-anchor row prefixes}
By the choice of $\ell$, we have $q^{\ell-D-1}\ge NV_q(\ell,D)$. Thus, we can use WWL encoder $\mathrm{ENC}^{\mathrm{WWL}}_{N,k-1,D+1}[\ell]$ to encode $\bm{x}_W\in \Sigma_q^{N-1}$ into a cyclically $(\ell,D+1)$ array of size $\frac{N}{k-1}\times (k-1)$.
Set
\begin{equation*}
    \bm y=\mathrm{ENC}^{\mathrm{WWL}}_{N,k-1,D+1}[\ell]
    (\bm x_{\mathrm W})=\bm y_1\circ\cdots\circ\bm y_R \in \Sigma_q^N,
\end{equation*}
where each $\bm y_j\in\Sigma_q^{k-1}$ is cyclically $(\ell,D+1)$-WWL. By Remark~\ref{rmk:wwl-one-symbol-extension}, appending any symbol to $\bm y_j$ gives a cyclically $(\ell,D)$-WWL row.

\subsubsection{Global error correction and placement}
Form the systematic vector
\begin{equation*}
    \bm s=\bm a_{\mathrm A}\circ\bm y\circ\bm z
    \in\Sigma_q^{k^d-r_{\mathrm{BCH}}},
\end{equation*}
and compute $\bm\gamma(\bm s)=(p_1,\ldots,p_{r_{\mathrm{BCH}}})$. Place these symbols as follows:
\begin{equation*}
\begin{aligned}
    \operatorname{row}_{\bm0}(\bm X)&=\bm a_{\mathrm A},\\
    \operatorname{row}_{\bm i_j}(\bm X)&=\bm y_j\circ p_j,
    &&1\le j\le r_{\mathrm{BCH}},\\
    \operatorname{row}_{\bm i_j}(\bm X)
    &=\bm y_j\circ z_{j-r_{\mathrm{BCH}}},
    &&r_{\mathrm{BCH}}<j\le R.
\end{aligned}
\end{equation*}
Let $\cB^{\mathrm{row,exp}}_{d,k,t}$ denote the encoder image, and let $\Pi$ be the fixed coordinate permutation implementing this placement. Then $\bm X=\Pi(\bm s\circ\bm\gamma(\bm s))$, so the BCH-based code simultaneously protects the anchor row, all row prefixes, and $\bm z$.
Set
\[
\widetilde{\cC}_{\mathrm{BCH}}=\Pi(\cC_{\mathrm{BCH}}).
\]

\subsubsection{Decoding}
Given $\bm Z$ within distance $t$ of some translate of $\bm X$, find the unique window starting at $(\bm i_0,c_0)$ that is within distance $t$ of $\bm\mu$. If no such window exists or it is not unique, declare failure. Apply $T_{(\bm i_0,c_0)}$, followed by $\Pi^{-1}$ and the BCH-based decoder, to recover $\bm s$. Extract $\bm c$ and apply $\mathrm{ENC}^{\mathrm{WWL}}_{n_{\mathrm A},n_{\mathrm A},D}[\ell]$ to recover $\bm x_{\mathrm A}$. Apply $\mathrm{DEC}^{\mathrm{WWL}}_{N,k-1,D+1}[\ell]$ to $\bm y$ to recover $\bm x_{\mathrm W}$, and read $\bm z$ directly.

\begin{theorem}\label{thm:explicit-robust-row-anchor}
For fixed $q\ge2$, $d\ge2$, $t\ge1$, and prime power $Q\ge q$, for sufficiently large $k$, 
\begin{equation*}
    \cB^{\mathrm{row,exp}}_{d,k,t}
    \subseteq\cB^{\mathrm{row}}_{d,k,t}
    (\bm\mu,\widetilde{\cC}_{\mathrm{BCH}})
\end{equation*}
is a $t$-substitution-correcting $d$D-CPC with redundancy
\begin{equation*}
\rho\bigl(\cB^{\mathrm{row,exp}}_{d,k,t}\bigr)
    =\ell+\lambda(D)+2D+2+r_{\mathrm{BCH}}
    \le d\log_q k+D\log_q\log_q k+r_{\mathrm{BCH}}+O_{q,d,t}(1).
\end{equation*}
Both encoding and decoding take $O(k^d\log^2 k)$ word operations for fixed $q,Q,d,t$.
\end{theorem}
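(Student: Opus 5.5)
The plan is to verify membership in $\cB^{\mathrm{row}}_{d,k,t}(\bm\mu,\widetilde{\cC}_{\mathrm{BCH}})$ and then let Theorem~\ref{thm:robust-row-anchor} supply the CPC property. The redundancy and complexity are then routine accounting. Every encoder output has the form $\bm X=\Pi(\bm s\circ\bm\gamma(\bm s))$, so $\bm X\in\widetilde{\cC}_{\mathrm{BCH}}$. Since $\Pi$ is a coordinate permutation, $d_H(\widetilde{\cC}_{\mathrm{BCH}})\ge D$. Also $\operatorname{win}_{\bm0,0}(\bm X)=\bm\mu$, because $\bm\mu$ is the prefix of the anchor row and $|\bm\mu|=\ell+\lambda(D)+2D=m\le k$. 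What remains is the separation condition: $d_H(\operatorname{win}_{\bm i,c}(\bm X),\bm\mu)\ge D$ for every $(\bm i,c)\neq(\bm0,0)$. This is the main obstacle.

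For a non-anchor row $\bm i\neq\bm0$, the row is $\bm y_j\circ\eta$. By Remark~\ref{rmk:wwl-one-symbol-extension} it is cyclically $(\ell,D)$-WWL, so every cyclic length-$\ell$ window has weight at least $D$. Take the length-$\ell$ subwindow of $\operatorname{win}_{\bm i,c}$ aligned with the $0^\ell$ block of $\bm\mu$, at offset $D$. It has weight at least $D$, and $\bm\mu$ vanishes there, so the distance is at least $D$. Here I use $m\le k$, so the subwindow is a genuine cyclic window of the row.

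For the anchor row with $c\neq0$, I split by the position of the shifted $0^\ell$ block. If it lies entirely inside the payload $\bm c$, the argument above applies. For cyclic wrap-around this uses that $\bm c$ is cyclically WWL and $n_{\mathrm A}\ge 2\ell$, but windows straddling the $\bm c$/$\bm\mu$ boundary need care. If $c$ is small, $1\le c\le D$, the compared block $0^\ell$ versus $\bm\mu$ shifted by $c$ sees part of the leading $1^D$, giving at least $c$ mismatches. I would complete this count to $D$ using the auto-cyclic property~\eqref{eq:anchor-auto-cyclic-property} of $\bm u$: the segment $0^\ell\circ\bm u$ shifted by $i\le D$ is compared against $\bm u$, and Lemma~\ref{lem:row-anchor-auto-cyclic} gives $D$ mismatches. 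For moderate shifts $D<c<\ell+D$, the shifted $0^\ell$ region overlaps the $\bm u\circ1^D$ area or the $1^D$ blocks, and $\ge D$ mismatches come from a full $1^D$ block aligned against zeros. The conditions $\ell\ge\lambda(D)+2D$ guarantee the block $\bm u\circ 1^D$ (weight $\ge D$, since $\bm u$ starts with $1^D$) lands inside the $0^\ell$ of $\bm\mu$. For shifts where the window straddles $\bm\mu$'s end and $\bm c$'s start, I would use the trailing $1^D$ of $\bm\mu$ against the $0^\ell$ block, or the WWL property of $\bm c$ on the portion of length $\ell$ lying in $\bm c$. The $1^D$ padding on both sides of $\bm\mu$ is designed to make these straddling cases work. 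I expect this case analysis to be the delicate part, and I would organize it by where the image of the zero block of $\bm\mu$ falls.

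With membership established, Theorem~\ref{thm:robust-row-anchor} gives the $t$-substitution-correcting property. The decoder recovers $\bm s$ and then inverts the constituent injective encoders, which confirms injectivity.

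For the redundancy, count lost symbols. The anchor row contributes $\ell+\lambda(D)+2D$ for $\bm\mu$ and $1$ for the WWL encoder. The non-anchor prefixes contribute $1$ in total. The parity symbols $p_j$ contribute $r_{\mathrm{BCH}}$, replacing $z$-symbols. This gives $\ell+\lambda(D)+2D+2+r_{\mathrm{BCH}}$. Substituting \eqref{eq:marker-length} and noting that $\lambda(D)$ is constant yields the stated bound.

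For the complexity, Lemma~\ref{lem:row-cyclic-wwl} costs $O(k^d\ell^2)=O(k^d\log^2k)$ and Lemma~\ref{lem:general-q-bch} costs $O(k^d\log^2 k)$. Marker search checks $k^d$ windows of length $O(\log k)$ in $O(k^d\log k)$ time.
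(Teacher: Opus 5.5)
Your route is the paper's: check membership in $\cB^{\mathrm{row}}_{d,k,t}(\bm\mu,\widetilde{\cC}_{\mathrm{BCH}})$, invoke Theorem~\ref{thm:robust-row-anchor}, and lower-bound each competing window by the weight of its length-$\ell$ block aligned with the $0^\ell$ of $\bm\mu$. Where that fails, you fall back on the $1^D$ blocks and the auto-cyclic property of $\bm u$. Your treatment of non-anchor rows via Remark~\ref{rmk:wwl-one-symbol-extension}, which the paper leaves implicit, is fine. So are the redundancy count and the complexity bound.

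The gap is in the anchor-row case split. You handle small shifts $1\le c\le D$, but not their mirror images $c\equiv -i \pmod k$ with $1\le i<D$, i.e., windows that begin in the last $i$ symbols of $\bm c$. Such a window equals $\bm c_{[n_{\mathrm A}-i,n_{\mathrm A}-1]}\circ1^D\circ0^\ell\circ\bm u\circ1^{D-i}$. Its block aligned with $0^\ell$ is $1^i\circ0^{\ell-i}$, of weight only $i<D$, so your primary inequality fails there. The tools you list for boundary-straddling windows do not rescue it either. The surviving part of the trailing $1^D$ is aligned against ones of $\bm\mu$, and only $i<\ell$ symbols of $\bm c$ are present. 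For $D\le i<\ell$ the aligned block contains the leading $1^D$ of $\bm\mu$, so only $i<D$ is problematic.

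The fix uses a tool you already have. Look at the length-$\lambda$ block of this window aligned with $\bm u$; it is $0^i\circ\bm u_{[0,\lambda-i-1]}$, so \eqref{eq:anchor-auto-cyclic-property} gives at least $D$ mismatches directly. This is the same comparison as in your positive-shift case with the roles of window and marker exchanged. The paper treats $s\equiv i$ and $s\equiv -i$ separately for exactly this reason.

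A smaller imprecision concerns aligned blocks that contain only $j<D$ symbols of a $1^D$ pad and $\ell-j$ symbols of $\bm c$. Here ``the WWL property of $\bm c$ on the portion of length $\ell$ lying in $\bm c$'' does not apply, since that portion is shorter than $\ell$. What you need is that $1^D\circ\bm c\circ1^D$ is $(\ell,D)$-WWL. To see this, complete the $\ell-j$ payload symbols to a cyclic length-$\ell$ window of $\bm c$ using $j$ symbols from its other end. Cyclic WWL then yields weight at least $D-j$ on the payload part, plus $j$ from the pad. The condition $n_{\mathrm A}\ge2\ell$ ensures a block meets at most one pad. With these two points added, your argument matches the paper's proof.
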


\begin{proof}
Write $\lambda=\lambda(D)$. By construction, every output $\bm X$ lies in $\widetilde{\cC}_{\mathrm{BCH}}$, and $\Pi$ preserves Hamming distance, so $d_H(\widetilde{\cC}_{\mathrm{BCH}})\ge D$. The anchor row satisfies
\[
    \operatorname{win}_{\bm0,0}(\bm X)=\bm\mu=1^D\circ0^\ell\circ\bm u\circ1^D,
\]
where $\bm u$ is the $D$-auto-cyclic sequence beginning with $1^D$ and $D<|\bm\mu|=\ell+\lambda+2D\le k$. It remains to show that every other window $\bm V_s$ of length $|\bm\mu|$ starting at position $s\ne0$ satisfies $d_H(\bm V_s,\bm\mu)\ge D$.
We use two facts.
\begin{itemize}
    \item Since $\bm c$ is cyclically $(\ell,D)$-WWL, the word $1^D\circ\bm c\circ1^D$ is $(\ell,D)$-WWL.
    \item Since $\bm u$ is $D$-auto-cyclic,
    \[
        d_H\bigl(\bm u,\,0^i\circ\bm u_{[0,\lambda-i-1]}\bigr)\ge D,
        \qquad 1\le i\le D.
    \]
\end{itemize}

Let $\bm W_s$ be the length-$\ell$ subblock of $\bm V_s$ aligned with $0^\ell$. Since the corresponding coordinates of $\bm\mu$ are zero,
\[
    d_H(\bm V_s,\bm\mu)\ge d_H(\bm W_s,0^\ell)=\wt(\bm W_s).
\]
We distinguish two cases.

\emph{Case 1: $\bm W_s$ avoids the zero block.}
If $\bm W_s$ also avoids $\bm u$, then $\bm W_s$ is a window of $1^D\circ\bm c\circ1^D$, so $\wt(\bm W_s)\ge D$ by the first fact. If $\bm W_s$ meets $\bm u$, then it extends past $\bm u$ into the following $1^D$, so again $\wt(\bm W_s)\ge D$.

\emph{Case 2: $\bm W_s$ intersects the zero block.}
Then $s\equiv i$ or $s\equiv-i\pmod k$ for a unique $i\in\{1,\dots,\ell-1\}$. If $i\ge D$, then $\bm W_s$ contains a block $1^D$ (either the prefix of $\bm a_{\mathrm A}$ or the prefix of $\bm u$), so $\wt(\bm W_s)\ge D$.
If $1\le i<D$, set
\[
    \Delta_i=d_H\bigl(\bm u_{[i,\lambda-1]},\bm u_{[0,\lambda-i-1]}\bigr).
\]
By the auto-cyclic property of $\bm u$,
\[
    D\le d_H\bigl(\bm u,\,0^i\circ\bm u_{[0,\lambda-i-1]}\bigr)=i+\Delta_i.
\]
We consider the two possible positions.
\begin{itemize}
    \item If $s\equiv-i\pmod k$, the length-$\lambda$ window of $\bm V_s$ aligned with $\bm u$ equals $0^i\circ\bm u_{[0,\lambda-i-1]}$, giving at least $D$ mismatches.
    \item If $s\equiv i\pmod k$, then $\bm V_s=1^{D-i}\circ0^\ell\circ\bm u\circ1^D\circ\bm c_{[0,i-1]}$.
    We can compute
    \begin{align*}
        d_H(\bm V_s,\bm\mu)
        &\ge d_H\bigl(1^{D-i}\circ0^\ell\circ\bm u,\,
        1^D\circ0^\ell\circ\bm u_{[0,\lambda-i-1]}\bigr)\\
        &\ge d_H\bigl(\bm u,\,0^i\circ\bm u_{[0,\lambda-i-1]}\bigr)\ge D.
    \end{align*}
\end{itemize}

Therefore,
\[
    d_H\bigl(\operatorname{win}_{\bm i,c}(\bm X),\bm\mu\bigr)\ge D
    \qquad\text{for every }(\bm i,c)\ne(\bm0,0).
\]
This proves that $\cB^{\mathrm{row,exp}}_{d,k,t}
\subseteq\cB^{\mathrm{row}}_{d,k,t}
(\bm\mu,\widetilde{\cC}_{\mathrm{BCH}})$ is a $t$-substitution-correcting $d$D-CPC.
Since the information length is
\[
    K=(n_{\mathrm A}-1)+(N-1)+(R-r_{\mathrm{BCH}}),
\]
the redundancy bound follows.

We finally bound the running time. The encoder uses the BCH encoder and the WWL encoder, which together take $O(k^d\ell^2)$ time. The decoder uses the BCH decoder and the WWL decoder, and additionally checks the Hamming distance between the marker $\bm\mu$ and each window of the same length, which also takes $O(k^d\ell^2)$ time. This completes the proof.
\end{proof}

For $q=Q=2$, we have $r_{\mathrm{BCH}}\le td\log_2 k+O(1)$, and hence
\[
    \rho\bigl(\cB^{\mathrm{row,exp}}_{d,k,t}\bigr)
    \le(t+1)d\log_2 k+(2t+1)\log_2\log_2 k+O(1).
\]
By Corollary~\ref{cor:redundancy-noisy-CPC}, this matches the optimal leading term.

\begin{remark}
    For the code construction, one may also take the anchor row from a $t$-substitution-correcting one-dimensional CPC, require every non-anchor row to have distance at least $2t+1$ from the translation closure of the anchor code, and require the concatenation of the non-anchor rows to belong to a code of minimum distance at least $2t+1$. The anchor row then determines the translation, while the code on the remaining rows corrects the errors. When $t=0$, this reduces to the noiseless row-anchor construction of Theorem~\ref{thm:row-anchor-multi}. By comparison, the framework described in this subsection is more convenient for the explicit construction, because a single systematic error-correcting code protects the entire array and its parity symbols can be inserted without violating the non-anchor row constraint.
\end{remark}

\subsection{Robust Moment-Syndrome Based Construction}
\label{subsec:robust-moment}

We now turn to the moment-syndrome approach. In the noisy setting, errors may perturb both the moments and the Hamming weight of the received array. We therefore first correct the errors using a translation-invariant inner code, and then recover the translation from the moments of the corrected array.

\begin{definition}
\label{def:translation-invariant-code}
A code $\cD\subseteq\Sigma_q^{\Z_k^d}$ is \emph{translation invariant} if $T_{\bm a}\bm X\in\cD$ for every $\bm X\in\cD$ and every $\bm a\in\Z_k^d$.
\end{definition}

Translation invariance ensures that after an unknown translation, the transmitted array remains in the same inner code.
For a translation-invariant code
$\cD\subseteq\Sigma_q^{\Z_k^d}$,
let $A_\ell(\cD)$ be the number of codewords in $\cD$ of Hamming weight
$\ell$. For a fixed syndrome
$\bm\sigma=(\sigma_1,\ldots,\sigma_d)$, define
\[
\cB^{\mathrm{ms}}_{d,k,t}(\bm\sigma;\cD)
=
\left\{
\bm X\in\cD:
\gcd\bigl(\wt(\bm X),k\bigr)=1,\;
S_j(\bm X)=\sigma_j
\text{ for }1\le j\le d
\right\}.
\]

\begin{theorem}
\label{thm:moment-error-correcting}
Let $\cD\subseteq\Sigma_q^{\Z_k^d}$ be a translation-invariant code with
minimum Hamming distance at least $2t+1$. For every
$\bm\sigma\in\Z_k^d$,
$\cB^{\mathrm{ms}}_{d,k,t}(\bm\sigma;\cD)$ is a
$t$-substitution-correcting $d$D-CPC of size
\begin{equation}
\left|
\cB^{\mathrm{ms}}_{d,k,t}(\bm\sigma;\cD)
\right|
=
\frac{1}{k^d}
\sum_{\substack{0\le \ell\le k^d\\
\gcd(\ell,k)=1}}
A_\ell(\cD).
\label{eq:robust-moment-size}
\end{equation}
\end{theorem}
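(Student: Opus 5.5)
The proof combines the decoder argument of Theorem~\ref{thm:robust-marker} with the orbit-counting argument of Theorem~\ref{thm:moment}. The plan has two parts: correctness of a two-stage decoder, and an exact count of codewords via necklaces inside $\cD$.

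\emph{Decoder.} Suppose $\bm Z$ satisfies $d_H(\bm Z,T_{\bm a}\bm X)\le t$ for some $\bm X\in\cB^{\mathrm{ms}}_{d,k,t}(\bm\sigma;\cD)$ and $\bm a\in\Z_k^d$. Since $\cD$ is translation invariant, $T_{\bm a}\bm X\in\cD$. Since $d_H(\cD)\ge 2t+1$, the radius-$t$ balls around codewords of $\cD$ are disjoint, so $\bm Z$ determines $\bm Y=T_{\bm a}\bm X$ uniquely by bounded-distance decoding in $\cD$. This is the step where translation invariance is essential: we correct errors before knowing the shift. Next we apply the noiseless argument of Theorem~\ref{thm:moment} to $\bm Y$. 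We have $\wt(\bm Y)=\wt(\bm X)$, which is invertible modulo $k$, and $S_j(\bm Y)\equiv\sigma_j-a_j\wt(\bm X)\pmod k$. These congruences determine each $a_j$ uniquely. Then $\bm X=T_{-\bm a}\bm Y$, so $\bm Z$ determines the ordered pair $(\bm X,\bm a)$ uniquely.

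To be careful about uniqueness: suppose two pairs $(\bm X,\bm a)$ and $(\bm X',\bm a')$ are both consistent with $\bm Z$. Then $T_{\bm a}\bm X$ and $T_{\bm a'}\bm X'$ both lie in $\cD$ within distance $t$ of $\bm Z$, so they are equal. The moment equations for this common array then force $\bm a=\bm a'$, and hence $\bm X=\bm X'$.

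\emph{Size.} Let $\cS=\{\bm X\in\cD:\gcd(\wt(\bm X),k)=1\}$. Because $\cD$ is translation invariant and translation preserves weight, $\cS$ is translation invariant, and $|\cS|=\sum_{\gcd(\ell,k)=1}A_\ell(\cD)$. Exactly as in the proof of Theorem~\ref{thm:moment}, fix $\bm X\in\cS$ and consider the map
\[
\bm a\mapsto\bigl(S_1(T_{\bm a}\bm X),\ldots,S_d(T_{\bm a}\bm X)\bigr).
\]
Coordinatewise it is $a_j\mapsto S_j(\bm X)-a_j\wt(\bm X)$, which is a bijection of $\Z_k$ because $\wt(\bm X)$ is a unit. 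So the map is a bijection $\Z_k^d\to\Z_k^d$. Consequently the $k^d$ translates of $\bm X$ are pairwise distinct, since they have distinct syndromes. Each such necklace lies inside $\cS$, has size exactly $k^d$, and contains exactly one array with syndrome $\bm\sigma$. The necklaces partition $\cS$, so
\[
\bigl|\cB^{\mathrm{ms}}_{d,k,t}(\bm\sigma;\cD)\bigr|=\frac{|\cS|}{k^d},
\]
which is \eqref{eq:robust-moment-size}.

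\emph{Main obstacle.} The only subtle point is ordering the two decoding stages correctly. Errors can alter both the weight and the moments of the received array, so the shift cannot be read off $\bm Z$ directly. Translation invariance of $\cD$ is exactly what lets us decode in $\cD$ first, without knowing the shift. After that, the argument reduces to the noiseless Theorem~\ref{thm:moment}.
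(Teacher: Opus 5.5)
Your proposal is correct and follows essentially the same route as the paper. You first decode in the translation-invariant code $\cD$, then recover the shift from the congruences $S_j(\bm Y)\equiv\sigma_j-a_j\wt(\bm X)\pmod k$ with $\wt(\bm X)$ a unit, and finally count by noting that each necklace in $\{\bm X\in\cD:\gcd(\wt(\bm X),k)=1\}$ has size $k^d$ and contains exactly one array with syndrome $\bm\sigma$. Your explicit check of uniqueness for two consistent pairs, and your remark that distinct syndromes force the $k^d$ translates to be distinct, spell out two points the paper leaves implicit.
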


\begin{proof}
We establish the CPC property by providing a decoder.
Suppose $\bm{Z}$ is received and $d_H(\bm Z,T_{\bm a}\bm X)\le t$ for some
$\bm X\in\cB^{\mathrm{ms}}_{d,k,t}(\bm\sigma;\cD)$
and
$\bm a=(a_1,\ldots,a_d)\in\Z_k^d$.
Set $\bm Y=T_{\bm a}\bm X$.
Since $\cD$ is translation invariant and $d_H(\cD)\geq 2t+1$, $\bm Y\in\cD$ is the unique codeword within distance $t$ of $\bm Z$. Thus, it can be recovered by a bounded-distance decoder for $\cD$.
It remains to recover the translation. Translation preserves Hamming weight and changes the moments as
\[
S_j(\bm Y)
\equiv
S_j(\bm X)-a_j\wt(\bm X)
\pmod{k},
\qquad
1\le j\le d.
\]
Since the weight is invertible modulo $k$, the inverse translation is determined coordinatewise by
\begin{equation}
-a_j
\equiv
\bigl(S_j(\bm Y)-\sigma_j\bigr)
\wt(\bm Y)^{-1}
\pmod{k},
\qquad
1\le j\le d.
\label{eq:robust-moment-translation}
\end{equation}
Applying $T_{-\bm a}$ to $\bm Y$ recovers $\bm X$.
Thus $\bm Z$ determines both the codeword and the translation, so
$\cB^{\mathrm{ms}}_{d,k,t}(\bm\sigma;\cD)$ is a
$t$-substitution-correcting $d$D-CPC.

For the code size, define
\[
\cD^\ast
=
\left\{
\bm X\in\cD:
\gcd\bigl(\wt(\bm X),k\bigr)=1
\right\}.
\]
Both $\cD$ and Hamming weight are translation invariant, so $\cD^\ast$
is also translation invariant.
For $\bm X\in\cD^\ast$,
\[
S_j(T_{\bm a}\bm X)
\equiv
S_j(\bm X)-a_j\wt(\bm X)
\pmod{k},
\qquad
1\le j\le d.
\]
Since $\wt(\bm X)$ is invertible modulo $k$, the map
\[
\bm a
\longmapsto
\bigl(
S_1(T_{\bm a}\bm X),\ldots,
S_d(T_{\bm a}\bm X)
\bigr)
\]
is a bijection from $\Z_k^d$ onto $\Z_k^d$.
In particular, for every
$\bm\sigma\in\Z_k^d$, there is a unique
$\bm a\in\Z_k^d$ with
\[
S_j(T_{\bm a}\bm X)=\sigma_j,
\qquad
1\le j\le d.
\]
Thus each translation necklace within $\cD^\ast$ has size
$k^d$ and contains exactly one array with moment syndrome $\bm\sigma$.
It follows that
\[
\left|
\cB^{\mathrm{ms}}_{d,k,t}(\bm\sigma;\cD)
\right|
=
\frac{|\cD^\ast|}{k^d}
=\frac{1}{k^d}
\sum_{\substack{0\le \ell\le k^d\\
\gcd(\ell,k)=1}}
A_\ell(\cD).
\]
This completes the proof.
\end{proof}

\begin{remark}
Determining the size of the code in Theorem~\ref{thm:moment-error-correcting} requires the weight distribution $A_\ell(\cD)$, which is generally difficult to obtain for arbitrary translation-invariant codes. For linear translation-invariant codes, i.e., multidimensional cyclic codes, the algebraic structure gives some control over the dimension and minimum distance (see, e.g., \cite{GuneriOzbudak2008,BernalBuenoSimon2016,BernalGuerreiroSimon2019}). However, explicit weight distributions are known only in special cases. Determining $A_\ell(\cD)$ for general multidimensional cyclic codes, and hence obtaining explicit redundancy estimates for the resulting robust moment-syndrome CPCs, remains an open problem for future work.
\end{remark}

\section{Applications to Single-Fragment Forensic Coding}
\label{sec:forensic}

In this section, we apply the CPCs developed above to single-fragment
forensic coding. By Theorem~\ref{thm:periodic-lifting}, a
$t$-substitution-correcting $d$D-CPC of side length $k$ gives a
$t$-substitution-correcting $(M,h)$ forensic code whenever $k\leq h$.
Throughout this section, $d\geq2$ and the ambient side lengths satisfy
$n_j\geq h$ for $1\leq j\leq d$ and
$\prod_{j=1}^d n_j\geq M$, so legal fragments exist.

Following \cite{LiuRavivISIT2025,LiuRaviv2025}, we normalize the rate by the guaranteed fragment volume, namely,
\[
R_{\mathrm{SF}}(\mathcal C)
=
\frac{\log_q|\mathcal C|}{M}.
\]
That is, the rate measures the amount of information per symbol guaranteed to be observed by the decoder, rather than per symbol of the ambient object.
We consider the parameter regime $h=c M^{1/d}$ with $0<c\leq 1$ and take $k=h$ when this is an integer. When rounding is necessary, taking $k=\lfloor h\rfloor$ alters the rates by $O(M^{-1/d})$. For a side-$k$ $d$D-CPC of redundancy $\rho$, the corresponding $(M,h)$-single-fragment forensic code has rate
\begin{equation}\label{eq:rate}
R_{\mathrm{SF}}
=
\frac{k^d-\rho}{M}
=
c^d-\frac{\rho}{M}.
\end{equation}

\subsection{Noiseless Single-Fragment Forensic Coding}

We now apply the noiseless CPC constructions developed in Section~\ref{sec:noiseless_multi} to single-fragment forensic coding. The following theorem summarizes the resulting parameters.

\begin{theorem}
Fix $d\geq2$, $q\geq2$, and $0<c\leq1$. Let
$k=h=cM^{1/d}$ be a sufficiently large integer. Periodic lifting gives
the following noiseless $(M,h)$ single-fragment forensic codes over $\Sigma_q^{[n_1]\times\cdots\times[n_d]}$.
\begin{enumerate}
\item The explicit marker-hyperplane construction has rate
$c^d-O(M^{-1/d})$, with encoding/decoding time $O(M+\prod_{i=1}^d n_i)$.

\item The explicit row-anchor construction has rate
$c^d-\big(\log_q M+O(1)\big)/M$, with encoding/decoding time $O(M\log^2 M+\prod_{i=1}^d n_i)$.

\item For prime $k$, the moment-syndrome construction has rate
$c^d-\big(\log_q M+O(1)\big)/M$ and admits a direct synchronization decoder.
\end{enumerate}
\end{theorem}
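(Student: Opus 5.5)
The plan is to apply Theorem~\ref{thm:periodic-lifting} with $t=0$ to each base code, and then read off the rate from \eqref{eq:rate}. The hypotheses hold because $k=h$ and $n_j\ge h=k$ for every $j$. The lifted code $\Per_{\bm n}(\cC)$ is therefore a noiseless $(M,h)$ single-fragment forensic code with $|\Per_{\bm n}(\cC)|=|\cC|$. Since $k^d=c^dM$, its rate is $c^d-\rho(\cC)/M$. Each item then reduces to bounding $\rho(\cC)/M$ and analysing the encoding and decoding cost.

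\emph{Marker-hyperplane item.} Use the systematic diagonal encoder for $\cB^{\mathrm{mk}}_{d,k}$, which has redundancy $\rho=k^d-(k-1)^d+k-1=O(k^{d-1})$. Hence $\rho/M=O(k^{d-1}/M)=O(c^{d-1}M^{(d-1)/d}/M)=O(M^{-1/d})$.

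\emph{Row-anchor item.} Use the explicit subcode with redundancy $\lceil\log_q(k^d-k)\rceil+5\le d\log_q k+O(1)$. Because $d\log_q k=\log_q(k^d)=\log_q(c^dM)=\log_q M+d\log_q c$ and $c$ is fixed, this equals $\log_q M+O(1)$.

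\emph{Moment-syndrome item.} Lemma~\ref{lem:MSC-d} gives $\rho=d\log_q k+o(1)$ for prime $k$, which again equals $\log_q M+O(1)$. The decoding claim follows from the direct decoder in Theorem~\ref{thm:moment}.

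\emph{Complexity.} The encoder produces the base array in the stated time, $O(k^d)=O(M)$ for the marker construction and $O(k^d\log^2 k)=O(M\log^2M)$ for the row-anchor construction. Writing out the periodic extension then costs $O(\prod_i n_i)$. The decoder selects any $k\times\cdots\times k$ subwindow of the guaranteed box and runs the CPC decoder on it, at cost $O(M)$ or $O(M\log^2 M)$. Locating the guaranteed box and reproducing the lifted codeword, if required, are bounded by the ambient size. This yields the stated $O(M+\prod n_i)$ and $O(M\log^2M+\prod n_i)$ bounds.

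The main point requiring care is bookkeeping rather than mathematics. Item~(2) needs $f+3\le k$, and item~(3) needs $k$ prime. Both are covered by the hypothesis that $k$ is a sufficiently large integer, together with the prime assumption in item~(3).
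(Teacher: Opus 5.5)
Your proposal is correct and follows the paper's own argument: apply Theorem~\ref{thm:periodic-lifting} with $t=0$ and $k=h\le n_j$ to each explicit base CPC, then substitute its redundancy into \eqref{eq:rate} using $k^d=c^dM$. Your extra bookkeeping fills in details the paper leaves implicit. This includes the $O(k^{d-1})$ redundancy giving $O(M^{-1/d})$, the identity $d\log_q k=\log_q M+d\log_q c$, and the complexity accounting.
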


\begin{proof}
The conclusion follows by applying Theorem~\ref{thm:periodic-lifting} and substituting the redundancy of each CPC construction into Equation~\eqref{eq:rate}.
\end{proof}

\subsection{Substitution-Correcting Single-Fragment Forensic Coding}

We next turn to the noisy setting. Among the substitution-correcting CPC constructions developed in Section~\ref{sec:noisy}, the marker-diagonal and row-anchor constructions admit explicit redundancy bounds. Applying them yields the following result.

\begin{theorem}
Fix $d\geq2$, $q\geq2$, $t\geq1$, $0<c\leq1$, and a prime
power $Q\geq q$. Let $k=h=cM^{1/d}$ be a sufficiently large
integer. Periodic lifting gives the following
$t$-substitution-correcting $(M,h)$ single-fragment forensic codes over $\Sigma_q^{[n_1]\times\cdots\times[n_d]}$.
\begin{enumerate}
\item The explicit robust marker-hyperplane construction has rate $c^d-O(M^{-1/d})$, with encoding time $O(M \log M + \prod_{i=1}^d n_i)$ and decoding time $O(M \log^2 M + \prod_{i=1}^d n_i)$.

\item The explicit robust row-anchor construction has rate
$c^d-O(\log_q M)/M$, with encoding/decoding time $O(M \log^2 M + \prod_{i=1}^d n_i)$.
\end{enumerate}
\end{theorem}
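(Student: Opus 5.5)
The plan is the same as in the noiseless case. We combine Theorem~\ref{thm:periodic-lifting} with the explicit substitution-correcting CPCs of Section~\ref{sec:noisy} and substitute their redundancies into \eqref{eq:rate}. Since $k=h$, the hypothesis $k\le h$ of Theorem~\ref{thm:periodic-lifting} holds. Also $n_j\ge h=k$ for every $j$, so the lifted code $\Per_{\bm n}(\cC)$ is a $t$-substitution-correcting $(M,h)$ single-fragment forensic code with $|\Per_{\bm n}(\cC)|=|\cC|$. Its rate is therefore $c^d-\rho(\cC)/M$. It remains to bound $\rho(\cC)/M$ for each construction and to account for the running times.

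For item 1, we take $\cC$ from Corollary~\ref{cor:marker}, which has redundancy $O(k^{d-1})$. Since $k=cM^{1/d}$ with $c$ fixed, we have $k^{d-1}=O(M^{(d-1)/d})$, so $\rho/M=O(M^{-1/d})$. For item 2, we take $\cC=\cB^{\mathrm{row,exp}}_{d,k,t}$ from Theorem~\ref{thm:explicit-robust-row-anchor}. Its redundancy is at most $d\log_q k+D\log_q\log_q k+r_{\mathrm{BCH}}+O(1)$, and $r_{\mathrm{BCH}}=O(\log_Q k^d)=O(\log_q k)$ for fixed $q,Q,t$. Because $d\log_q k=\log_q M+d\log_q c=O(\log_q M)$, we get $\rho/M=O(\log_q M)/M$. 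Both constructions require $k$ sufficiently large, and this is guaranteed by the hypothesis on $k$.

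For the complexity bounds, encoding consists of running the CPC encoder on a $k^d=c^dM\le M$ array and then writing the periodic extension, which costs $O(\prod_i n_i)$. Corollary~\ref{cor:marker} gives $O(k^d\log k)$ encoding and $O(k^d\log^2 k)$ decoding, that is, $O(M\log M)$ and $O(M\log^2 M)$, since $\log k=\Theta(\log M)$. Theorem~\ref{thm:explicit-robust-row-anchor} gives $O(k^d\log^2k)=O(M\log^2M)$ for both operations.

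Decoding takes one more step. We extract a $k\times\cdots\times k$ subwindow from the guaranteed box of the fragment, as in the proof of Theorem~\ref{thm:periodic-lifting}, run the CPC decoder on it, and then rebuild the ambient codeword by periodic extension. The extraction costs $O(k^d)$ and the rebuilding costs $O(\prod_i n_i)$. I expect the only delicate point to be locating the guaranteed box inside an arbitrary fragment, because the model only says one exists. I would handle this by taking the box as given to the decoder, as in the guaranteed-box formulation, or by noting that the decoder only needs some $k^d$ subcube, and finding one costs no more than reading the fragment, which is dominated by the stated bounds.
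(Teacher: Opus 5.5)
Your proposal is correct and follows the paper's proof, which simply invokes Theorem~\ref{thm:periodic-lifting} and substitutes the redundancies from Corollary~\ref{cor:marker} and Theorem~\ref{thm:explicit-robust-row-anchor} into \eqref{eq:rate}. Your running-time accounting fills in details the paper leaves implicit: the CPC encoder and decoder run on a base array of size $k^d\le M$, and periodic extension and window extraction cost $O(\prod_{i} n_i)$.
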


\begin{proof}
The conclusion follows by applying Theorem~\ref{thm:periodic-lifting} and substituting the redundancy of each CPC construction into Equation~\eqref{eq:rate}.
\end{proof}

\section{Conclusion and Open Problems}\label{sec:conclusion}

We studied multidimensional CPCs and used them as base codes for single-fragment forensic coding. Periodic lifting separates the problem into two parts, where repetition spreads information through the object and the CPC recovers an unknown cyclic translation of one complete period.

We established the optimal redundancy $d\log_q k+o(1)$ for noiseless $d$-dimensional CPCs, and developed three constructions. The marker-hyperplane family has redundancy $k^d-(k-1)^d+o(1)$, with a linear-time explicit subcode of redundancy $k^d-(k-1)^d+k-1$. The row-anchor family has redundancy $d\log_q k+\log_q e+o(1)$, while its explicit subcode has redundancy
$\lceil\log_q(k^d-k)\rceil+5$ and encoding and decoding time
$O(k^d\log^2 k)$. For prime $k$, the moment-syndrome construction has redundancy $d\log_q k+o(1)$, and admits a direct synchronization decoder. In the noisy setting, the explicit robust row-anchor code
achieves $(t+1)d\log_2 k+O(\log\log k)$ redundancy for $q=2$, matching the optimal leading term. For general alphabets, the proved bounds leave a gap between the coefficients $t+1$ and $2t+1$.
Periodic lifting gives asymptotic forensic rate $c^d$ in the thick-fragment
regime, with rate tending to one as $c\to1$.

Several questions remain open.

\begin{itemize}
\item \emph{Efficient encoding for the moment-syndrome construction.}
The moment-syndrome construction has nearly-optimal redundancy and an explicit decoder, but lacks an efficient encoder. Developing such an encoder is left for future work.

\item \emph{Weight distribution of translation-invariant codes.}
Evaluating the robust moment-syndrome construction in Theorem~\ref{thm:moment-error-correcting} requires the weight distribution of a translation-invariant code. Determining this distribution is left for future work.

\item \emph{Intermediate fragment thickness.}
The periodic lifting approach is most effective when $h=\Theta(M^{1/d})$, whereas the discrepancy-based constructions of \cite{LiuRaviv2025} handle thinner fragments but offer weaker rate guarantees. Hybrid constructions bridging these two regimes remain for future work.

\item \emph{Weakly cyclically permutable codes.}
The forensic problem does not require translation recovery. Relaxing the atranslational requirement may lead to larger codes or simpler constructions. A systematic study of weakly cyclically permutable codes is left for future work.
\end{itemize}

\appendices

\section{Proof of Lemma~\ref{lem:general-q-bch}}\label{appendixA}
Set $L=\lceil\log_q Q\rceil$ and $m=\lceil\log_Q(n+1)\rceil$. Consider a primitive narrow-sense BCH code (see \cite[Section~5.1]{HuffmanPless2003}) over $\F_Q$ of length $Q^m-1$ and minimum distance at least $2t+1$. Let $r$ denote its redundancy. Then
\[
r\leq
\left(2t-\left\lfloor\frac{2t}{Q}\right\rfloor\right)m.
\]
Since $r=O(\log n)$, we have $n-Lr\geq1$ for all sufficiently large $n$.

Put the BCH code in systematic form and shorten it in information positions until its dimension is $n-Lr$. The original dimension is $Q^m-1-r$, so the number of shortened positions is $Q^m-1-n+(L-1)r$. Hence the shortened code has dimension $n-Lr$, length $n-(L-1)r$, and redundancy $r$. Shortening does not decrease the minimum distance, so the minimum distance remains at least $2t+1$.

Since $Q\ge q$ and $q^L\ge Q$, we can fix injections $\iota:\Sigma_q\to\mathbb{F}_Q$ and $\phi:\mathbb{F}_Q\to\Sigma_q^L$, which extend coordinatewise to vectors. To encode a message $\boldsymbol{x}\in\Sigma_q^{n-Lr}$, first apply $\iota$ to obtain $\iota(\boldsymbol{x})\in\mathbb{F}_Q^{n-Lr}$, then encode $\iota(\boldsymbol{x})$ with the shortened systematic BCH code. This produces a codeword $\iota(\boldsymbol{x})\circ \gamma(\boldsymbol{x}) \in\mathbb{F}_Q^{n-(L-1)r}$, where $\gamma(\boldsymbol{x})\in\mathbb{F}_Q^r$ is the parity vector. The final $q$-ary codeword is the concatenation $\boldsymbol{x}\circ\phi\big(\gamma(\boldsymbol{x})\big)$. Its length is $(n-Lr)+Lr=n$, and the first $n-Lr$ symbols are unchanged from the original message $\boldsymbol{x}$. Thus the resulting $q$-ary code is systematic and has redundancy $Lr$.
For two distinct messages $\boldsymbol{x},\boldsymbol{x}'\in\Sigma_q^{n-Lr}$, we have
\begin{align*}
    d_H\big(\boldsymbol{x}\circ\phi(\gamma(\boldsymbol{x})), \boldsymbol{x}'\circ\phi(\gamma(\boldsymbol{x}'))\big)
    &= d_H(\boldsymbol{x}, \boldsymbol{x}')+d_H\big(\phi(\gamma(\boldsymbol{x})), \phi(\gamma(\boldsymbol{x}'))\big)\\
    &\geq d_H\big(\iota(\boldsymbol{x}), \iota(\boldsymbol{x}')\big)+d_H\big(\gamma(\boldsymbol{x}), \gamma(\boldsymbol{x}')\big)\\
    &= d_H\big(\iota(\boldsymbol{x}) \circ \gamma(\boldsymbol{x}), \iota(\boldsymbol{x}')\circ \gamma(\boldsymbol{x}')\big).
\end{align*}
Therefore, the minimum distance is at least $2t+1$.

For decoding, apply $\iota$ to the first $n-Lr$ received symbols. Then split the remaining redundancy symbols into $r$ blocks of length $L$. If a block belongs to $\phi(\F_Q)$, map it back with $\phi^{-1}$; otherwise, map it to any fixed element of $\F_Q$. A substitution in the $q$-ary word affects at most one symbol of the resulting $\F_Q$ word. Therefore, at most $t$ substitutions produce at most $t$ errors in the shortened BCH codeword. Its bounded-distance decoder recovers the transmitted BCH codeword and hence the original $q$-ary message.

Finally, the redundancy bound follows from the bound on $r$ above. For fixed $q$, $Q$, and $t$, we have $r=O(\log n)$. Systematic BCH encoding therefore takes $O(n\log n)$ operations, including the alphabet conversion. Standard bounded-distance BCH decoding takes $O(n\log^2 n)$ operations over $\F_Q$ \cite[Section~5.4]{HuffmanPless2003}. This completes the proof.

\section{Proof of Lemma~\ref{lem:row-cyclic-wwl}}\label{appendixB}
Let $\kappa\mid n$. We view a word $\bm y\in\Sigma_q^n$ as $n/\kappa$ rows of length $\kappa$. For $i=r\kappa+c$ with $0\le c<\kappa$, define the row-cyclic window
\[
\operatorname{rwin}^{(\ell)}_i(\bm y)
=\bigl(y_{r\kappa+((c+j)\bmod\kappa)}\bigr)_{j=0}^{\ell-1}.
\]
Set $B=V_q(\ell,\Delta-1)$ and order the forbidden words lexicographically. Let $\operatorname{rank}(\bm v)\in\{0,\ldots,B-1\}$ denote the index of $\bm v$, with inverse $\operatorname{unrank}$. For fixed $q$ and $\Delta$, both operations run in $O(\ell)$ time via enumerative coding \cite{Cover1973}.
Write $\operatorname{Rep}_{q,m}(s)$ for the length-$m$ base-$q$ representation of $s$. Define the pointer
\[
\operatorname{Ptr}(i,\bm v)=
\operatorname{Rep}_{q,\ell-\Delta}
\bigl(iB+\operatorname{rank}(\bm v)\bigr)\circ1^\Delta.
\]
Condition~\eqref{eq:WWL-pointer-capacity} ensures that this map is well defined and injective. Every pointer has weight at least $\Delta$.

\begin{algorithm}[t]
\caption{Row-cyclic WWL encoder
$\mathrm{ENC}^{\mathrm{WWL}}_{n,\kappa,\Delta}[\ell]$}
\label{alg:row-cyclic-wwl-encoder}
\begin{algorithmic}[1]
\Require $\bm x\in\Sigma_q^{n-1}$, $\kappa\mid n$, $\Delta\leq\ell$, $2\ell\leq\kappa$, and
\eqref{eq:WWL-pointer-capacity}
\Ensure A row-cyclic $(\ell,\Delta)$-WWL word $\bm y\in\Sigma_q^{(n/\kappa) \times \kappa}$
\State $\bm y\gets\bm x\circ0$, $i\gets0$
\While{$i<n$}
    \State Find the smallest $i'\in\{i,\ldots,n-1\}$ with
    $\wt(\operatorname{rwin}^{(\ell)}_{i'}(\bm y))<\Delta$
    \If{no such $i'$ exists}
        \State \Return $\bm y$
    \EndIf
    \State $\bm v\gets\operatorname{rwin}^{(\ell)}_{i'}(\bm y)$
    \State Write $i'=r\kappa+c$
    \If{$c\leq\kappa-\ell$}
        \State Delete $\bm y[i',i'+\ell-1]$
        \State $i\gets\max\{r\kappa,i'-\ell+1\}$
    \Else
        \State $b\gets\ell-(\kappa-c)$
        \State Delete $\bm y[i',r\kappa+\kappa-1]$ and
        $\bm y[r\kappa,r\kappa+b-1]$ simultaneously
        \State $i\gets r\kappa+\kappa-2\ell+1$
    \EndIf
    \State Append $\operatorname{Ptr}(i',\bm v)$ to $\bm y$
\EndWhile
\State \Return $\bm y$ and view it as an array of size $(n/\kappa)\times \kappa$
\end{algorithmic}
\end{algorithm}

\begin{algorithm}[t]
\caption{Row-cyclic WWL decoder
$\mathrm{DEC}^{\mathrm{WWL}}_{n,\kappa,\Delta}[\ell]$}
\label{alg:row-cyclic-wwl-decoder}
\begin{algorithmic}[1]
\Require $\bm y$ in the image of
Algorithm~\ref{alg:row-cyclic-wwl-encoder}, with the same parameters
\Ensure The original word $\bm x\in\Sigma_q^{n-1}$
\While{the last $\Delta$ symbols of $\bm y$ equal $1^\Delta$}
    \State Read and delete the last $\ell$ symbols as $\bm p$
    \State Interpret the first $\ell-\Delta$ symbols of $\bm p$
    as a base-$q$ integer $s$
    \State $i'\gets\lfloor s/B\rfloor$, $\alpha\gets s\bmod B$
    \State $\bm v\gets\operatorname{unrank}(\alpha)$
    \State Write $i'=r\kappa+c$
    \If{$c\leq\kappa-\ell$}
        \State Insert $\bm v$ at position $i'$
    \Else
        \State $a\gets\kappa-c$
        \State Insert $(v_a,\ldots,v_{\ell-1})$ at position $r\kappa$
        \State Insert $(v_0,\ldots,v_{a-1})$ at position $i'$
    \EndIf
\EndWhile
\State Delete the final zero
\State \Return the remaining word
\end{algorithmic}
\end{algorithm}

Lemma~\ref{lem:row-cyclic-wwl} follows directly from the following conclusion.
\begin{lemma}
Under the stated conditions, Algorithm~\ref{alg:row-cyclic-wwl-encoder}
is an injective map from $\Sigma_q^{n-1}$ to the row-cyclic
$(\ell,\Delta)$-WWL words in $\Sigma_q^n$.
Algorithm~\ref{alg:row-cyclic-wwl-decoder} is its inverse on its image.
For fixed $q,\Delta$, encoding takes $O(n\ell^2)$ word operations
and decoding takes $O(n\ell)$ word operations in the implementation
specified below.
\end{lemma}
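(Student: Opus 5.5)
The proof will mirror Theorem~\ref{thm:row-cyclic-rll}, with three ingredients: a scanning invariant for the encoder, a termination argument, and a step-by-step inversion argument for the decoder.

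\emph{Well-definedness of the pointer.} First check that $\operatorname{Ptr}$ is well defined and injective. Since $i<n$ and $\operatorname{rank}(\bm v)<B$, we have $iB+\operatorname{rank}(\bm v)<nB\le q^{\ell-\Delta}$ by \eqref{eq:WWL-pointer-capacity}. Hence the value fits in $\ell-\Delta$ base-$q$ digits, and $(i,\bm v)$ is recoverable via $\lfloor s/B\rfloor$ and $s\bmod B$. The pointer has length exactly $\ell$, so every replacement deletes $\ell$ symbols and appends $\ell$, and the word length stays $n$. Each pointer ends in $1^\Delta$.

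\emph{Encoder: scanning invariant.} At the start of each iteration, no row-cyclic window starting before $i$ has weight $<\Delta$. Let $i'=r\kappa+c$ be the first bad window.
\begin{itemize}
\item Rows before $r$ are untouched by the deletion. They only shift if row boundaries move, but the deleted block lies within row $r$ and the tail shifts in.
\item In the noncrossing case, the symbols of row $r$ before position $c$ are unchanged. Windows starting at $r\kappa+s$ with $s\le c-\ell$ are entirely unchanged, hence still good. Scanning may therefore restart at $\max\{r\kappa,i'-\ell+1\}$.
\item Windows starting in $[i'-\ell+1,i')$ and the cyclic wrap-around windows of row $r$ may have changed. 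Those are rescanned, and the wrap-around ones near the row end are rechecked later because the scan goes to $n-1$.
\item In the crossing case, the prefix of row $r$ loses $b$ symbols and its tail is removed. The hypothesis $2\ell\le\kappa$ guarantees that restarting at $r\kappa+\kappa-2\ell+1$ still precedes every window whose contents changed. The surviving windows starting in $[r\kappa,\ r\kappa+\kappa-2\ell]$ read only symbols that kept their relative order and do not wrap.
\end{itemize}
I expect the main obstacle to be the invariant in this crossing case: one must verify that windows starting at or after $r\kappa$ but before the restart point are genuinely unchanged, or at least still good. My plan is to argue positionally, showing that their $\ell$ symbols are original symbols of row $r$ with indices in $[b,c)$ shifted uniformly by $b$, so they coincide with earlier old windows that were good. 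If the restart points turn out too aggressive, I will simply note that correctness only requires that the final output has no bad window, which holds anyway because the loop returns only when a full scan from $i$ finds none.

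\emph{Encoder: termination.} I will use the potential $\Phi=$ position of the last non-pointer symbol, or equivalently the number of appended pointers occupying the suffix. Each replacement deletes a length-$\ell$ window and appends a pointer. Deleted windows always lie strictly before the pointer suffix, since pointer blocks are appended and, I will argue, their $1^\Delta$ endings together with Remark-type reasoning prevent them from being chosen. The suffix therefore grows by $\ell$ each time, giving at most $n/\ell$ replacements. If this argument for pointer windows is hard, I will fall back to bounding by $n$ via the strict increase of a lexicographic weight measure, as in Theorem~\ref{thm:row-cyclic-rll}. The stopping rule gives the row-cyclic $(\ell,\Delta)$-WWL property.

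\emph{Decoder and complexity.} The final symbol is $0$ before any replacement and ends in $1^\Delta$ after each replacement, so the loop condition detects exactly whether a pointer remains. Decoding pops the last pointer, recovers $(i',\bm v)$, and reinserts $\bm v$ at the original positions, splitting it across the row boundary as in Algorithm~\ref{alg:row-cyclic-wwl-decoder} in the crossing case. This exactly inverts the last encoder step, and induction gives inverse-on-image, hence injectivity. For complexity, there are $O(n/\ell)$ replacements, each followed by a rescan of $O(\ell)$ extra start positions, and checking a window costs $O(\ell)$ incrementally. Rank and unrank cost $O(\ell)$. This gives $O(n\ell^2)$ for encoding and $O(n\ell)$ for decoding.
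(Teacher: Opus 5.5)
Your pointer check, your scanning invariant (in the crossing case the new row-$r$ window at offset $s\le\kappa-2\ell$ equals the old window at offset $s+b\le c-\ell<c$, as you planned) and your last-in-first-out decoder argument all coincide with the paper's proof. The genuine gap is your primary termination argument. The claim that deleted windows always lie strictly before the pointer suffix is false. Pointers can contain long zero runs, since $\operatorname{Rep}_{q,\ell-\Delta}$ of a small integer has many leading zeros. A row-cyclic window that straddles the boundary between the remaining message symbols and the pointer suffix, or wraps around a row containing pointer symbols, can therefore have weight below $\Delta$.

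Concretely, take $q=2$ and $\Delta=1$ (so $B=1$), with $\bm x=0^\ell1^{n-\ell-1}$. The first replacement deletes $0^\ell$ at $i'=0$ and appends $\operatorname{Ptr}(0,0^\ell)=0^{\ell-1}1$, producing $1^{n-\ell-1}0^\ell1$. The scan restarts at $0$. The first forbidden window is then the all-zero one at $i'=n-\ell-1$. It is noncrossing, at offset $\kappa-\ell-1$ in the last row, and consists of the trailing zero of $\bm x\circ0$ followed by $\ell-1$ symbols of the first pointer. So the second replacement destroys most of that pointer. Hence the pointer suffix does not grow by $\ell$ per step, and your bound of $O(n/\ell)$ replacements is unjustified. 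Your complexity paragraph relies on that bound.

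Your fallback repairs this, and it is exactly the paper's argument once you state it with the plain Hamming weight, as in Theorem~\ref{thm:row-cyclic-rll}; no lexicographic refinement is needed. Each replacement removes a window of weight at most $\Delta-1$ and appends a pointer of weight at least $\Delta$. The weight therefore increases by at least one, and since it is at most $n$, there are $R\le n$ replacements.

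You must then redo the complexity count with $R\le n$. Each restart lies at most $2\ell-2$ positions before the selected $i'$, so the encoder examines $O(n+R\ell)=O(n\ell)$ starting positions at $O(\ell)$ each, giving $O(n\ell^2)$. The decoder performs $R$ inversions at $O(\ell)$ each, giving $O(n\ell)$. These are the bounds you stated, so your conclusion survives.

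Note also that decoder correctness never needed earlier pointers to stay intact. Each decoding step restores exactly the word that preceded the most recent replacement, so deletions that cut into earlier pointers are harmless.
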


\begin{proof}
We first justify the update of the scanning index. At the beginning of
each iteration, no forbidden row-cyclic window starts before $i$.
Suppose that the first forbidden window found starts at $i'=r\kappa+c$, rows preceding row $r$ remain unchanged after the replacement.
\begin{itemize}
    \item If $c\le\kappa-\ell$, a window in row $r$ starting before $\max\{0,c-\ell+1\}$ ends before the deleted block. It is therefore unaffected by the deletion and is not forbidden. Thus the scan can restart from $\max\{r\kappa,i'-\ell+1\}$.
    \item If $c>\kappa-\ell$, set $b=\ell-(\kappa-c)$. After deleting the sufifx and prefix, the retained middle of row $r$ has length $\kappa-\ell$. For every $0\leq s<\kappa-2\ell+1$, in row $r$, the new length-$\ell$ window starting at $s$ equals the old window starting at $s+b\leq \kappa-2\ell+b=c-\ell$. By the minimum choice of $i'$, such a window is not forbidden. Thus the scan can restart from $r\kappa+\kappa-2\ell+1$.
\end{itemize}
Each replacement removes a window of weight at most $\Delta-1$ and appends a pointer of weight at least $\Delta$. Hence the total Hamming weight increases by at least one. Since the word length remains $n$, at most $n$ replacements can occur. Therefore the encoder terminates, and its stopping condition guarantees that no forbidden row-cyclic window remains.

We next verify the decoder. The initial word $\bm{x}\circ 0$ ends in $0$, whereas the word after each replacement ends in $1$. Thus the final $\ell$ symbols identify the pointer from the most recent replacement. For a noncrossing window, the decoder reinserts the recorded word at $i'$. For a crossing window, it first restores its last $b$ symbols at $r\kappa$ and then restores its first $a$ symbols at $i'$. Hence each decoding step reverses the most recent replacement. Repeating this procedure recovers $\bm{x}0$, after which the final zero is removed.

Finally, let $R\leq n$ denote the number of replacements. The restart is at most $2\ell-2$ positions before the selected starting position, the total scan length is $O(n+R\ell)=O(n\ell)$. Since checking each window takes $O(\ell)$ time, the total scanning time is $O(n\ell^2)$. For each replacement, encoding or decoding the position $i<n$ and the forbidden window of length $\ell$ requires $O(\ell)$ time. All replacements thus take $O(R\ell)=O(n\ell)$ time. The total encoding and decoding time is therefore $O(n\ell^2)$. This completes the proof.
\end{proof}

\bibliographystyle{IEEEtran}
\bibliography{references}
\end{document}